\let\oldaffillist\AB@affillist
\renewcommand{\AB@affillist}{\begin{flushleft}\begin{spacing}{1.5}\oldaffillist\end{spacing}\end{flushleft}}
\numberwithin{equation}{section} 
\newcommand*\diff{\mathop{}\!\mathrm{d}}
\DeclareFontFamily{U}{mathx}{}
\DeclareFontShape{U}{mathx}{m}{n}{<-> mathx10}{}
\DeclareSymbolFont{mathx}{U}{mathx}{m}{n}
\DeclareMathAccent{\widehat}{0}{mathx}{"70}
\DeclareMathAccent{\widecheck}{0}{mathx}{"71}
\theoremstyle{plain}
	\newtheorem{theorem}{Theorem}[section]
	\newtheorem{proposition}[theorem]{Proposition}    
	\newtheorem{lemma}[theorem]{Lemma}
    \newtheorem{remark}[theorem]{Remark}
	\newtheorem{corollary}[theorem]{Corollary}
\theoremstyle{definition}
  \def\mR{{\mathfrak R}} 
\def\mS{{\mathfrak S}}
 \def\NN{{\mathbb N}}  
  \def\RR{{\mathbb R}} 
\def\SS{{\mathbb S}}
\newcommand{\m}[1]{\mathbb{#1}}
\newcommand{\q}[1]{\mathcal{#1}}
\renewcommand{\le}{\leqslant}
\renewcommand{\ge}{\geqslant}
\date{\today}
\title{The Spear and the Ring: Emergent Structures in Magnetic Colloidal Suspensions}
\author[1,5]{Raphaël Côte}
\author[1]{Clémentine Courtès}
\author[2]{Guillaume Ferrière}
\author[3]{Ludovic Godard-Cadillac}
\author[4,6]{Yannick Privat}
\affil[1]{\footnotesize IRMA, Universit\'e de Strasbourg, CNRS UMR 7501, Inria, 7 rue Ren\'e Descartes, 67084 Strasbourg (France)\smallskip}
\affil[2]{\footnotesize Univ. Lille, Inria, CNRS, UMR 8524 - Laboratoire Paul Painlevé, F-59000 Lille (France)\smallskip}
\affil[3]{\footnotesize Univ. Bordeaux, Bordeaux INP, Institut de Mathématiques de Bordeaux, UMR 5251, F-33400 Talence (France)\smallskip}
\affil[4]{\footnotesize Institut Élie Cartan de Lorraine, École des Mines de Nancy, Boulevard des Aiguillettes, 54506 Vandoeuvre-lès-Nancy (France)}
\affil[5]{\footnotesize Institut d’études avancées de l’université de Strasbourg (USIAS)}
\affil[6]{\footnotesize Institut Universitaire de France (IUF)}
\begin{document}

\maketitle

\begin{abstract}
We study from a mathematical point of view the nanoparticle model of a magnetic colloid, presented 
by G. Klughertz in \cite {Klughertz_2016_these}. Our objective is to obtain properties of stable stationary structures that arise in the long-time limit for the magnetic nanoparticles dynamics following this model.

In this article, we present a detailed study of two specific structures using techniques from the calculus of variations. The first, called the spear, consists of a chain of aligned particles interacting via a Lennard-Jones potential. We establish existence and uniqueness results, derive bounds on the distances between neighboring particles, and provide a sharp asymptotic description as the number of particles tends to infinity. The second structure, the ring, features particles uniformly distributed along a circle. We prove its existence and uniqueness and derive an explicit formula for its radius.
\end{abstract}

\setcounter{tocdepth}{2}
\tableofcontents

\section*{Introduction}

Magnetic spherical nanoparticles have attracted significant interest over the past decades, both for their theoretical implications and diverse applications. Their potential uses span a wide range of fields, including imaging (microscopic imaging), high-density data storage (leveraging anisotropy properties), targeted medical therapy (precise drug delivery), ferrofluids and microfluidics, nano\-swimmers, and more.
For more details on potential applications, we refer to Klughertz's Ph.D. thesis~\cite{Klughertz_2016_these} or to related work on magnetic nanoparticle control~\cite{Cote_Courtes_Ferriere_Privat_2024, Agarwal_Carbou_Labbe_Prieur_2011}.

In this paper, we mathematically analyze the nanoparticle model introduced in this thesis, aiming to characterize the stable stationary structures predicted by the model and observed in experiments.  

This model is based on the following approximations:  
\begin{itemize}
    \item[(i)] Nanoparticles are assumed to be spherical.  
    \item[(ii)] The magnetization within each particle is considered constant and replaced by a single \textit{super-spin} at its center (a valid approximation for nanoscale metallic ferromagnetic particles).  
    \item[(iii)] The nanoparticle's spin is assumed to be fixed in its own frame, as anisotropy effects dominate over the Landau-Lifshitz-Gilbert equation.  
\end{itemize}

Under these assumptions, the evolution of nanoparticles in a fluid can be described solely by the position of their centers and the orientation of their magnetic spins. The interaction between particles is governed by the standard magnetic dipolar interaction, which generates an ambient magnetic field influencing both translation and rotation. This dipolar interaction naturally leads to spin alignment and mutual attraction between particles.  

To prevent particle overlap, an additional short-range radial repulsion is introduced. This repulsion becomes dominant when the interparticle distance falls below a specified threshold, set to the particle diameter.   

As nanoparticles evolve within a fluid, they dissipate energy through viscous interactions, leading the system to converge toward stable configurations (local minima of the potential energy).  
Numerical simulations~\cite{Klughertz_2016_these, Godard-Cadillac_Hervieux_Manfredi_2024} indicate that nanoparticles tend to self-organize into filamentary structures, where spins are tangential to the filaments, often forming triple junctions.  

When the number of particles remains relatively small, two predominant structures emerge: the \textit{spear} and the \textit{ring}.  
The spear structure consists of particles aligned along a single direction, with all spins oriented identically.  
In contrast, the ring structure features particles evenly distributed along a circle, with spins tangent to the circle and aligned in the same rotational sense.  
A transition from the spear to the ring can occur by continuously bending the nanoparticle chain until its two ends meet~\cite{Godard-Cadillac_Hervieux_Manfredi_2024}.  

The aim of this work is to provide a mathematical characterization of these two structures, the spear and the ring, within the framework of the calculus of variations, and to establish asymptotic results for a large number of particles.  
The article is structured into two main parts: the first focuses on the study of the spear, while the second examines the ring.

\section{Presentation of the problem and main results}

\subsection{A model for magnetic nano-particles colloidal suspensions}

The model proposed in Klughertz's thesis~\cite{Klughertz_2016_these} for colloidal suspensions is a simple yet rich model, which consists in magnetic spherical solids immersed in a viscous fluid. 
One magnetic nanoparticle is represented as an element 
\begin{equation} (x,m) \in \RR^3\times\SS^2. \end{equation}
The first three coordinates of \( x \in \mathbb{R}^3 \) represent the position of the nanoparticle's center, while the magnetic moment  
\( m \in \mathbb{S}^2 \)  
is normalized to norm $1$.
This model is derived under the approximation of constant magnetization within the spherical nanoparticle,  
a standard assumption in the context of nano-scale structures; see, for instance,~\cite{Allouges_Beauchard_2009}.  

We now consider a system of \( N \) nanoparticles \( (x_k, m_k) \in \mathbb{R}^3 \times \mathbb{S}^2 \) for \( k = 1, \dots, N \) and describe their interactions.  
For notational simplicity, we define  
\begin{equation}  
(X, M) := \big((x_1, m_1), \dots, (x_N, m_N)\big) \in (\mathbb{R}^3 \times \mathbb{S}^2)^{N}.  
\end{equation}

\subsubsection{Influence of the magnetic interaction}
The interaction between two spherical nanoparticles in a fluid is governed by the standard dipole-dipole magnetic interaction.  
We recall the formula for the magnetic field $\mathcal H^d$ generated by a magnetic dipole \( m \in \mathbb{S}^2 \) centered at \( (0,0,0) \) (after renormalization of physical constants):  
\begin{equation}
    \mathcal H^d (x) := \frac{3(m\cdot e_r)e_r - m}{|x|^3},
\end{equation}
where \( e_r := x/|x| \) is the unit radial vector.  

For a given nanoparticle \( (x_i, m_i) \in \mathbb{R}^3 \times \mathbb{S}^2 \), we denote by \( \mathcal  H_i \) the magnetic field it generates.  
The total magnetic field \( \mathcal H \) produced by a system of nanoparticles \( (x_k,m_k) \) for \( k=1,\dots,N \) is the sum of all individual dipolar fields \( \mathcal H_i \).  

The interaction potential, representing the potential energy associated with the interaction between two nanoparticles \( (x_k, m_k) \) and \( (x_\ell, m_\ell) \), is given by:  
\begin{equation}
    U^d_{k \ell}(X,M) := -m_\ell \cdot \mathcal H_k = -m_k \cdot \mathcal H_\ell = \frac{m_k \cdot m_\ell}{|r_{k \ell}|^3} - 3\frac{(m_k \cdot r_{k \ell})(m_\ell \cdot r_{k \ell})}{|r_{k \ell}|^5},
\end{equation}
where  
\begin{equation} 
    r_{k\ell} = x_k - x_\ell.  
\end{equation}
One can verify that this quantity is minimized, for fixed values of \( x_k \) and \( x_\ell \), when \( m_k = m_\ell \).  

To generalize this interaction and establish connections with the theory of Lennard-Jones potentials~\cite{Canizo_Ramos-Lora:Discrete}, we consider the following slightly more general form of the magnetic potential:  
\begin{equation} \label{eq:interaction potential}
    U^d_{k\ell}(X,M) := B_0 \frac{m_k \cdot m_\ell}{|r_{k \ell}|^\beta} - (B+B_0) \frac{(m_k \cdot r_{k \ell})(m_\ell \cdot r_{k \ell})}{|r_{k \ell}|^{\beta+2}},
\end{equation}
where $B_0, B, \beta > 0$ are fixed parameters. We recover the previous expression by setting $B_0 = 1$, $B = 2$, and $\beta = 3$.  

The total potential energy of the nanoparticle system, associated with the magnetic dipolar interactions, is obtained by summing all pairwise interaction potentials $U^d_{ij}$ over all pairs of nanoparticles.

\subsubsection{The repulsive potential}
If we let the system evolve using only the magnetic interactions, then we get a collapse in finite time and the quantities under investigation blow up.
Since it is not possible to have two nanoparticles at the same position in space, we have to model the collisions between particles.
One natural model is the \textit{hard-sphere model}. This consists in imposing the hard constraint 
\begin{equation}
    \forall\;k\neq \ell,\qquad |x_k-x_\ell|\geq 1.
\end{equation}
Since this constraint can be hard to work with (from a mathematical point of view) but also questionable in terms of modeling (the nanoparticles are covered by a protective sheath, and the mechanical behavior of the hard sphere can be different), we rather work with the \emph{soft sphere model}. This model consists in adding a new potential $U^s_{ij}$ of the form
\begin{equation}
    U^s_{k \ell}(X,M):=A\frac{1}{|r_{k \ell}|^\alpha},
\end{equation}
where $\alpha>0$ and $A>0$ are parameters only known empirically since they depend on the structure of the protective sheath. It is standard to chose $A=aR^\alpha$ where $R>0$ is the radius of the nanoparticle and $a>0$ is called the \textit{(empirical) repellency coefficient}. With such a choice, this repulsive term becomes dominant when $|r_{k \ell}|< R$.
To have a relevant model, $\alpha$ must be large enough to ensure that the dynamics toward a collapse are not possible. 
In particular, it must be much larger than $\beta$.

\subsubsection{Potential energy and dynamics}
In the studies~\cite{Klughertz_2016_these,Godard-Cadillac_Hervieux_Manfredi_2024}, numerical simulations were conducted to analyze the dynamics of nanoparticles \( (X,M) \).  
As the system evolves, it dissipates energy and ultimately converges toward a critical point of the total potential energy:  
\begin{equation}\label{def:U}
    U(X,M) := U^d(X,M) + U^s (X,M),
\end{equation}
where \( U^d \) represents the total potential energy due to magnetic dipolar interactions, and \( U^s \) accounts for the total repulsive potential energy:  
\begin{equation}\label{def:U d s}
    U^d:=\sum_{k=1}^N\sum_{\ell=1}^{k-1}U_{k \ell}^d, \qquad
    U^s:=\sum_{k=1}^N\sum_{\ell=1}^{k-1}U_{k \ell}^s.
\end{equation}  

We also define the potential energy associated with the \( k \)-th nanoparticle \( (x_k,m_k) \):  
\begin{equation}
    U_k:= U^d_k + U^s_k, \quad \text{with} \quad 
    U^d_k:=\sum_{\substack{\ell =1\\ \ell\neq k}}^{N}U_{k \ell}^d, \quad
    U^s_k:=\sum_{\substack{\ell=1\\ \ell \neq k}}^{N}U_{k \ell}^s.
\end{equation}  

Our goal is to investigate two specific critical points of this potential energy, which are frequently observed in simulations~\cite{Klughertz_2016_these,Godard-Cadillac_Hervieux_Manfredi_2024}:  
the {\it spear} (an aligned structure) and the {\it ring} (a circular structure).  
Although this work focuses on the study of the two steady-state structures, the {\it spear} and the {\it ring}, we present here the three equations governing the dissipative dynamics for completeness.  
For a more detailed analysis of the dynamics, we refer to~\cite{Klughertz_2016_these,Godard-Cadillac_Hervieux_Manfredi_2024}.  

Dissipation in the system arises from viscous interactions with the surrounding fluid. The friction terms associated with translational and rotational motion (Stokes drag coefficients) are given by:  
\begin{equation}\label{def:friction terms}
    \zeta_{tr}=6\pi\nu R,\qquad\text{and}\qquad\zeta_r=8\pi\nu R^2,
\end{equation}
where $R$ is the radius of the nano-particle. The second Newton law gives:
\begin{equation}\label{eq:dynamics in translation}
\mu\frac{d^2}{dt^2} x_k = F_k -\zeta_{tr}\frac{d}{dt}x_k,\quad k=1,\dots,N
\end{equation}
where $\mu$ is the mass of the particle and where $F_k$ is the conservative contribution of the force applied on the $k^{th}$ nano-particle (computed using the gradient of the energy).
Similarly, the dynamics for the angular velocity vector $\omega_i$ is given by
\begin{equation}\label{eq:dynamics in rotation}
    I\frac{d}{dt}\omega_k=T_k-\zeta_r\omega_k, \quad k=1,\dots,N
\end{equation}
with $T_k$ the conservative torque (computed with the energy), and where $I=2\mu R^2/5$ (the moment of inertia for the ball). The dynamics of the super-spin is given by the time-scale separation hypothesis (see~\cite{Klughertz_2016_these} for a discussion on this modeling hypothesis) which leads to
\begin{equation}\label{eq:time scale separation}
\frac{d}{dt}m_k=\omega_k\times m_k.
\end{equation}

For illustration, we include Figures \ref{fig:spear} and \ref{fig:ring} below, courtesy of the authors of \cite{Godard-Cadillac_Hervieux_Manfredi_2024}, which depict the emergence of a spear and a ring structures from specific initial configurations. 
In each figure, the initial configuration is shown on the left, an intermediate state after some evolution is displayed in the center, and the final configuration after a long time is presented on the right. 
In Figure~\ref{fig:spear}, the formation of a spear structure can be observed (the final steps of convergence are very slow), while in Figure~\ref{fig:ring}, a ring structure emerges.

\begin{figure}[!ht]
    \centering
    \begin{subfigure}[b]{5.cm}
        \includegraphics[width=5cm, trim={3.5cm 1cm 1.5cm 1.5cm}, clip]{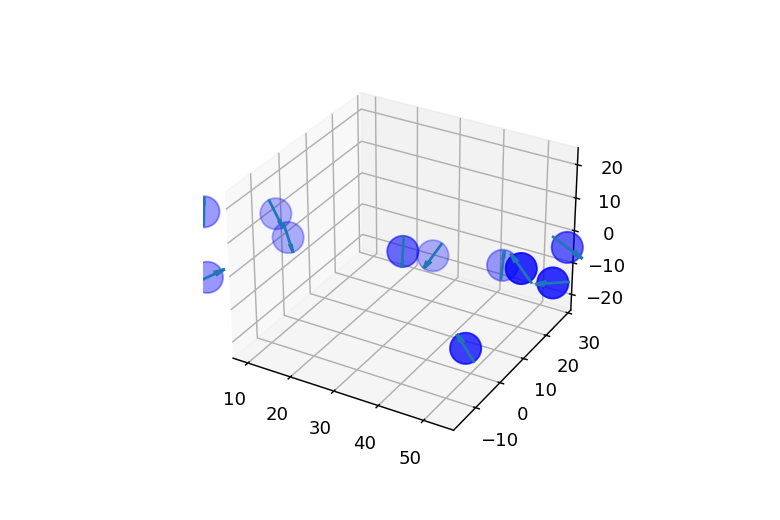}
    \end{subfigure}
    \begin{subfigure}[b]{5.cm}
        \includegraphics[width=5.cm, trim={3.5cm 1cm 1.5cm 1.5cm}, clip]{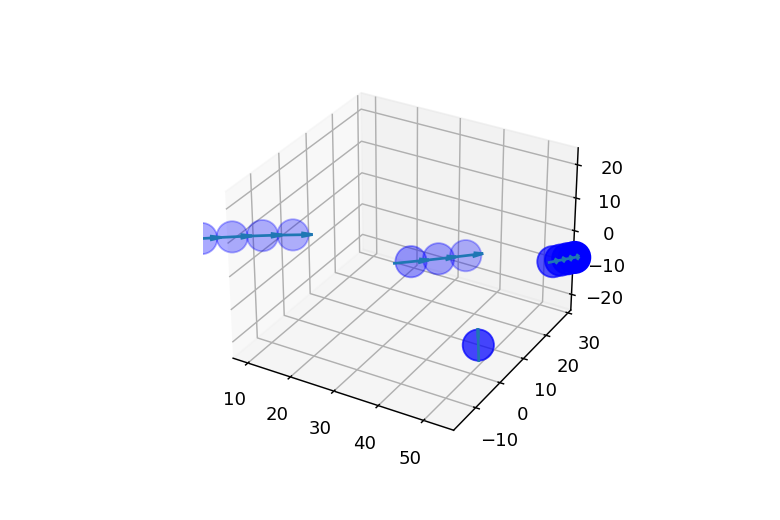}
    \end{subfigure}
    \begin{subfigure}[b]{5.cm}
        \includegraphics[width=5.cm, trim={3.5cm 1cm 1.5cm 1.5cm}, clip]{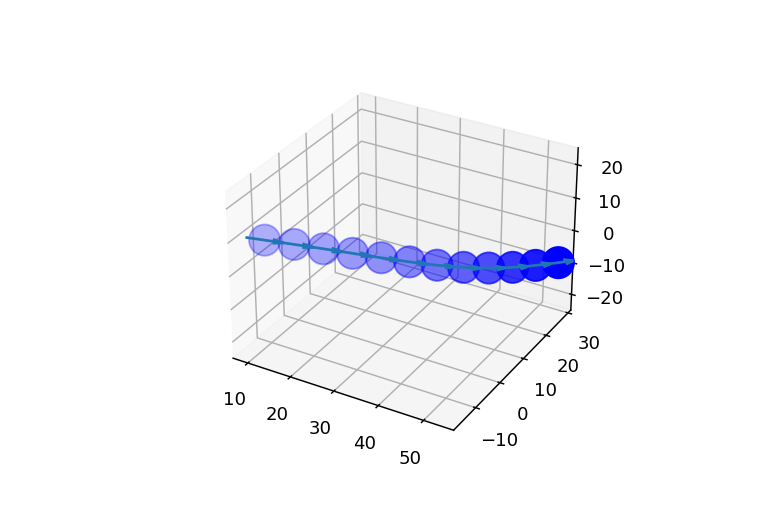}
    \end{subfigure}
    \caption{A dynamic for a system of $12$ magnetic nano-particles (emergence of spear structure): snapshots at times $t=0,\; 1,\; 2$ $\mu$s.}\label{fig:spear}
\end{figure}
\begin{figure}[!ht]
    \centering
    \begin{subfigure}[b]{5.cm}
        \includegraphics[width=5.cm, trim={3.5cm 1cm 1.5cm 1.5cm}, clip]{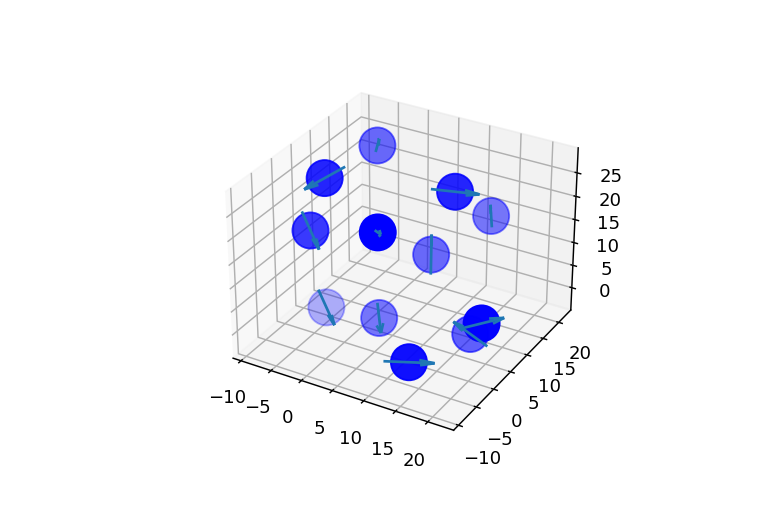}
    \end{subfigure}
    \begin{subfigure}[b]{5.cm}
        \includegraphics[width=5.cm, trim={3.5cm 1cm 1.5cm 1.5cm}, clip]{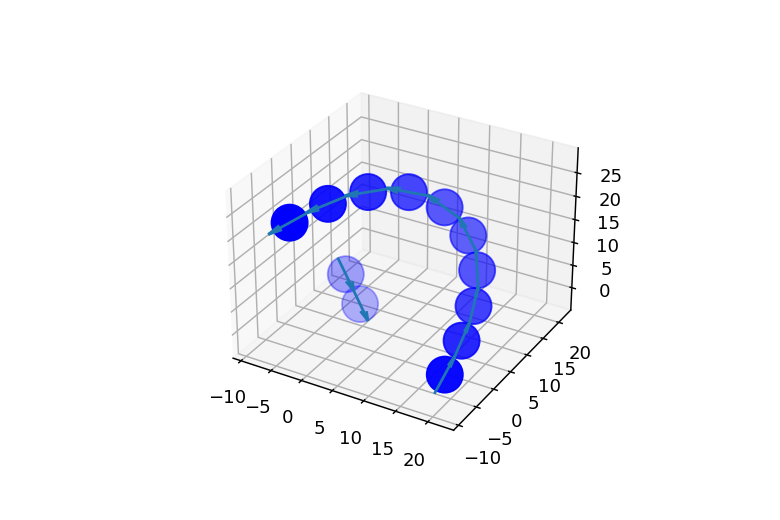}
    \end{subfigure}
    \begin{subfigure}[b]{5.cm}
        \includegraphics[width=5.cm, trim={3.5cm 1cm 1.5cm 1.5cm}, clip]{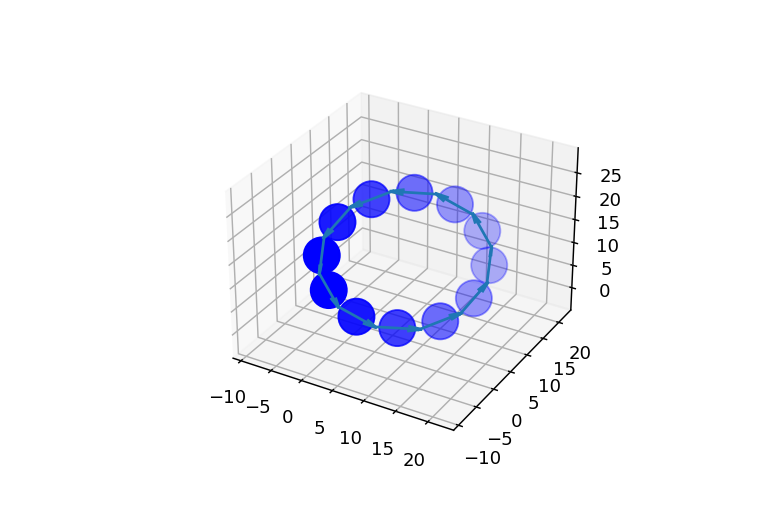}
    \end{subfigure}
    \caption{A dynamic for a system of $12$ magnetic nano-particles (emergence of ring structure): snapshops at times $t=0,\; 1,\; 2$ $\mu$s.}\label{fig:ring}
\end{figure}
\subsection{The spear: An aligned structure}
The spear is an aligned structure emerging in this colloidal system (see Figure~\ref{fig:spear}). It is characterized by:
\begin{equation}\label{def:spear}
    x_k=\big(p_k,0,\,0\big),\qquad\text{and}\qquad m_k=\big(1,0,\,0\big),
\end{equation}
for $0 \leq k \leq N - 1$ with $p_0\leq p_1\leq \dots\leq p_{N-1}$. 
Introducing \( P = (p_0, \dots, p_{N-1}) \), we denote this configuration as \( \mS_{P} \in (\mathbb{R}^3 \times \mathbb{S}^2)^N \).  
Numerical simulations suggest the existence of \( P \in \mathbb{R}^{N} \) such that the structure \( \mS_P \) is a local minimizer of the potential energy \( U \).  

Since the potential energy is translation-invariant, we assume without loss of generality that \( p_0 = 0 \).  
In this article, we prove the existence of critical points that have a spear structure $\mS_P$. We also have a uniqueness result (up to translations). Finally, we study the distances between neighboring nanoparticles for fixed values of $N$ and in the asymptotic $N\to+\infty$. Three distances of interest arise from our work which are:
\begin{equation}\label{def:the H gang}
\widecheck{h}:=\bigg(\frac{\alpha A}{\beta B\zeta(\beta)}\bigg)^\frac{1}{\alpha-\beta},\qquad\overline{h}:=\bigg(\frac{\alpha A\zeta(\alpha)}{\beta B\zeta(\beta)}\bigg)^\frac{1}{\alpha-\beta}\qquad\text{and}\qquad\widehat{h}:=\bigg(\frac{\alpha A}{\beta B}\bigg)^\frac{1}{\alpha-\beta},
\end{equation}
where $\zeta(s) = \sum_{k \ge 1} \frac{1}{k^s}$ denotes the standard zeta function (for $s >1$). Remark that
\begin{equation} \widecheck{h}\;<\;\overline{h}\;<\;\widehat{h}. \end{equation}
These constants appear naturally in the study of $U$, which in the case of the spear, writes in terms of a Lennard-Jones potential \eqref{def:J L}.

\bigskip

In this article, we prove the following theorem.

\begin{theorem}[Existence and uniqueness of a critical spear structure]\label{thrm:spear}
Consider $U$ the magnetic-repulsive energy defined at~\eqref{def:U} with $\alpha>\beta>1$.
\begin{itemize}
    \item[(i)] There exists a $P_0\in\RR^{N}$ such that the associated spear $\mS_{P_0}\in(\RR^3\times\SS^2)^N$ is a critical point of the energy $U$ in $( \RR^3 \times \SS^2 )^N$, and a minimizer of the problem
    \begin{equation}
        \inf_{P \in \RR^N} U (\mS_P).
    \end{equation}
    \item[(ii)] The distances between neighboring particles for any critical point of this form satisfy for all $k=1,\dots,N-1 $:
    \begin{equation} 
        \widecheck{h}\leq |x_{k-1}-x_{k}|\leq\widehat{h}.
    \end{equation}
    \item[(iii)] Assume now that $\beta\geq 3$. Then there exists $\alpha_\ast = \alpha_\ast(\beta)$ (independent of $N$) such that if $\alpha > \alpha_\ast$, then this $P_0\in\RR^{2N}$ is unique. Moreover, there exist positive constants $c$ and $C$ (also independent of $N$) such that for all $k=1,\dots,N$:
    \begin{equation}\label{eq:asymptotics}
        \overline{h}+\frac{c}{N^{\beta-1}}\leq|x_{k-1}- x_k|\leq \overline{h}+C\bigg(\frac{1}{N^{\beta-1}}+\frac{1}{k^{\beta-1}}+\frac{1}{(N-k)^{\beta-1}}\bigg).
    \end{equation}
    \end{itemize}
\end{theorem}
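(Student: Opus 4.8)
\emph{Setup and part (i).} The plan starts from the observation that along a spear $\mS_P$ one has $r_{k\ell}=(p_k-p_\ell,0,0)\parallel m_j=(1,0,0)$, so that $U^d_{k\ell}(\mS_P)=-B|p_k-p_\ell|^{-\beta}$ and $U^s_{k\ell}(\mS_P)=A|p_k-p_\ell|^{-\alpha}$, whence
\[U(\mS_P)=\sum_{0\le i<j\le N-1}f(p_j-p_i),\qquad f(r):=\frac{A}{r^\alpha}-\frac{B}{r^\beta}.\]
I would work with the gap vector $h=(h_1,\dots,h_{N-1})$, $h_k:=p_k-p_{k-1}>0$, so $p_j-p_i=\sum_{m=i+1}^jh_m$ and $U(\mS_P)=\Phi(h)$; set $g:=-f'$, i.e. $g(r)=\alpha Ar^{-\alpha-1}-\beta Br^{-\beta-1}$, which is $>0$ on $(0,\widehat h)$, vanishes at $\widehat h$, and is $<0$ on $(\widehat h,\infty)$. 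To prove (i) I would minimise $\Phi$ on $(0,\infty)^{N-1}$: a minimising sequence cannot have a gap tending to $0$ (then $\Phi\to+\infty$), and to rule out a gap tending to $+\infty$ I would establish the strict subadditivity $e_N<e_k+e_{N-k}$ of the minimal energies $e_N:=\inf\Phi$ (with $e_1:=0$) by gluing near-optimal $k$- and $(N-k)$-chains with one extra gap $\delta$ chosen so that $f(\delta)<0$, which makes all cross-pair terms negative and beats $e_k+e_{N-k}$; a concentration--compactness dichotomy then yields a minimiser, hence a minimising $P_0$ with positive gaps, so $\nabla_PU(\mS_{P_0})=0$. Finally $U$ is invariant under the reflections $\sigma_2,\sigma_3$ of $\RR^3$ acting jointly on positions and spins across $\{x_2=0\}$ and $\{x_3=0\}$, and $\mS_{P_0}$ is fixed by both; hence $\nabla U(\mS_{P_0})$ lies in the common $(+1)$-eigenspace of $d\sigma_2,d\sigma_3$, which contains no transverse-position and no spin directions, so $\nabla U(\mS_{P_0})=0$.

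\emph{Part (ii).} The Euler--Lagrange equations in the variables $p_k$ read $\sum_{i<k}g(p_k-p_i)=\sum_{j>k}g(p_j-p_k)$; summing these over $k=K,\dots,N-1$ yields the zero-tension relations $\mathcal T_K:=\sum_{0\le i<K\le j\le N-1}g(p_j-p_i)=0$ for $K=1,\dots,N-1$, valid at any critical spear. If $h_K>\widehat h$, then every pair in $\mathcal T_K$ has $p_j-p_i\ge p_K-p_{K-1}=h_K>\widehat h$, so every term is $<0$, contradicting $\mathcal T_K=0$; hence $h_K\le\widehat h$. For the lower bound, let $h_*=\min_kh_k$, attained at $K_0$, and use $\mathcal T_{K_0}=0$ in the form $\sum\alpha A(p_j-p_i)^{-\alpha-1}=\sum\beta B(p_j-p_i)^{-\beta-1}$: the left side is $\ge\alpha Ah_*^{-\alpha-1}$ (keep the nearest-neighbour term), while on the right $p_j-p_i\ge(j-i)h_*$ and for each $d$ there are at most $d$ straddling pairs with $j-i=d$, so the right side is $\le\beta Bh_*^{-\beta-1}\sum_{d\ge1}d\,d^{-\beta-1}=\beta B\zeta(\beta)h_*^{-\beta-1}$. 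Combining gives $h_*^{\alpha-\beta}\ge\alpha A/(\beta B\zeta(\beta))=\widecheck h^{\alpha-\beta}$, i.e. $h_*\ge\widecheck h$.

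\emph{Part (iii), uniqueness.} On the box $Q:=[\widecheck h,\widehat h]^{N-1}$, which by (ii) contains every critical spear in its interior, I would show $\Phi$ is strictly convex for $\alpha$ large. Writing $v_{ij}=\sum_{m=i+1}^je_m$, one has $\mathrm{Hess}\,\Phi(h)=H_A-H_B$ with $H_A=\alpha(\alpha+1)A\sum_{i<j}(p_j-p_i)^{-\alpha-2}v_{ij}v_{ij}^{\top}$ and $H_B=\beta(\beta+1)B\sum_{i<j}(p_j-p_i)^{-\beta-2}v_{ij}v_{ij}^{\top}$. Keeping the nearest-neighbour terms and using $h_k\le\widehat h$ gives $H_A\succeq\alpha(\alpha+1)A\widehat h^{-\alpha-2}I$, while Cauchy--Schwarz on $v_{ij}^{\top}u=\sum_{m=i+1}^ju_m$ together with the same ``at most $d$ straddling pairs'' count and $p_j-p_i\ge(j-i)\widecheck h$ gives $\|H_B\|\le\beta(\beta+1)\zeta(\beta)B\widecheck h^{-\beta-2}$, both bounds uniform in $N$. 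Since $\widehat h^{-\alpha-2}=\tfrac{\beta B}{\alpha A}\widehat h^{-\beta-2}$ and $(\widehat h/\widecheck h)^{\beta+2}=\zeta(\beta)^{(\beta+2)/(\alpha-\beta)}\to1$, the eigenvalue lower bound becomes $\beta B\widehat h^{-\beta-2}\bigl[(\alpha+1)-(\beta+1)\zeta(\beta)^{1+(\beta+2)/(\alpha-\beta)}\bigr]$, which is positive once $\alpha>\alpha_*(\beta)$; this defines $\alpha_*(\beta)$. A strictly convex function on the convex set $Q$ has at most one critical point, so with (i) the critical spear is unique.

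\emph{Part (iii), asymptotics, and the main obstacle.} The crucial remark is that $\overline h$ is exactly the number satisfying $\sum_{n\ge1}n\,g(n\overline h)=0$ (the tension of an infinite $\overline h$-spaced chain across any cut vanishes). Hence plugging the constant ansatz into $\mathcal T_K$ leaves only a truncated tail,
\[\mathcal T_K(\overline h\,\mathbf 1)=\sum_{a=0}^{K-1}\sum_{b=0}^{N-1-K}g((a+b+1)\overline h)=-\sum_{a\ge K\ \mathrm{or}\ b\ge N-K}g((a+b+1)\overline h)>0,\]
and, since $g(r)\sim-\beta Br^{-\beta-1}$ at infinity, $\mathcal T_K(\overline h\,\mathbf 1)\asymp K^{1-\beta}+(N-K)^{1-\beta}$ in the bulk and $O(1)$ near the ends. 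Strong convexity ($\mathrm{Hess}\,\Phi\succeq\mu I$ on $Q$, $\mu$ independent of $N$) and $\nabla_hU=-\mathcal T$ give $\|h^{*}-\overline h\,\mathbf 1\|\le\mu^{-1}\|\mathcal T(\overline h\,\mathbf 1)\|$; to promote this to the pointwise bound \eqref{eq:asymptotics} I would use that, for $\alpha$ large, the averaged Hessian along $[\overline h\,\mathbf 1,h^{*}]\subset Q$ is diagonally dominant with off-diagonal entries decaying in the index distance, so its inverse decays comparably, and a discrete convolution against the profile of $\mathcal T(\overline h\,\mathbf 1)$ reproduces $N^{1-\beta}+k^{1-\beta}+(N-k)^{1-\beta}$. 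For the matching lower bound, the positivity $\mathcal T_K(\overline h\,\mathbf 1)>0$ for \emph{every} $K$, together with $\partial\mathcal T_K/\partial h_K=\sum_{i<K\le j}g'(p_j-p_i)<0$ being dominant (increasing a gap lowers its tension), forces $h^{*}_k>\overline h$, and the resolvent bound quantifies this as the gain $cN^{1-\beta}$. The hypothesis $\beta\ge3$ enters here, to make the tail series converge and render the quadratic remainder in the expansion of $\mathcal T$ negligible against the linear term. The genuine difficulty of the theorem is precisely this last step: getting the \emph{sharp} decay profile of $h^{*}-\overline h\,\mathbf 1$ rather than just an $\ell^2$ or uniform bound, which requires real off-diagonal decay control for the inverse of the nonlocal Hessian and sign-aware bookkeeping of the truncated tails of $\sum_nn\,g(n\overline h)$.
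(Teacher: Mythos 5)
Your proposal reaches the same destination by essentially the same road map for the hard part (iii) --- constant ansatz $\overline{h}\,\mathbf{1}$, estimate of the gradient there (your tensions $\mathcal{T}_K$ are exactly the paper's $-\partial_{h_K}J$ in the gap variables), uniform strong convexity, and inversion of the averaged Hessian with off-diagonal decay --- but several of your sub-arguments genuinely differ from the paper's. For existence you rule out escaping gaps by strict subadditivity of the minimal energies and a gluing/dichotomy argument, whereas the paper (Lemmas~\ref{lem:distances upper bound} and~\ref{lem:distances lower bound}) simply shows that $\partial_{h_{k_0}}J>0$ whenever $h_{k_0}>\widehat{h}$ and $\partial_{h_{k_0}}J<0$ at a minimal gap below $\widecheck{h}$, which confines the infimum to the compact box $[\widecheck{h},\widehat{h}]^{N-1}$ at once; your route works but is heavier than necessary in this finite-dimensional setting. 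Your derivation of full criticality in $(\RR^3\times\SS^2)^N$ from the invariance of $U$ under the diagonal $O(3)$-action and the fact that $\mS_{P_0}$ is fixed by the two coordinate reflections is a clean replacement for the paper's direct computation of $\nabla_{x_i}U$ and $\nabla_{m_i}U$ in \eqref{eq:nabla_xUd(S)}--\eqref{eq:nabla_mUd(S)}; it is correct because \eqref{eq:interaction potential} depends only on inner products and norms. For uniqueness, your PSD splitting $H_A-H_B$ with an operator-norm bound on $H_B$ is arguably cleaner than the paper's entrywise estimates plus Gershgorin (Lemma~\ref{lem:Hess_prop}) and yields the same $N$-independent coercivity constant; note, however, that for the pointwise asymptotics you still need the entrywise decay $|\partial^2_{\mu\nu}J|\lesssim|\mu-\nu|^{-\beta}$, which an operator-norm bound does not provide (though it does follow from the same rank-one decomposition and straddling-pair count). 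Finally, your lower bound $h_k^*\geq\overline{h}+cN^{1-\beta}$ via the sign structure of the inverse Hessian needs the observation that the averaged Hessian is an M-matrix (nonpositive off-diagonal entries, diagonal dominance), hence has a nonnegative inverse with $(M^{-1})_{kk}$ bounded below; the paper instead obtains this bound by a self-contained scalar monotonicity argument at the minimal gap (Lemma~\ref{lem:better_up_low_bound}) using ${L^\flat}(x)=\sum_{\ell\ge 1}\ell L'(\ell x)$.

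The one place where your proposal is genuinely incomplete rather than merely different is the assertion that a diagonally dominant matrix with off-diagonal entries $O(|i-j|^{-\gamma})$ has an inverse with comparable off-diagonal decay. This is the technical heart of the pointwise estimate and is not a soft fact: the paper devotes Proposition~\ref{prop:matrix_inverse} and its appendix to it (a Neumann series $\sum_k B^k$ controlled by the convolution bound of Lemma~\ref{lem:sum shifted inverse}), and it is precisely this step that produces the quantitative smallness condition \eqref{def:r+} and hence the largeness threshold $\alpha_*(\beta)$. You correctly identify this as the crux, but as written it is invoked, not proved.
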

One consequence of this theorem and in particular of this last inequality is that, for all $k>0$, the distance between the particle of index $\lfloor N/2\rfloor+k$ and its neighbor converges towards $\overline{h}$ as $N \to +\infty$. Actually, \eqref{eq:asymptotics} is a quantitative version of this convergence. A weaker version was already obtained in a different context related to the theory of Lennard-Jones potentials (see~\cite{Ventevogel:On_the_configuration:1978,Ventevogel_Nijboer:On_the_configuration:1979,Ventevogel_Nijboer:On_the_configuration:1979b,Gardner_Radin:The_infinite-volume:1979} for the convergence result and related developments).
The bound given by~\eqref{eq:asymptotics}  is an improvement of the existing convergence result in several aspects:
\begin{itemize}
    \item 
    These bounds give us a convergence result for the distance $|x_{k+1}-x_k|$ towards a limit $\overline{h}$ and the convergence is by upper values ($c >0$). 
    This is coherent with the fact that the more we add magnets at the extremities, the more ``compressed'' the center of the spear is.
    \item
    At the center of the spear ($k\simeq N/2$), the estimate~\eqref{eq:asymptotics}
    gives a rate of convergence which is the optimal estimate since it gives exactly that the distances converge to $\overline{h}$ at order $1/N^{\beta-1}$.
    \item
    The farther from the center, the worse the upper bound on the convergence ; but we still have uniform convergence to $\overline{h}$ in the bulk of the spear, away from the extremities: for $k \in [d(N), N-d(N)]$, for any function $d$ such that $d(N) \to +\infty$ as $N\to+\infty$. 
\end{itemize}

Near the extremities ($k\simeq 1$ or $k\simeq N$), the bound~\eqref{eq:asymptotics} is such that it no longer implies convergence. It is a natural question to wonder whether the estimate~\eqref{eq:asymptotics} is optimal and whether the extremities actually converge toward $\overline{h}$. The answer is negative, as evidenced by the following result:
\begin{proposition}\label{prop:no convergence}
    For all $N>0,$ there exists an index $k_N$ such that the distance $|x_{k_{N+1}}-x_{k_{N}}|$ does not converge to $\overline{h}$ as $N\to+\infty$.
\end{proposition}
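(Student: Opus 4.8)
The plan is to exploit the equilibrium satisfied at the tip of the spear together with the (elementary) fact that a uniformly spaced \emph{semi-infinite} chain is not a critical configuration. Recall that along a spear $\mS_P$, normalized so that $p_0=0$, the energy reduces to the Lennard-Jones sum $U(\mS_P)=\sum_{0\le i<j\le N-1}\phi(p_j-p_i)$ with $\phi(r)=A r^{-\alpha}-B r^{-\beta}$ (indeed $m_k\cdot m_\ell=1$, $(m_k\cdot r_{k\ell})(m_\ell\cdot r_{k\ell})=|r_{k\ell}|^2$, and the $B_0$ terms cancel, leaving $\phi$). By Theorem~\ref{thrm:spear}(i), $\mS_{P_0}$ is a critical point of $U$ on the full configuration space, so in particular $\partial_{p_0}U(\mS_{P_0})=0$ (the derivative with respect to the first coordinate of $x_0$), which reads, with $g:=-\phi'$, i.e. $g(r)=\alpha A r^{-\alpha-1}-\beta B r^{-\beta-1}$,
\begin{equation}\label{eq:tip}
 \sum_{j=1}^{N-1} g(p_j)=0,\qquad p_j=h_1+\dots+h_j,\quad h_k:=|x_k-x_{k-1}|.
\end{equation}
Observe that the Proposition is equivalent to the statement that $\sup_{1\le k\le N-1}|h_k-\overline{h}|$ does not tend to $0$ as $N\to+\infty$. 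I will prove the stronger fact that it stays above a constant $\e_0>0$ independent of $N$, and then take $k_N$ to realize the supremum.

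Fix $N$ and suppose $|h_k-\overline{h}|<\e$ for all $k$. By Theorem~\ref{thrm:spear}(ii), $h_k\in[\widecheck{h},\widehat{h}]$, hence $p_j\ge j\,\widecheck{h}$; and since $\alpha>\beta$ we have $|g(r)|\le C_1 r^{-\beta-1}$ for $r\ge\widecheck{h}$. Therefore, for every $L$ with $N\ge L+1$, the tail obeys $\big|\sum_{j=L}^{N-1} g(p_j)\big|\le C_2 L^{-\beta}$, so \eqref{eq:tip} gives $\big|\sum_{j=1}^{L-1} g(p_j)\big|\le C_2 L^{-\beta}$. On the other hand $g$ is Lipschitz on $[\widecheck{h},\infty)$ with a constant $C_3$, and $|p_j-j\overline{h}|\le\sum_{i\le j}|h_i-\overline{h}|<j\e$, so $\big|\sum_{j=1}^{L-1} g(p_j)-\sum_{j=1}^{L-1} g(j\overline{h})\big|\le C_3\,\e\,L^2$; here $C_1,C_2,C_3$ depend only on $\alpha,\beta,A,B$. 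Combining the two bounds,
\begin{equation}\label{eq:comb}
 \Big|\sum_{j=1}^{L-1} g(j\overline{h})\Big|\le C_2 L^{-\beta}+C_3\,\e\,L^2\qquad\text{for every } L\le N-1.
\end{equation}

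It remains to see that $\sigma:=\sum_{j\ge1} g(j\overline{h})=\alpha A\zeta(\alpha+1)\overline{h}^{-\alpha-1}-\beta B\zeta(\beta+1)\overline{h}^{-\beta-1}$ is nonzero; I claim $\sigma>0$. Indeed, the definition of $\overline{h}$ is exactly the relation $\alpha A\zeta(\alpha)\overline{h}^{-\alpha-1}=\beta B\zeta(\beta)\overline{h}^{-\beta-1}=:V>0$, whence $\sigma=V\big(\zeta(\alpha+1)/\zeta(\alpha)-\zeta(\beta+1)/\zeta(\beta)\big)$; since $s\mapsto\zeta(s+1)/\zeta(s)$ is strictly increasing on $(1,\infty)$ (each Euler factor $(1-p^{-s})/(1-p^{-s-1})$ is increasing in $s$) and $\alpha>\beta$, we get $\sigma>0$. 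Equivalently, the uniform semi-infinite chain of spacing $\overline{h}$ fails \eqref{eq:tip}. Now fix $L_0$ (depending only on $\alpha,\beta,A,B$) such that $\sum_{j=1}^{L_0-1} g(j\overline{h})\ge\tfrac34\sigma$ and $C_2 L_0^{-\beta}\le\tfrac14\sigma$. Plugging $L=L_0$ into \eqref{eq:comb} forces $\e\ge\sigma/(2C_3 L_0^2)=:\e_0>0$. Consequently, for every $N\ge L_0+1$ one cannot have $|h_k-\overline{h}|<\e_0$ for all $k$, i.e. $\sup_{1\le k\le N-1}|h_k-\overline{h}|\ge\e_0$; choosing $k_N$ with $|h_{k_N}-\overline{h}|\ge\e_0$ (and arbitrarily for the finitely many remaining $N$) yields $|x_{k_N+1}-x_{k_N}|=h_{k_N}\not\to\overline{h}$, as claimed. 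Note the argument uses only parts (i) and (ii) of Theorem~\ref{thrm:spear}.

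The main obstacle is the computation $\sigma\neq0$: this is the quantitative incarnation of the fact that the ``surface spacing'' $\big(\alpha A\zeta(\alpha+1)/(\beta B\zeta(\beta+1))\big)^{1/(\alpha-\beta)}$ at the tip differs from the bulk spacing $\overline{h}$, and it is what drives the whole proof. The only other point to watch is uniformity of the constants in $N$ together with the fact that the head error in \eqref{eq:comb} grows like $L^2$: the argument closes only because the fixed positive number $\sigma$ is already captured at a finite scale $L_0$ at which the tail $C_2 L_0^{-\beta}$ is small (which needs merely $\beta>0$).
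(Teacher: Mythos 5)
Your argument is correct, and it rests on the same two pillars as the paper's proof: the criticality relation at the first particle, $\sum_{j=1}^{N-1} L'(p_j)=0$ (this is \eqref{eq:crit_point_1d} with $k=1$; your $g$ is $-L'$), and the fact that the uniform semi-infinite chain of spacing $\overline{h}$ violates this relation because $\sigma=-\sum_{j\ge 1}L'(j\overline{h})\neq 0$, which is exactly the statement $\overline{h}\neq\widetilde{h}$ of Lemma~\ref{lem:widecheck} (indeed $\widetilde{h}$ defined in \eqref{def:h_b} is the unique zero of $x\mapsto\sum_j L'(jx)$ on the relevant range). The execution differs in two respects. First, where the paper argues by contradiction, assuming convergence on each fixed index window and passing to the limit via the mean value theorem and tail estimates, you run the comparison at a single finite scale $L_0$ and extract an explicit $\e_0>0$ with $\sup_{1\le k\le N-1}|h_k-\overline{h}|\ge\e_0$ for all large $N$; this quantitative packaging is clean, and since your head estimate only uses $|h_i-\overline{h}|<\e$ for $i\le L_0-1$, it even localizes the non-converging index to $k\le L_0-1$, recovering the extra information implicit in the paper's contradiction hypothesis. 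Second, you prove the strict monotonicity of $s\mapsto\zeta(s+1)/\zeta(s)$ via the Euler product rather than the Von Mangoldt formula used in Lemma~\ref{lem:widecheck}; the assertion that each factor $(1-p^{-s})/(1-p^{-s-1})$ is increasing deserves its one-line justification (the logarithmic derivative equals $\ln p\,\bigl(\tfrac{u}{1-u}-\tfrac{u/p}{1-u/p}\bigr)>0$ with $u=p^{-s}\in(0,1)$, since $t\mapsto t/(1-t)$ is increasing), but it is immediate and gives a marginally more elementary route to $\sigma>0$. All constants are uniform in $N$ as you claim, so the proof is complete.
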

Therefore, this last proposition allows us to conclude that the estimate~\eqref{eq:asymptotics} is optimal at both the center and the extremities of the spear. 
The estimate \eqref{eq:asymptotics} is all the worse the closer we get from the extremities, but we cannot conclude to the optimality of this estimate for the intermediate scalings.

The constant $\alpha_*(\beta)$ is defined precisely in paragraph \ref{sec:refined_dist}, see also Remark \ref{rk:alpha*} for numerics in the physically relevant case $\beta=3$.

\subsection{The ring: A circular geometry structure}
The other structure of interest that is very often observed in numerical simulations~\cite{Klughertz_2016_these, Godard-Cadillac_Hervieux_Manfredi_2024} is called the {\it ring} structure (see Figure~\ref{fig:ring}). 
Simulations suggest that this structure is stable whenever $N\geq 3$, and that it is more stable than the spear: when we add white noise to the dynamics of nano-particles to model the thermal effects, the ring and the spear structures disappear at high temperatures, but the spear disappears at much lower temperatures as it turns into a ring (by having its two ends meet).
The ring structure is also widely observed in experiments (see~\cite{Klughertz_2016_these} and references therein).

The ring is a structure in this colloid system that is characterized (up to translations and rotations) by $r>0$ and $N$ magnetic particles:
\begin{equation}\label{def:ring}
    x_k=r\begin{pmatrix}
    \cos\left(\frac{2 \pi k}{N}\right)\\
    \sin\left(\frac{2 \pi k}{N}\right)\\
    0
\end{pmatrix}\qquad\text{and}\qquad m_k=\begin{pmatrix}
    -\sin\left(\frac{2\pi k}{N}\right)\\
    \cos \left(\frac{2\pi k}{N}\right)\\
    0
\end{pmatrix}.
\end{equation}
This configuration is noted $\mR_r\in(\RR^3\times\SS^2)^N$.

We prove a general existence and uniqueness result and compute the optimal radius.

\begin{theorem}[Existence and uniqueness of a critical ring structure]\label{thrm:ring}~Consider the magnetic repulsive energy $U$ defined at~\eqref{def:U} with $\alpha>\beta>1$.

$(i)$ This energy admits a unique ring-shaped critical point $\mR_r$ for any fixed $N\geq 2.$
    The radius $r := r^\ast_N$ of the ring is given by an explicit formula, see \eqref{def:rN}.

$(ii)$ In the asymptotics $N \to \infty$, recalling the definition of $\overline{h}$ at~\eqref{def:the H gang}, we obtain $\displaystyle r^\ast_N \sim \frac{N}{2 \pi} \overline{h}.$
    Moreover the distance between two nearest particles converges to $\overline{h}$ as $N\to+\infty$ with:
    \begin{equation}\label{Milady de Winter}
        \forall\,k=0,\dots,N-1,\qquad\quad\Big|\,|x_k-x_{k+1}|\,-\overline{h}\,\Big|\;\lesssim\;\left\{\begin{array}{ll}
            N^{-2} &\quad\text{if }\beta>3,  \\
            N^{-2}\log(N) &\quad\text{if }\beta=3,\\
            N^{1-\beta} &\quad\text{if }1<\beta<3.
        \end{array}\right.
    \end{equation}
    where we used the natural periodicity convention for indices in a circle: $x_{N}\equiv x_0$.
\end{theorem}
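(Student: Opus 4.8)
The plan is to reduce the criticality of the ring $\mR_r$ to a one-variable problem in the radius, solve that explicitly, promote the radial critical point to a genuine critical point of $U$ using the symmetries of $\mR_r$, and finally read off the asymptotics from the explicit formula for $r^\ast_N$. \emph{Step 1 (reduction to one variable).} For $\mR_r$, the chord between particles $k$ and $\ell$ has length $|r_{k\ell}|=2r\,|\sin(\pi(k-\ell)/N)|$, and the sum-to-product formulas yield the identities $m_k\cdot m_\ell=\cos(2\pi(k-\ell)/N)$ and $m_k\cdot r_{k\ell}=m_\ell\cdot r_{k\ell}=r\sin(2\pi(k-\ell)/N)$. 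Substituting these into \eqref{eq:interaction potential} and into the repulsive term, every pairwise contribution depends only on $j:=|k-\ell|$, and a short simplification (with $\cos\phi=2\cos^2(\phi/2)-1$ and the symmetry $j\leftrightarrow N-j$) gives
\begin{equation}
U(\mR_r)=\frac{N}{2}\left(\frac{C_N}{r^\beta}+\frac{D_N}{r^\alpha}\right),
\end{equation}
where $C_N:=2^{-\beta}\sum_{j=1}^{N-1}\dfrac{-B+(B-B_0)\sin^2(\pi j/N)}{\sin^\beta(\pi j/N)}$ and $D_N:=2^{-\alpha}A\sum_{j=1}^{N-1}\sin^{-\alpha}(\pi j/N)$. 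One has $D_N>0$, and since $-B+(B-B_0)\sin^2(\pi j/N)$ always lies between $-B$ and $-B_0$, also $C_N<0$. Writing $g(r):=U(\mR_r)$, the equation $g'(r)=0$ amounts to $r^{\alpha-\beta}=\alpha D_N/(-\beta C_N)$, whose unique positive solution is
\begin{equation}
r^\ast_N=\left(\frac{\alpha D_N}{-\beta C_N}\right)^{\frac{1}{\alpha-\beta}},
\end{equation}
as $r\mapsto r^{\alpha-\beta}$ is strictly increasing; this is \eqref{def:rN}.

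\emph{Step 2 (radial critical point $\Rightarrow$ critical point of $U$).} The energy is invariant under relabelling, under the diagonal action $(x_k,m_k)\mapsto(Qx_k,Qm_k)$ with $Q\in O(3)$, and under the global spin reversal $m_k\mapsto-m_k$. I will use three isometries fixing $\mR_r$: the rotate-and-shift $\rho$ (rotate space by $2\pi/N$, shift labels by $1$); the reflection $Q'=\mathrm{diag}(1,1,-1)$, which fixes every $x_k$ and every $m_k$; and the involution $\tau$ obtained by composing the reflection $\mathrm{diag}(1,-1,1)$, the relabelling $k\mapsto-k$, and the global spin reversal, for which one checks directly that $\tau(\mR_r)=\mR_r$. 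Because $U$ is invariant, $\nabla U(\mR_r)$ is fixed by the differential of each of these. From $\rho$ one gets $\nabla_{x_k}U(\mR_r)=R^k\nabla_{x_0}U(\mR_r)$ (with $R$ the rotation by $2\pi/N$) and likewise for the spins; $\tau$ acts at the $0$-th slot as $\mathrm{diag}(1,-1,1)$ on $T_{x_0}\RR^3$ and as $-\mathrm{Id}$ on $T_{m_0}\SS^2$, hence it forces the tangential part of $\nabla_{x_0}U(\mR_r)$ and the whole of $\nabla^{\SS^2}_{m_0}U(\mR_r)$ to vanish; and $Q'$ kills the remaining vertical component of $\nabla_{x_0}U(\mR_r)$. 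Therefore $\nabla_{x_k}U(\mR_r)=\lambda\,x_k/r$ for some scalar $\lambda$ and $\nabla^{\SS^2}_{m_k}U(\mR_r)=0$ for all $k$. Since $g'(r)=\sum_k\nabla_{x_k}U(\mR_r)\cdot\partial_r x_k=N\lambda$, the equality $g'(r^\ast_N)=0$ gives $\lambda=0$, so $\mR_{r^\ast_N}$ is a critical point of $U$; conversely every ring critical point satisfies $g'(r)=0$, hence equals $\mR_{r^\ast_N}$. This settles $(i)$. (Equivalently this step is the principle of symmetric criticality applied to the fixed-point set of the symmetry group, which is $\{\mR_r\}$ up to rigid motions.) The spin directions are the delicate point: one really needs to combine a spatial reflection with the global spin reversal, the reflection alone being incompatible with the tangential orientation of the $m_k$.

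\emph{Step 3 (asymptotics).} For $s>1$ the sum $\Sigma_s(N):=\sum_{j=1}^{N-1}\sin^{-s}(\pi j/N)$ is dominated by the indices $j$ near $0$ and $N$, where $\sin(\pi j/N)\sim\pi j/N$. Using the symmetry $j\leftrightarrow N-j$, the expansion $\sin^{-s}(\pi j/N)=(N/\pi j)^s(1+O((j/N)^2))$ and $\sum_{j>N/2}(N/\pi j)^s=O(N)$, one obtains
\begin{equation}
\Sigma_s(N)=\frac{2\zeta(s)}{\pi^s}N^s\bigl(1+O(\varepsilon_s(N))\bigr),\qquad\varepsilon_s(N)=\begin{cases}N^{-2}&\text{if }s>3,\\[2pt]N^{-2}\log N&\text{if }s=3,\\[2pt]N^{1-s}&\text{if }1<s<3.\end{cases}
\end{equation}
Applied with $s=\alpha$ to $D_N$, and with $s=\beta$ to the leading part of $C_N$ (the remaining part is of order $\Sigma_{\beta-2}(N)=o(N^\beta)$ and only affects the error at the same rate $\varepsilon_\beta(N)$), this gives $D_N=\frac{2A\zeta(\alpha)}{(2\pi)^\alpha}N^\alpha(1+O(\varepsilon_\alpha(N)))$ and $-C_N=\frac{2B\zeta(\beta)}{(2\pi)^\beta}N^\beta(1+O(\varepsilon_\beta(N)))$. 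Inserting this into the formula for $r^\ast_N$ and using that $\bigl(\tfrac{\alpha A\zeta(\alpha)}{\beta B\zeta(\beta)}\bigr)^{1/(\alpha-\beta)}=\overline{h}$ (by \eqref{def:the H gang}) and $(2\pi)^{(\beta-\alpha)/(\alpha-\beta)}=(2\pi)^{-1}$, we obtain $r^\ast_N=\tfrac{N}{2\pi}\overline{h}\,(1+O(\varepsilon_\beta(N)))$ — the $\alpha$-error being absorbed since $\varepsilon_\alpha(N)\le\varepsilon_\beta(N)$ when $\alpha>\beta$ — and in particular $r^\ast_N\sim\tfrac{N}{2\pi}\overline{h}$. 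Finally, $|x_k-x_{k+1}|=2r^\ast_N\sin(\pi/N)=\overline{h}\bigl(1-\tfrac{\pi^2}{6N^2}+O(N^{-4})\bigr)\bigl(1+O(\varepsilon_\beta(N))\bigr)$, so $\bigl||x_k-x_{k+1}|-\overline{h}\bigr|\lesssim N^{-2}+\varepsilon_\beta(N)$; splitting into $\beta>3$, $\beta=3$ and $1<\beta<3$ (and noting $N^{-2}\lesssim N^{1-\beta}$ when $\beta<3$) reproduces exactly \eqref{Milady de Winter}. The only real work in this step is the bookkeeping of the error terms in the three ranges of $\beta$, notably at the logarithmic borderline $\beta=3$; the conceptual obstacle of the whole proof remains Step 2.
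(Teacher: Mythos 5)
Your proof is correct, and part $(i)$ follows a genuinely different route from the paper. The paper proves criticality by brute force: Lemma~\ref{lem:grad ring} computes the full gradient $\nabla_{x_i}U(\mR_r)$ and $\nabla_{m_i}U(\mR_r)$ through a page of trigonometric symmetrization in the appendix, concluding that the position gradient is radial and the spin gradient is collinear with $m_i$, after which Lemma~\ref{lem:radius ring} solves the resulting scalar equation. You instead compute only the scalar $g(r)=U(\mR_r)$ — a much shorter calculation — and recover the directional information on the gradient from the isotropy group of $\mR_r$ (the rotate-and-shift $\rho$, the reflection $\mathrm{diag}(1,1,-1)$, and the reflection-relabelling-spin-reversal $\tau$), which is exactly the symmetric-criticality mechanism. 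I checked that your $\tau$ does fix $\mR_r$ (one needs $Sm_k=-m_{-k}$ with $S=\mathrm{diag}(1,-1,1)$, so the global spin flip is indeed indispensable, as you note), and that your formula agrees with \eqref{def:rN}: since $B-(B-B_0)\sin^2\theta=B_0+(B-B_0)\cos^2\theta$, one has $-\beta C_N=2\widetilde{B}_N$ and $\alpha D_N=2\widetilde{A}_N$, so $\alpha D_N/(-\beta C_N)=\widetilde{A}_N/\widetilde{B}_N$. What your approach buys is the elimination of the explicit gradient computation and a cleaner uniqueness statement (any ring critical point must satisfy $g'(r)=0$ by the chain rule alone); what it costs is the need to verify the invariances of $U$ and the fixed-point property of each symmetry, which is short but conceptually heavier. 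Part $(ii)$ is essentially the paper's argument: the same splitting of $\Sigma_s(N)$ with the same three regimes for the error $\varepsilon_s(N)$, the same identification of the leading constants with $\zeta(\alpha)$ and $\zeta(\beta)$, and the same final chord-length expansion $|x_k-x_{k+1}|=2r^\ast_N\sin(\pi/N)$; your bookkeeping of the $\Sigma_{\beta-2}$ correction matches the paper's estimates and reproduces \eqref{Milady de Winter}.
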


The study of the ring structure appears to be a little less complicated than the spear, since this shape has a lot of symmetry.
Yet the manipulated formulae can remain rather long so we postponed the statement of the formula for $r^\ast_N$ to ease the reading of the main results. 

It is interesting to compare the asymptotics obtained here for the ring structure with the asymptotic for the spear given by Theorem~\ref{thrm:spear}-$(iii)$. 
We observe indeed that the limit distance for the neighboring particles in the ring is the same limit distance $\overline{h}$ that we obtained for the center of the spear.
This is consistent with the property $r^\ast_N\to+\infty$ since, in this asymptotic, the Ring shows a local curvature converging to $0$ (the straight line).
In other words, when $N\to+\infty$, the ring and the bulk of the spear have the same asymptotic behavior in the neighborhood of a fixed nanoparticle.

Nevertheless, the rate of convergence to $\overline{h}$ obtained for the ring is lower than that for the spear; we think that this rate may be optimal (from the study of remaining terms in the asymptotics developments, see the proof of Theorem~\ref{thrm:ring} for details).
This difference in the rate of convergence is possibly a consequence of the geometric constraint induced by the curvature of the ring, which implies that the vectors appearing in Formula~\eqref{eq:interaction potential} are not colinear (unlike the case of the spear).

\section{Study of the Spear structure}

\subsection{Reformulation of the problem for the spear}
To study the critical points of $U$ that have the aligned structure called ``spear'', we first compute $\nabla_{x_i} U^d$ and $\nabla_{x_i} U^s$ at the particular point $\mS_P$ defined by~\eqref{def:spear}.  We find:
\begin{align} \label{eq:nabla_xUd(S)}
    \nabla_{x_i} U^d (\mS_P) &=
    \begin{multlined}[t]
        -\sum_{\substack{j=0\\j\neq i}}^{N-1}\frac{B+B_0}{|r_{ij}|^{\beta+2}}\bigg[(m_i\cdot r_{ij})m_j+(m_j\cdot r_{ij})m_i\\
        +\frac{\beta B_0}{B+B_0}(m_i\cdot m_j)r_{ij}-(\beta+2)\frac{(m_i\cdot r_{ij})(m_j\cdot r_{ij})}{|r_{ij}|^2}r_{ij}\bigg]
    \end{multlined} \notag \\
        &= \beta B \sum_{\substack{j=0\\j\neq i}}^{N-1}\frac{(p_i-p_j)}{|p_i-p_j|^{\beta+2}}
        \begin{pmatrix}
            1\\
            0\\
            0
        \end{pmatrix},
\end{align}
%
and
\begin{equation} \label{eq:nabla_xUs(S)}
\nabla_{x_i}U^s (\mS_P) = -\alpha A\sum_{\substack{j=0\\j\neq i}}^{N-1}\bigg(\frac{1}{|r_{ij}|}\bigg)^{\alpha+1}\frac{r_{ij}}{|r_{ij}|}=-\alpha A \sum_{\substack{j=0\\j\neq i}}^{n-1}\frac{(p_i-p_j)}{|p_i-p_j|^{\alpha+2}}\begin{pmatrix}
    1\\
    0\\
    0
\end{pmatrix}.
\end{equation}
On the other hand, we also have the following equalities for $\nabla_{m_i} U^d$:
\begin{align}
    \nabla_{m_i} U^d (\mS_P) &= \sum_{\substack{j=0\\j\neq i}}^{N-1} B_0 \frac{m_j}{|r_{ij}|^\beta} - (B + B_0) \frac{r_{ij} (m_j \cdot r_{ij})}{|r_{ij}|^{\beta+2}} \nonumber \\
        &= \sum_{\substack{j=0\\j\neq i}}^{N-1} \frac{B_0}{|r_{ij}|^\beta}
        \begin{pmatrix}
            1\\
            0\\
            0
        \end{pmatrix}
        - (B + B_0) \frac{p_i - p_j}{|p_i - p_j|^{\beta+2}}
        \begin{pmatrix}
            p_i - p_j\\
            0\\
            0
        \end{pmatrix} \nonumber \\
        &= - B \sum_{\substack{j=0\\j\neq i}}^{N-1} \frac{1}{|p_i - p_j|^\beta} 
        \begin{pmatrix}
            1\\
            0\\
            0
        \end{pmatrix}. \label{eq:nabla_mUd(S)}
\end{align}
Such computations yield the following relation between the critical points of $U(\mS_P)$ in $\RR^N$ and the critical points of $U$ in $(\RR^3 \times \SS^2)^N$.

\begin{lemma}
    If $P$  is a critical point of $U(\mS_P)$ in $\RR^N$, then $\mS_P$ is a critical point of $U$ in $(\RR^3 \times \SS^2)^N$.
\end{lemma}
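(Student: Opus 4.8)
The plan is to unwind the definition of a critical point on the product manifold $(\RR^3\times\SS^2)^N$ and to check the two families of stationarity conditions separately — those coming from the positions $x_i$ and those coming from the spins $m_i$ — using the three gradient formulas \eqref{eq:nabla_xUd(S)}, \eqref{eq:nabla_xUs(S)} and \eqref{eq:nabla_mUd(S)} already computed at the configuration $\mS_P$. This is morally an instance of the principle of symmetric criticality, $\mS_P$ being (for fixed $P$) the fixed-point set of the reflections preserving the $x$-axis, but it is quicker here to argue by hand.

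First I would record that at $(X,M)=\mS_P$ the point is critical for $U$ on $(\RR^3\times\SS^2)^N$ exactly when, for every $i$, one has $\nabla_{x_i}U(\mS_P)=0$ in $\RR^3$ and the orthogonal projection of $\nabla_{m_i}U(\mS_P)$ onto $T_{m_i}\SS^2=m_i^\perp$ vanishes, i.e. $\nabla_{m_i}U(\mS_P)$ is colinear with $m_i$. The spin conditions then require nothing to prove: $U^s$ does not depend on $M$, so $\nabla_{m_i}U(\mS_P)=\nabla_{m_i}U^d(\mS_P)$, and \eqref{eq:nabla_mUd(S)} shows this is a scalar multiple of $e_1=m_i$; hence its tangential component is zero, irrespective of $P$.

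Next I would treat the position conditions. Summing \eqref{eq:nabla_xUd(S)} and \eqref{eq:nabla_xUs(S)} gives $\nabla_{x_i}U(\mS_P)=\varphi_i(P)\,e_1$ where
\[
\varphi_i(P)=\beta B\sum_{\substack{j=0\\ j\neq i}}^{N-1}\frac{p_i-p_j}{|p_i-p_j|^{\beta+2}}-\alpha A\sum_{\substack{j=0\\ j\neq i}}^{N-1}\frac{p_i-p_j}{|p_i-p_j|^{\alpha+2}}.
\]
The key observation is the chain-rule identity $\partial_{p_i}\big[U(\mS_P)\big]=\nabla_{x_i}U(\mS_P)\cdot e_1=\varphi_i(P)$: the parameter $p_i$ enters the map $P\mapsto\mS_P$ only through $x_i=(p_i,0,0)$ while all spins stay fixed at $e_1$, so differentiating the composition in $p_i$ picks out precisely $\nabla_{x_i}U(\mS_P)\cdot\partial_{p_i}x_i$. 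Consequently, if $P$ is a critical point of $P\mapsto U(\mS_P)$ on $\RR^N$, then $\varphi_i(P)=0$ for all $i$, hence $\nabla_{x_i}U(\mS_P)=0$ for all $i$. Together with the spin part this shows all stationarity conditions hold, so $\mS_P$ is a critical point of $U$ on $(\RR^3\times\SS^2)^N$.

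I do not anticipate a genuine obstacle; the only point needing a little care is the bookkeeping in that chain-rule step — checking that varying $p_i$ moves only $x_i$ and only along $e_1$, that the spin variables (held at $e_1$) contribute nothing and remain admissible on $\SS^2$, and that a gradient colinear with $m_i$ has vanishing tangential part, which is exactly what frees the spin equations. Everything else is the bare definition of criticality on the manifold.
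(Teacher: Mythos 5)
Your argument is correct and is essentially the paper's own proof: both use the chain-rule identity $\partial_{p_i}\big[U(\mS_P)\big]=\nabla_{x_i}U(\mS_P)\cdot e_1$ together with the collinearity of $\nabla_{x_i}U(\mS_P)$ with $e_1$ from \eqref{eq:nabla_xUd(S)}--\eqref{eq:nabla_xUs(S)} to handle the positions, and the collinearity of $\nabla_{m_i}U(\mS_P)$ with $m_i$ from \eqref{eq:nabla_mUd(S)} (plus the fact that $U^s$ is independent of $M$) to dispose of the spin equations. No substantive difference from the paper's proof.
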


\begin{proof}
    From the expression of $\mS_P$, there obviously holds
    \begin{equation}
        \frac{\partial}{\partial p_i} (U (\mS_P)) = \nabla_{x_i} U (\mS_P) \cdot 
        \begin{pmatrix}
            1\\
            0\\
            0
        \end{pmatrix}.
    \end{equation}
    From the previous computations \eqref{eq:nabla_xUd(S)}-\eqref{eq:nabla_xUs(S)}-\eqref{eq:nabla_mUd(S)}, we know that $\nabla_{x_i} U (\mS_P)$ is collinear to this last vector. Thus, $\frac{\partial}{\partial p_i} (U (\mS_P))$ vanishes if and only if $\nabla_{x_i} U (\mS_P)$ vanishes.
    Moreover, the previous computations also yield that $\nabla_{m_i} U^d (\mS_P)$ is collinear to $m_i$, and therefore so is $\nabla_{m_i} U (\mS_P)$ as $U^s$ does not depend on any $m_i$. Thus, the variation of $U$ in $(\RR^3 \times \SS^2)^N$ with respect to $m_i$ at the point $\mS_P$ is always $0$, and the conclusion follows.
\end{proof}

The resolution of this problem is equivalent to a problem of calculus of variations involving a Lennard-Jones type potential, of the type
\begin{equation}
    L(h):=\frac{A}{|h|^\alpha}-\frac{B}{|h|^\beta}.
\end{equation}
Recall that we have $A,B,\alpha,\beta>0$ with $\alpha>\beta$. 
An example of Lennard-Jones potential is given in Figure~\ref{fig:LJ_potential}.
\begin{figure}[!h]
    \begin{center}
        \includegraphics[width=7cm]{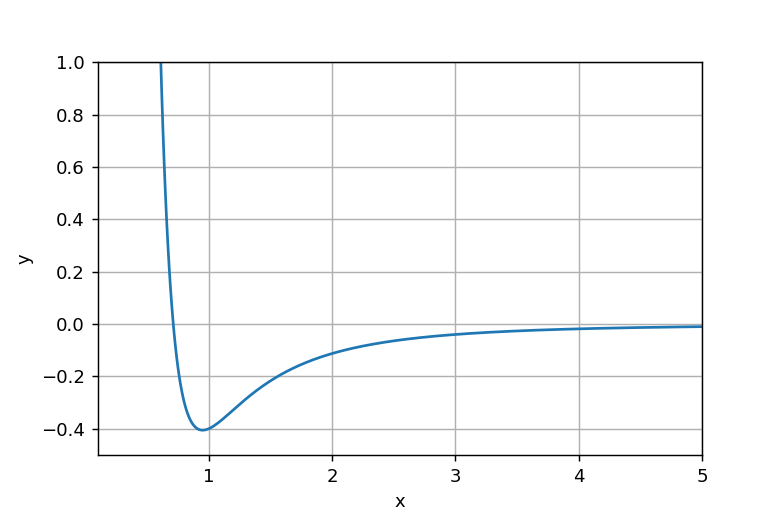}
    \end{center}
    \caption{Typical profile of Lennard-Jones potential}\label{fig:LJ_potential}
\end{figure}
Indeed, consider for $P\in\RR^{N}$ such that $p_i \neq p_j$ for all $i, j$, the function
\begin{equation}\label{def:J L}
    J_0(P):= \frac{1}{2} U (\mS_P) = \frac{1}{2} \sum_{i=0}^{N-1}\sum_{\substack{j=0\\j\neq i}}^{N-1} L\big(|p_i-p_j|\big).
\end{equation}
Then, one can see that finding critical points of $U$ which have the shape of a spear is equivalent to finding a critical point for the function $J_0$.
With this reformulation at hand, the strategy for the following work consists of minimizing the function $J_0$ to obtain the existence of a critical point.
In particular, any critical point solves the system
\begin{equation}\label{eq:Graal}
    \alpha A \sum_{\substack{j=0\\j\neq i}}^{N-1}\frac{(p_i-p_j)}{|p_i-p_j|^{\alpha+2}}=\beta B \sum_{\substack{j=0\\j\neq i}}^{N-1}\frac{(p_i-p_j)}{|p_i-p_j|^{\beta+2}}
\end{equation}
for all $i=0,\dots,N-1$.
Since the function $J_0$ is invariant by permutation of particles, we can assume that $p_{k+1} > p_k$. The function then takes this form
\begin{equation}
    J_0(P)=\sum_{i=1}^N\sum_{j=0}^{i-1} L\big(p_i-p_j\big).
\end{equation}
In order to both take into account the invariance by translation, and to have a more pleasant formulation to manipulate, we prefer to work with the distances between particles $h_k:= p_{k} - p_{k-1}>0$ instead of the positions $p_k$. 
It is possible to reformulate the problem only in terms of $h_k$ by noticing that $\sum_{\ell=i}^{j}h_\ell=:p_j-p_{i-1}$.
With this setting, the problem of calculus of variation that we study here then consists in minimizing
\begin{equation} \label{def:J L d}
    J(H):=J_0(P)=\sum_{i=1}^{N-1}\sum_{\substack{j=1}}^{i} L\bigg(\sum_{\ell=j}^{i} h_\ell \bigg),\qquad\text{with}\qquad L(h):=\frac{A}{|h|^\alpha}-\frac{B}{|h|^\beta},
\end{equation}
where $H=(h_1,\dots,h_{N-1})\in(\RR_+^\ast)^{N-1}$ is the vector containing all the distances between neighboring nano-particles. We define
\begin{equation}
    J^\ast:=\inf_{H\in(\RR_+^\ast)^{N-1}}J(H).
\end{equation}
Our problem raises two key questions:
\begin{itemize}
    \item Is $J^\ast$ a minimum of $J$ over $(\RR_+^\ast)^{N-1}$?
    \item Does $J$ have another critical point?
\end{itemize}
We will address both questions in the following sections.

\subsection{First estimates and Existence result}


The function $L$ admits a unique critical point, which is also its global minimizer. Direct computations show that this minimizer is $\widehat{h}$ defined at~\eqref{def:the H gang}.
More precisely, we have $L$ decreasing in $(0, \widehat{h}]$ and increasing in $[\widehat{h}, +\infty)$.
This leads to a first property on the variations of $J$ when one of the components of $H$ is greater than $\widehat{h}$.


\begin{lemma}[Upper bound on the distances]\label{lem:distances upper bound}~
\begin{enumerate}
    \item If $H\in(\RR_+^\ast)^{N-1}$ is a critical point of $J$ then,
    \begin{equation}
        \max_{k=1,\dots,N-1}|h_k|\;\leq\;\widehat{h}.
    \end{equation}
    \item There holds
    \begin{equation}
        J^\ast = \inf_{H \in (0, \widehat{h}]^{N-1}} J(H).
    \end{equation}
\end{enumerate}
\end{lemma}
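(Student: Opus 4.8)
The plan is to prove both items by a direct variational argument exploiting the monotonicity of $L$ recalled just before the statement: $L$ is decreasing on $(0,\widehat h]$ and increasing on $[\widehat h,+\infty)$, with minimum at $\widehat h$. For item (1), suppose $H$ is a critical point of $J$ with $h_{k}>\widehat h$ for some $k$. I want to derive a contradiction from the stationarity condition, or more robustly, to show that $J$ strictly decreases when we decrease this one coordinate. The key observation is that $J(H)=\sum_{1\le j\le i\le N-1} L\!\big(\sum_{\ell=j}^i h_\ell\big)$, and the coordinate $h_k$ appears in the argument of exactly those terms $L(\sum_{\ell=j}^i h_\ell)$ with $j\le k\le i$. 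For each such pair, the partial sum $S_{ji}:=\sum_{\ell=j}^i h_\ell$ satisfies $S_{ji}\ge h_k>\widehat h$, hence $L'(S_{ji})>0$ since $L$ is increasing on $[\widehat h,+\infty)$. Therefore
\[
\frac{\partial J}{\partial h_k}(H)=\sum_{\substack{1\le j\le k\le i\le N-1}} L'(S_{ji})>0,
\]
which contradicts $\nabla J(H)=0$. This proves (1). (One should note for this to make sense that $J$ is $C^1$ on the open set $(\RR_+^\ast)^{N-1}$, which is clear since $L\in C^\infty(\RR_+^\ast)$.)

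For item (2), the inclusion $J^\ast\le \inf_{H\in(0,\widehat h]^{N-1}}J(H)$ is trivial since $(0,\widehat h]^{N-1}\subset(\RR_+^\ast)^{N-1}$. For the reverse inequality I would show that any $H\in(\RR_+^\ast)^{N-1}$ can be modified, without increasing $J$, into a configuration lying in $(0,\widehat h]^{N-1}$. Concretely, if $h_k>\widehat h$ for some $k$, replace $h_k$ by $\widehat h$, keeping all other coordinates fixed. Every partial sum $S_{ji}$ containing the index $k$ strictly decreases under this move, but stays $\ge \widehat h$ (it is still at least the new value $\widehat h$ of the $k$-th coordinate, plus nonnegative terms); hence each affected term $L(S_{ji})$ does not increase, because $L$ is nondecreasing on $[\widehat h,+\infty)$. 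Partial sums not containing $k$ are unchanged. Thus $J$ does not increase. Iterating over all indices $k$ with $h_k>\widehat h$ (finitely many), we reach a point of $(0,\widehat h]^{N-1}$ with value $\le J(H)$, so $\inf_{(0,\widehat h]^{N-1}}J\le J(H)$ for all $H$, whence $\inf_{(0,\widehat h]^{N-1}}J\le J^\ast$.

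I do not anticipate a serious obstacle; the only point requiring a little care is the bookkeeping in the truncation argument of (2) — namely checking that after capping $h_k$ at $\widehat h$, every partial sum that involved $h_k$ remains at least $\widehat h$ so that we stay on the increasing branch of $L$ — which holds simply because such a partial sum still contains the (new) value $\widehat h$ as one of its summands. A minor subtlety is that the truncation might not strictly decrease $J$ (if no affected partial sum was itself $>\widehat h$ — but here each such sum is $\ge h_k>\widehat h$ before truncation, so it is strict), but strictness is not needed for the statement. One could alternatively phrase (2) as a consequence of (1) plus the existence of a minimizer, but the direct truncation argument is cleaner and self-contained, and in fact also re-proves (1).
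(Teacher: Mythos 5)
Your proof is correct and follows essentially the same route as the paper: item (1) is exactly the paper's argument (every partial sum containing $h_k$ is at least $h_k>\widehat{h}$, so $\partial_{h_k}J(H)>0$, contradicting criticality), and your truncation argument for item (2) is just the integrated form of the paper's observation that this partial derivative stays positive as long as $h_k>\widehat{h}$. One tiny caveat: your closing aside that the truncation argument ``also re-proves (1)'' is not quite accurate --- it only shows such a point is not a minimizer, not that it fails to be a critical point --- but your actual proof of (1) does not rely on it.
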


\begin{proof}
 It suffices to show that if $H\in(\RR_+^\ast)^{N-1}$ such that $h_{k_0} > \widehat{h}$ for some $k_0$, then there holds
    \begin{equation}
        \partial_{h_{k_0}} J (H) > 0.
    \end{equation}

    To start with, we compute the gradient of $J$ at the point $H$. Direct computations give that for all $k=1,\dots, N-1$:
    \begin{equation}\label{eq:grad J}
        \partial_{h_k}J(H)=\sum_{i=1}^k\sum_{j=k}^{N-1}L'\bigg(\sum_{\ell=i}^{j}h_\ell\bigg),\qquad\text{with}\qquad L'(s):=-\frac{\alpha A}{|h|^{\alpha+1}}+\frac{\beta B}{|h|^{\beta+1}}.
    \end{equation}
    We now remark that, since the coordinates of $H$ are all positive:
    \begin{equation}
        \forall\,i\in\{1,\dots,k\},\quad\forall\,j\in\{k+1,\dots,N-1\},\qquad\sum_{\ell=i}^{j}h_\ell\geq h_{k}.
    \end{equation}
    We use this inequality in the particular case $k=k_0$. Since $h_{k_0}>\widehat{h}$ and $L' > 0$ on $[\widehat{h}, \infty)$, we get:
    \begin{equation}
        \forall\,i\in\{1,\dots,k_0\},\quad\forall\,j\in\{k_0+1,\dots,N-1\},\qquad L'\bigg(\sum_{\ell=i}^{j}h_\ell\bigg)>0.
    \end{equation}
    Thus,
    \begin{equation}
        \partial_{h_{k_0}}J(H)=\sum_{i=1}^{k_0}\sum_{j={k_0}}^{N-1}L'\bigg(\sum_{\ell=i}^{j}h_\ell\bigg)>0. 
    \end{equation}
\end{proof}



With similar arguments, we are able to give a lower bound on the distances between particles for the minimizers.


\begin{lemma}[Lower bound on the distances]\label{lem:distances lower bound}~
    \begin{enumerate}
        \item[(i)] If $H\in(\RR_+^\ast)^{N-1}$ is a critical point of $J$, then
    \begin{equation}
        \min_{i=1,\dots,N-1}|h_i|\;\geq\;\widecheck{h},
    \end{equation}
    where $\widecheck{h}$ is defined in~\eqref{def:the H gang}.
    This is in particular true if $H$ is such that $J(H)=J^\ast$.
    \item[(ii)] There holds
    \begin{equation}
        J^\ast = \inf_{H \in [\widecheck{h}, \widehat{h}]^{N-1}} J (H).
    \end{equation}
    \end{enumerate}
\end{lemma}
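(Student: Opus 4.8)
The plan is to mirror the structure of the proof of Lemma~\ref{lem:distances upper bound}, working with the partial derivative $\partial_{h_k} J$ given by~\eqref{eq:grad J}, but now isolating the \emph{dominant} term that controls the sign from below. Fix a critical point $H$ and suppose, for contradiction, that $h_{k_0} < \widecheck{h}$ for some index $k_0$. The key observation is that in the double sum $\partial_{h_{k_0}} J(H) = \sum_{i=1}^{k_0}\sum_{j=k_0}^{N-1} L'\big(\sum_{\ell=i}^{j} h_\ell\big)$, the single term corresponding to $i = k_0 = j$ is exactly $L'(h_{k_0})$, and since $L'(s) = -\alpha A s^{-\alpha-1} + \beta B s^{-\beta-1}$ satisfies $L'(s) \to -\infty$ as $s \to 0^+$ (because $\alpha > \beta$), this term is very negative when $h_{k_0}$ is small. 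The difficulty, compared with the upper bound, is that the \emph{other} terms in the sum need not have a favorable sign: the arguments $\sum_{\ell=i}^{j} h_\ell$ for $(i,j) \neq (k_0,k_0)$ can be of any size, so $L'$ evaluated there can be positive. So a naive termwise argument fails, and one needs a genuine comparison.

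The natural fix, and what I would do, is to compare $h_{k_0}$ against the threshold $\widecheck{h}$ — which is precisely defined so that $\alpha A / \widecheck{h}^{\,\alpha} = \beta B \zeta(\beta)/\widecheck{h}^{\,\beta}$, i.e. $\alpha A\, \widecheck{h}^{\,-\alpha-1} = \beta B \zeta(\beta)\,\widecheck{h}^{\,-\beta-1}$ — and to bound the \emph{total} positive contribution of all terms by something involving $\zeta(\beta)$. Concretely, for the terms where $L'$ could be positive, one uses $L'(s) \le \beta B s^{-\beta-1}$ (dropping the negative $-\alpha A s^{-\alpha-1}$ term, which only helps), together with the separation $\sum_{\ell=i}^{j} h_\ell \ge h_{k_0} + (\text{contributions of the other }h_\ell\text{'s in the block})$. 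Since all $h_\ell > 0$ and by the upper bound $h_\ell \le \widehat h$ from Lemma~\ref{lem:distances upper bound} one could also use that the distances are bounded, but the cleaner route is: the arguments $\sum_{\ell=i}^{j} h_\ell$ over all admissible blocks containing index $k_0$ are, when $H$ is ordered, bounded below by multiples of (a lower bound on consecutive gaps), and grouping them by "distance from the diagonal block" gives a sum dominated by $\sum_{n \ge 1} n^{-\beta-1}$ or $\sum_{n\ge1} n^{-\beta}$, which converges since $\beta > 1$. One then arranges the algebra so that
\[
    \partial_{h_{k_0}} J(H) \;\le\; -\alpha A\, h_{k_0}^{-\alpha-1} + \beta B\, h_{k_0}^{-\beta-1}\,\zeta(\beta) \;<\; 0
\]
whenever $h_{k_0} < \widecheck{h}$, contradicting criticality. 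This proves part~(i); the statement for the minimizer follows since any minimizer in the open set $(\RR_+^\ast)^{N-1}$ is a critical point.

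For part~(ii), combine~(i) with Lemma~\ref{lem:distances upper bound}(2): we already know $J^\ast = \inf_{H \in (0,\widehat h]^{N-1}} J(H)$, and it remains to see the infimum is not decreased by restricting further to $[\widecheck h, \widehat h]^{N-1}$. The clean argument is a compactness one: on the compact set $[\widecheck h, \widehat h]^{N-1}$ the continuous function $J$ attains its minimum at some $H^\star$; if this were not the global infimum, a minimizing sequence in $(0,\widehat h]^{N-1}$ would have a subsequence converging to some $\widetilde H \in [0,\widehat h]^{N-1}$, and any coordinate of $\widetilde H$ equal to $0$ forces $J \to +\infty$ (a term $L(\sum h_\ell) \to +\infty$), so $\widetilde H \in (0,\widehat h]^{N-1}$ is an interior-in-$\RR_+^\ast$ minimizer, hence a critical point, hence has all coordinates $\ge \widecheck h$ by part~(i) — so it already lies in $[\widecheck h,\widehat h]^{N-1}$, giving equality.

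The main obstacle, as flagged above, is step one: controlling the sign of $\partial_{h_{k_0}} J$ from below, because unlike the upper-bound lemma one cannot argue termwise — the off-diagonal terms can be positive and must be summed and compared against the blow-up of the single diagonal term $L'(h_{k_0})$, which is exactly where the constant $\zeta(\beta)$ (and the hypothesis $\beta > 1$ ensuring convergence of the zeta series) enters, and where $\widecheck h$ is engineered to be the sharp threshold.
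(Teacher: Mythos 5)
Your plan is essentially the paper's proof of part~(i): isolate the diagonal term $L'(h_{k_0})$, bound every other term by its positive part $\beta B\, s^{-\beta-1}$, and sum the resulting series to produce the factor $\zeta(\beta)$ that defines $\widecheck h$; part~(ii) then follows as you say from part~(i) together with Lemma~\ref{lem:distances upper bound}. The one point you leave vague --- what ``a lower bound on consecutive gaps'' is --- is precisely where the paper makes the decisive choice: take $k_0$ to be the index of the \emph{minimal} gap (which is without loss of generality, since if any gap lies below $\widecheck h$ then so does the minimum), so that $\sum_{\ell=i}^{j} h_\ell \ge (j-i+1)\, h_{k_0}$ for every block containing $k_0$ and the total positive contribution carries the same power $h_{k_0}^{-\beta-1}$ as the diagonal term; without that choice the comparison in your displayed inequality does not close, since the mere bound $\sum_{\ell=i}^{j} h_\ell \ge h_{k_0}$ would leave $O(N^2)$ undamped positive terms.
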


\begin{proof} First, we observe that $(ii)$ is a direct consequence of $(i)$ and Lemma \ref{lem:distances upper bound}. To prove $(i)$, we consider an $H\in(\RR_+^\ast)^{N-1}$ and a $k_0$ be such that $h_{k_0}\,=\,\min_k\,h_k$. As in the previous lemma, it suffices to show that if $h_{k_0} < \widecheck{h}$, then $\partial_{h_{k_0}} J (H) < 0$. 

    We recall that the gradient of $J$ is computed at \eqref{eq:grad J}. In particular:
    \begin{equation}
        \partial_{h_{k_0}} J (H) = L'\big(h_{k_0}\big) + \sum_{i=1}^{k_0}\sum_{j=k_0+1}^{N-1}L'\bigg(\sum_{\ell=i}^{j}h_\ell\bigg)+\sum_{i=1}^{k_0-1}L'\bigg(\sum_{\ell=i}^{k_0}h_\ell\bigg).
    \end{equation}
    We now use the explicit formula for $L'$ given at~\eqref{eq:grad J} and this implies
    \begin{equation}
        \partial_{h_{k_0}} J (H) \leq \frac{\beta B}{|p_{k_0} - p_{k_0 -1}|^{\beta+1}} - \frac{\alpha A}{|p_{k_0}-p_{k_0-1}|^{\alpha+1}} + \sum_{i=1}^{k_0-1} \frac{\beta B}{|p_{i-1}-p_{k_0}|^{\beta+1}} + \sum_{i=1}^{k_0}\sum_{j=k_0+1}^{N-1} \frac{\beta B}{|p_{i-1}-p_{j}|^{\beta+1}}.
    \end{equation}
    where we used $\sum_{\ell=i}^{j}h_\ell=:p_j-p_{i-1}$ to ease the reading of the equation.
    We now recall that $h_{k_0}$ is the smallest distance so that we are led to
    \begin{equation}\label{Marais 2}
        \partial_{h_{k_0}} J (H) \leq \frac{\beta B}{h_{k_0}^{\beta+1}} - \frac{\alpha A}{h_{k_0}^{\alpha+1}} + \frac{\beta B}{h_{k_0}^{\beta+1}} \Bigg( \sum_{i=1}^{k_0-1} \frac{1}{|(k_0 + 1)-i|^{\beta+1}} + \sum_{i=1}^{k_0}\sum_{j=k_0+2}^N \frac{1}{|j-i|^{\beta+1}} \Bigg).
    \end{equation}
    Using standard manipulations on the double sums, we write
    \begin{equation}
        \sum_{i=1}^{k_0}\sum_{j=k_0+2}^N\frac{1}{|j-i|^{\beta+1}}= \sum_{i=1}^{k_0} \sum_{k=k_0+2-i}^{N-i} \frac{1}{k^{\beta+1}} = \sum_{k=2}^{N-1} \sum_{i=\max(1, k_0+2-k)}^{\min(k_0, N-k)}\frac{1}{k^{\beta+1}}
    \end{equation}
    Moreover, we can easily prove that
    \begin{equation}
        \min(k_0, N-k) - \max(1, k_0+2-k) + 1 \leq k-1.
    \end{equation}
    Thus,
    \begin{equation}
        \sum_{i=1}^{k_0-1} \frac{1}{|k_0 + 1-i|^{\beta+1}} + \sum_{i=1}^{k_0}\sum_{j=k_0+2}^N\frac{1}{|i-j|^{\beta+1}}\leq \sum_{k=2}^{k_0}\frac{1}{k^{\beta+1}} + \sum_{k=2}^{N-1}\frac{k-1}{k^{\beta+1}} \leq \sum_{k=2}^{+\infty}\frac{1}{k^{\beta}} = \zeta(\beta) - 1
    \end{equation}
    Plugging this back into~\eqref{Marais 2} eventually gives
    \begin{equation}
        \partial_{h_{k_0}} J (H) \leq \frac{\beta B}{h_{k_0}^{\beta+1}} - \frac{\alpha A}{h_{k_0}^{\alpha+1}} + \frac{\beta B}{h_{k_0}^{\beta+1}}(\zeta(\beta)-1) = \frac{\beta B}{h_{k_0}^{\beta+1}} \zeta(\beta) - \frac{\alpha A}{h_{k_0}^{\alpha+1}},
    \end{equation}
    so that, using the fact that $h_{k_0} < \widecheck{h}$ and $\alpha > \beta$,
    \begin{equation}
        \partial_{h_{k_0}} J (H) \geq \frac{1}{h_{k_0}^{\beta+1}} \biggl(\beta B \, \zeta(B) - \frac{\alpha A}{h_{k_0}^{\alpha - \beta}} \biggr) > \frac{1}{h_{k_0}^{\beta+1}} \biggl(\beta B \, \zeta(B) - \frac{\alpha A}{\widecheck{h}^{\alpha - \beta}} \biggr) = 0,
    \end{equation}
    which implies the conclusion.
\end{proof}


The existence of a minimizer for $J$, which is equivalent to the existence of a stationary spear structure for the nano-particle system, is a corollary of these two previous lemmas:
\begin{corollary}\label{lem:existence of minimizer of J}
    The function $J_0$ admits a global minimizer $p^*\in\RR^N$ such that $\forall\,i\leq j,\,p_i^*\leq p_j^*$, and
    $$\forall\;i\leq j,\qquad \widecheck{h}\leq p_{i+1}^*-p_i^* \;\leq\;\widehat{h}.$$
    Moreover, denoting $h_k^* = p_{k+1}^* - p_k^*$, the global minimizer satisfies for all $1 \leq k \leq N-1$
    \begin{equation} \label{eq:crit_point_1d}
        \sum_{i=1}^k \sum_{j=k}^{N-1} L' \Bigg( \sum_{\ell=i}^{j} h_\ell^* \Bigg) = 0.
    \end{equation}
\end{corollary}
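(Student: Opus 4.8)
The plan is to deduce the corollary from Lemmas~\ref{lem:distances upper bound} and~\ref{lem:distances lower bound} by a compactness argument, and then to translate the statement from the variable $H$ of interparticle distances back to the positions $P$.

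First I would invoke Lemma~\ref{lem:distances lower bound}(ii): $J^\ast = \inf_{H \in [\widecheck{h},\widehat{h}]^{N-1}} J(H)$. The cube $K := [\widecheck{h},\widehat{h}]^{N-1}$ is compact, and $J$ is continuous on $K$ because every argument $\sum_{\ell=j}^{i} h_\ell$ occurring in~\eqref{def:J L d} is $\geq \widecheck{h} > 0$, so that $L$ is evaluated only on $[\widecheck{h},+\infty)$ where it is smooth. Hence $J$ attains its minimum over $K$ at some $H^\ast = (h_1^\ast,\dots,h_{N-1}^\ast) \in K$, and by the displayed identity $J(H^\ast) = J^\ast = \inf_{(\RR_+^\ast)^{N-1}} J$. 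In particular $H^\ast$ is a global minimizer of $J$ over the whole open set $(\RR_+^\ast)^{N-1}$.

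From this I would read off the conclusions. The bounds $\widecheck{h} \le h_k^\ast \le \widehat{h}$ for all $k$ hold since $H^\ast \in K$ (equivalently, they follow from Lemmas~\ref{lem:distances upper bound}(1) and~\ref{lem:distances lower bound}(i) once $H^\ast$ is known to be a critical point). For the critical-point equation: since $H^\ast$ minimizes the differentiable function $J$ over the \emph{open} domain $(\RR_+^\ast)^{N-1}$ and belongs to this domain, necessarily $\nabla J(H^\ast) = 0$; spelling out $\partial_{h_k} J(H^\ast) = 0$ with the gradient formula~\eqref{eq:grad J} yields exactly~\eqref{eq:crit_point_1d}. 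Finally I would set $p_0^\ast := 0$ and $p_k^\ast := \sum_{\ell=1}^{k} h_\ell^\ast$ for $1 \le k \le N-1$, so that $p^\ast \in \RR^N$ has nondecreasing coordinates with consecutive gaps $p_i^\ast - p_{i-1}^\ast = h_i^\ast \in [\widecheck h,\widehat h]$ and, more generally, $p_j^\ast - p_{i-1}^\ast = \sum_{\ell=i}^{j} h_\ell^\ast$. Since $J(H) = J_0(P)$ on ordered configurations by the definition~\eqref{def:J L d}, and since $J_0$ is invariant under permutations of the particles (hence its infimum over all admissible configurations equals its infimum over ordered ones) and under global translations, the point $p^\ast$ is a global minimizer of $J_0$, and~\eqref{eq:crit_point_1d} for $H^\ast$ is precisely $\partial_{h_k} J(H^\ast) = 0$.

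I do not expect a serious obstacle: the substantive work is already contained in the two distance lemmas, which confine the problem to a compact box without changing the infimum. The single point deserving care is logical rather than computational — the minimizer produced on the cube $K$ is a priori only a constrained minimizer there and could sit on $\partial K$; what turns it into a genuine critical point of $J$ is that, by Lemma~\ref{lem:distances lower bound}(ii), its value equals the global infimum over the \emph{open} domain, so it is in fact an unconstrained minimizer and the first-order condition applies. The remaining bookkeeping (passing between $H$ and $P$ and using the permutation and translation invariances of $J_0$) is routine.
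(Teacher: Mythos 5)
Your proposal is correct and follows essentially the same route as the paper: restrict the infimum to the compact box $[\widecheck{h},\widehat{h}]^{N-1}$ via Lemmas~\ref{lem:distances upper bound} and~\ref{lem:distances lower bound}, extract a minimizer by continuity, observe that it is in fact an unconstrained global minimizer on the open domain so that $\nabla J(H^\ast)=0$ gives~\eqref{eq:crit_point_1d}, and translate back to positions. Your explicit handling of the point that the constrained minimizer on the cube could a priori lie on $\partial K$ is a welcome clarification of a step the paper leaves implicit.
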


\begin{proof} 
    We point out again that $J_0(p_1,\dots, p_N)=J_0(p_1+\lambda,\dots, p_N+\lambda)$ for all $\lambda\in\RR$. 
    We remark now that $J$ has a lower bound on $(0, \infty)^{N-1}$, because it is a finite sum of the same function $L$ which has a lower bound $\ell_m$ on $\RR_+^*$.
    Then, from Lemma \ref{lem:distances lower bound}, we know that we can restrict ourselves to $[\widecheck{h}, \widehat{h}]^{N-1}$ when looking for the infimum of $J$ on $(0, \infty)^{N-1}$.
    Since $J$ is of class $\mathscr{C}^\infty$ on $(\RR_+^\ast)^N$, we can conclude that $J$ has a minimum on the compact set $[\widecheck{h}, \widehat{h}]^{N-1}$, leading to the conclusion.
    %
\end{proof}
As a direct consequence of this result, we get the existence of a solution for the problem studied as stated in Theorem~\ref{thrm:spear}-$(i)$. Lemmas \ref{lem:distances upper bound} and \ref{lem:distances lower bound} give the estimates of Theorem~\ref{thrm:spear}-$(ii)$.


The numbers $\widecheck{h}$ and $\widehat{h}$ appearing in our study depend on the parameter $\alpha$. 
In the soft sphere model, the parameter $A$ is often taken as $a R^\alpha$ where $R$ is the radius of the sphere and $a$ is a fixed coefficient (usually called the \textit{empirical repellency coefficient}).
Then, in the asymptotic $\alpha\to+\infty$, we recover a hard sphere model where all the distances between successive nanoparticles are equal to $R$ (the diameter of the hard sphere).

\begin{lemma}\label{lem:hard sphere}
    Let $a, R > 0$ fixed. Let $\alpha>\beta$ and $A = a R^\alpha$ and define $\widecheck{h}$ and $\widehat{h}$ with~\eqref{def:the H gang}. Then,
    \begin{equation} 
        \widecheck{h},\;\; \widehat{h}\quad\longrightarrow\quad R \quad\qquad\text{as }\alpha\to+\infty.
    \end{equation}
\end{lemma}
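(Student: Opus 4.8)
The plan is to take logarithms and reduce everything to the elementary fact that $\log\alpha = o(\alpha)$ as $\alpha\to+\infty$. First I would substitute $A = aR^\alpha$ into the definitions \eqref{def:the H gang}, which gives
\begin{equation}
\widehat{h} = R^{\frac{\alpha}{\alpha-\beta}}\left(\frac{\alpha a}{\beta B}\right)^{\frac{1}{\alpha-\beta}},\qquad
\widecheck{h} = R^{\frac{\alpha}{\alpha-\beta}}\left(\frac{\alpha a}{\beta B\,\zeta(\beta)}\right)^{\frac{1}{\alpha-\beta}} .
\end{equation}
Taking logarithms, I obtain
\begin{equation}
\log\widehat{h} = \frac{\alpha}{\alpha-\beta}\,\log R \;+\; \frac{\log\alpha + \log\!\big(a/(\beta B)\big)}{\alpha-\beta},
\end{equation}
and the analogous identity for $\log\widecheck{h}$ with the additional term $-\dfrac{\log\zeta(\beta)}{\alpha-\beta}$.

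The next step is to pass to the limit termwise, keeping in mind that $a$, $\beta$, $B$ and $R$ are all held fixed while $\alpha\to+\infty$ (in particular $\beta>1$ is fixed, so $\zeta(\beta)$ is a fixed finite constant). Then $\frac{\alpha}{\alpha-\beta}\to 1$, so the first term tends to $\log R$; and in the second term the numerator grows like $\log\alpha$ while the denominator grows like $\alpha$, so the fraction tends to $0$ (likewise the $-\log\zeta(\beta)/(\alpha-\beta)$ correction, whose numerator is bounded). Hence $\log\widehat{h}\to\log R$ and $\log\widecheck{h}\to\log R$, and by continuity of $\exp$ we conclude $\widehat{h},\widecheck{h}\to R$.

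There is essentially no real obstacle here: the only point deserving a line of care is that the exponent $1/(\alpha-\beta)$ must be handled through the logarithm rather than estimated directly, and that one should explicitly record that all parameters other than $\alpha$ are frozen, so that $\log a$, $\log B$, $\log\zeta(\beta)$ contribute only bounded quantities which are wiped out by the division by $\alpha-\beta$. One could equivalently avoid logarithms by writing $\widehat{h}/R = (\alpha a/(\beta B))^{1/(\alpha-\beta)}\,R^{\beta/(\alpha-\beta)}$ and invoking that any expression $c(\alpha)^{1/(\alpha-\beta)}$ with $c(\alpha)$ of polynomial growth converges to $1$, but the logarithmic route is the cleanest to write up.
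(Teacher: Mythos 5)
Your proof is correct and follows essentially the same route as the paper, which simply notes that after substituting $A=aR^\alpha$ the conclusion reduces to the fact that $t\mapsto t^{1/(t-\beta)}\to 1$ as $t\to+\infty$ — exactly the elementary limit your logarithmic computation makes explicit. Your write-up is just a more detailed version of the paper's one-line argument.
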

\begin{proof}
It is direct to check from the definition of $\widecheck{h}$ and $\widehat{h}$ that they both converge to $R$ as $\alpha\to+\infty$ since the function $t\mapsto t^{\frac{1}{t - \beta}}$ converges to $1$ at $+\infty$.
\end{proof}

\subsection{Convexity property for uniqueness result}

The main purpose of this section is to obtain the uniqueness result stated in Theorem~\ref{thrm:spear}-$(iii)$. 
The main idea of the proof is to establish a strong convexity result on the function $J$ to obtain the uniqueness of the minimizer.

For this, we are naturally led to use convexity of the Lennard Jones potential $L$, and it is convenient to introduce the constants:
\begin{equation}\begin{split}        
    h^\dag &:= \Bigg( \frac{\alpha(\alpha + 1)A}{\beta(\beta + 1)B} \Bigg)^{\frac{1}{\alpha - \beta}} = \widecheck{h}\Bigg( \frac{(\alpha + 1) \zeta (\beta)}{\beta + 1} \Bigg)^\frac{1}{\alpha - \beta}, \\
    h^\ddag &:= \Bigg( \frac{\alpha(\alpha + 1) (\alpha + 2) A}{\beta(\beta + 1) (\beta + 2) B} \Bigg)^\frac{1}{\alpha - \beta} = \widecheck{h} \Bigg( \frac{(\alpha + 1) (\alpha + 2) \, \zeta (\beta)}{(\beta + 1) (\beta + 2)} \Bigg)^\frac{1}{\alpha - \beta}.
\end{split}\end{equation}
It is direct to check that $L'' < 0$ in $[h^\dag, \infty)$, $L'' > 0$ in $(0, h^\dag]$, and that $L''$ is increasing in $(h^\ddag, \infty)$ and is decreasing in $(0, h^\ddag)$.
Notice that for $\alpha $ large enough,
\begin{equation} \label{est:hhh} 
    h^\dag, h^\ddag  < 2 \widecheck{h}.
\end{equation}

\subsubsection{Positivity of the Hessian matrix}\label{sec:cvx}

We are going to prove here that the function $J$ is strongly convex in the admissible set $[\widecheck{h}, \widehat{h}]^{N-1}$ and for that purpose we study the Hessian matrix of $J$.
However, this study turns out to be much more difficult when $\alpha$ is close to $\beta$, so now we work with large values of $\alpha$ (detailed later).
We denote the Hessian matrix of $J$ at the point $H=(h_1,\dots,h_{N-1})$ by $\nabla^2J(H)$. The second derivative of $J$ with respect to $h_{\mu}$ and $h_{\nu}$ for the indices $\mu,\nu=1,\dots, N-1$ is noted $\partial^2_{{h_\mu},{h_\nu}}J(H)$ or more compactly $\partial^2_{\mu\nu}J(H).$ 

Let $\alpha_{\dag}(\beta)$ be defined as the largest zero of $F_{\beta}$, defined by
\begin{equation}\label{eq:def_alpha_dag}
F_{\beta}(\alpha):=\alpha-\beta-\zeta(\beta)^{\frac{\beta+2}{\alpha-\beta}}(\beta+1)(\zeta(\beta+1)+\zeta(\beta)-2).
\end{equation}

Note that $\alpha_{\dag}$ is well defined since $F_{\beta}(\alpha)\to +\infty$ as $\alpha\to+\infty $ and $F_\beta (\alpha) \to - \infty$ as $\alpha \to \beta^+$.

\begin{lemma} \label{lem:Hess_prop}
    Let $H \in [\widecheck{h}, \widehat{h}]^{N-1}$. The Hessian of the function $J$ at point $H$, that we note $\nabla^2J(H)$, satisfies the following bounds: 
    \begin{itemize}
        \item All the diagonal terms satisfy
            \begin{equation} \label{eq:Hess_diag}
                \partial^2_{\mu\mu} J (H) \geq \Lambda_d := \frac{\beta B}{\widehat{h}^{\beta + 2}} (\alpha - \beta) - \frac{\beta (\beta + 1)B}{\widecheck{h}^{\beta + 2}} \big(\zeta (\beta + 1) - 1\big).
            \end{equation}
        \item All the non-diagonal terms are non-positive and satisfy an estimate decreasing with respect to their distance to the diagonal. More precisely:
            \begin{equation} \label{eq:Hess_nodiag}
                0 \geq \partial^2_{\mu\nu} J (H) \geq  - \frac{\Lambda_{nd}}{|\nu-\mu|^{\beta}} \quad \text{where} \quad \Lambda_{nd} :=  \frac{(\beta + 1)B}{\widecheck{h}^{\beta + 2}} >0.
            \end{equation}

        \item The Hessian is a uniformly diagonally dominant matrix: for all $\mu$,
            \begin{equation} \label{eq:Hess_diag_dominant}
                | \partial^2_{\mu\mu} J (H) | - \sum_{\substack{\nu=1\\\nu \neq \mu}}^{N-1} | \partial^2_{\mu\nu} J (H) |
                \geq \Lambda_1,
            \end{equation}
            where
            \begin{equation}
                \Lambda_1\;:=\;
                \frac{\beta B}{\widehat{h}^{\beta + 2}} (\alpha - \beta) - \frac{\beta(\beta + 1)B}{\widecheck{h}^{\beta + 2}} \big(\zeta (\beta + 1)+ \zeta(\beta)-2 \big).
            \end{equation}
    \end{itemize}
    The constant $\alpha_\dag(\beta)$ is defined so that if $\alpha > \alpha_\dag$, then $\Lambda_1, \Lambda_d >0$.
\end{lemma}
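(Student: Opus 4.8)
The plan is to compute $\partial^2_{\mu\nu} J(H)$ explicitly from the formula \eqref{eq:grad J} for $\partial_{h_k} J$, and then bound each of the three quantities termwise, using the monotonicity properties of $L''$ recorded just before the lemma together with the a priori bounds $\widecheck h \le h_\ell \le \widehat h$ valid on the admissible set. Differentiating $\partial_{h_\mu} J(H) = \sum_{i=1}^\mu \sum_{j=\mu}^{N-1} L'\big(\sum_{\ell=i}^j h_\ell\big)$ with respect to $h_\nu$, a chain-rule computation gives $\partial^2_{\mu\nu} J(H) = \sum L''\big(\sum_{\ell=i}^j h_\ell\big)$, where the sum runs over those pairs $(i,j)$ with $i \le \min(\mu,\nu)$ and $j \ge \max(\mu,\nu)$ — i.e. the "interval" $[i,j]$ must contain both $\mu$ and $\nu$ (with the endpoint conventions matching those in \eqref{eq:grad J}). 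So each entry of the Hessian is a sum of values of $L''$ over nested intervals, and I expect to organize the whole proof around this combinatorial picture.

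For the diagonal bound \eqref{eq:Hess_diag}: in $\partial^2_{\mu\mu} J(H)$, isolate the single term coming from the interval $[\mu,\mu]$, which contributes $L''(h_\mu)$; since $h_\mu \le \widehat h$ and $L''$ is positive and decreasing near the origin (one checks $\widehat h < h^\dag$ for $\alpha$ large, via \eqref{est:hhh}), this term is bounded below by $L''(\widehat h) = \frac{\alpha(\alpha+1)A}{\widehat h^{\alpha+2}} - \frac{\beta(\beta+1)B}{\widehat h^{\beta+2}}$, which after using the definition of $\widehat h$ equals $\frac{\beta B}{\widehat h^{\beta+2}}(\alpha-\beta)$. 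Every other interval $[i,j] \supsetneq \{\mu\}$ has length $\sum_{\ell=i}^j h_\ell \ge 2\widecheck h > h^\dag$ (again \eqref{est:hhh}), so $L''$ is negative there, and $|L''(s)| \le \frac{\beta(\beta+1)B}{s^{\beta+2}}$ for $s \ge h^\dag$; summing the crude bound $\sum_{\ell=i}^j h_\ell \ge |j-i+1|\widecheck h$ over all such intervals containing $\mu$ produces at most $\frac{\beta(\beta+1)B}{\widecheck h^{\beta+2}}(\zeta(\beta+1)-1)$ after a standard double-sum/zeta estimate of the same flavor as in Lemma \ref{lem:distances lower bound}. For the off-diagonal bound \eqref{eq:Hess_nodiag}, every interval contributing to $\partial^2_{\mu\nu} J$ with $\mu \ne \nu$ contains both indices, hence has length $\ge (|\nu-\mu|+1)\widecheck h \ge \widecheck h$, so all the $L''$ values are negative — giving $\partial^2_{\mu\nu} J \le 0$ — and their absolute values sum to something $\lesssim |\nu-\mu|^{-\beta}$ after the same kind of geometric counting of nested intervals. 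The diagonally-dominant bound \eqref{eq:Hess_diag_dominant} then follows by combining: take the lower bound on $|\partial^2_{\mu\mu}J|$ and subtract $\sum_{\nu \ne \mu} \frac{\Lambda_{nd}}{|\nu-\mu|^\beta} \le \Lambda_{nd}\cdot 2(\zeta(\beta)-1)$, and collect constants to recognize the resulting expression as $\Lambda_1$; the definition of $\alpha_\dag(\beta)$ via the largest zero of $F_\beta$ in \eqref{eq:def_alpha_dag} is exactly engineered so that the positive part $\frac{\beta B}{\widehat h^{\beta+2}}(\alpha-\beta)$ dominates the negative part once $\alpha>\alpha_\dag$ — one verifies this by substituting the definitions of $\widehat h, \widecheck h$ and factoring out $\frac{\beta B}{\widecheck h^{\beta+2}}$, at which point $\Lambda_1 > 0 \iff F_\beta(\alpha) > 0$, and similarly (more easily) for $\Lambda_d$.

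The main obstacle I anticipate is bookkeeping rather than conceptual: getting the index ranges in the second-derivative formula exactly right (the boundary cases $i=\mu$, $j=\mu$, and the subtlety that in \eqref{eq:grad J} the outer sums are $\sum_{i=1}^k \sum_{j=k}^{N-1}$ so the interval $[\mu,\mu]$ genuinely appears once and with multiplicity one), and then carrying out the nested-interval counting so that the zeta-function constants come out precisely as $\zeta(\beta+1)-1$ and $\zeta(\beta)-1$ rather than something weaker. A secondary care point is justifying the sign of $L''$ on the relevant ranges: this needs $\widehat h < h^\dag$ and $2\widecheck h > h^\dag$ for $\alpha$ large, which is \eqref{est:hhh}, but one should make explicit that the threshold on $\alpha$ for these inequalities is absorbed into $\alpha_\dag(\beta)$ (or is anyway below it). Once the sign structure is pinned down, the remaining estimates are routine applications of $\widecheck h \le h_\ell \le \widehat h$ and the elementary inequality $\sum_{\ell=i}^{j} h_\ell \ge (j-i+1)\widecheck h$, plus the same type of double-sum manipulation already used in the proof of Lemma \ref{lem:distances lower bound}, so I would reuse that bound $\min(k_0,N-k)-\max(1,k_0+2-k)+1 \le k-1$ essentially verbatim.
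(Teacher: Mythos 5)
Your computation of the second derivatives, your treatment of the diagonal term (isolating $L''(h_\mu)\ge L''(\widehat h)=\frac{\beta B}{\widehat h^{\beta+2}}(\alpha-\beta)$ and bounding the remaining intervals by a multiplicity-$\kappa$ count to get $\zeta(\beta+1)-1$), and your off-diagonal sign and decay estimates all match the paper's proof. The gap is in the third bullet. You propose to obtain diagonal dominance by summing the individual off-diagonal bounds, claiming $\sum_{\nu\neq\mu}\frac{\Lambda_{nd}}{|\nu-\mu|^{\beta}}\le \Lambda_{nd}\cdot 2(\zeta(\beta)-1)$. That inequality is false: the sum over $\nu\neq\mu$ includes the terms $|\nu-\mu|=1$, so the best you can say is $\sum_{\nu\neq\mu}|\nu-\mu|^{-\beta}\le 2\zeta(\beta)$, which already exceeds $2(\zeta(\beta)-1)$ (for $\beta=3$ the nearest neighbours alone contribute $2$, while $2(\zeta(3)-1)\approx 0.40$). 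More importantly, even with the correct constant this route yields an off-diagonal row-sum bound of $\frac{2(\beta+1)B}{\widecheck h^{\beta+2}}\zeta(\beta)$, whereas the stated $\Lambda_1$ requires the much smaller bound $\frac{\beta(\beta+1)B}{\widecheck h^{\beta+2}}\big(\zeta(\beta)-1\big)$; for $\beta=3$ your bound is roughly four times too large, so you cannot ``collect constants to recognize $\Lambda_1$'', and the definition of $\alpha_\dag$ in \eqref{eq:def_alpha_dag} is calibrated to the sharper constant.

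The paper avoids this loss by \emph{not} summing the entrywise bounds \eqref{eq:Hess_nodiag}: it estimates the whole row sum $\sum_{\nu\neq\mu}\partial^2_{\mu\nu}J(H)$ directly. Writing it as a triple sum over $(\nu,i,j)$ and exchanging the order of summation, each interval of combinatorial length $r=j-i$ appears with total multiplicity at most $\frac{r(r-1)}{2}+\frac{r(r-1)}{2}\le r^2$, and only lengths $r\ge 2$ occur; since all these terms involve $L''$ evaluated beyond $h^{\ddag}$ where $L''$ is negative and increasing, this gives
\begin{equation}
\sum_{\nu\neq\mu}\partial^2_{\mu\nu}J(H)\;\ge\;\sum_{r=2}^{N-1}r^{2}\,L''(r\widecheck h)\;\ge\;-\frac{\beta(\beta+1)B}{\widecheck h^{\beta+2}}\big(\zeta(\beta)-1\big),
\end{equation}
because $\sum_{r\ge 2}r^{2}\cdot r^{-(\beta+2)}=\zeta(\beta)-1$. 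The entrywise bound \eqref{eq:Hess_nodiag} also loses a factor through the series--integral comparison $\sum_{\kappa\ge m+1}\kappa^{-(\beta+1)}\le m^{-\beta}/\beta$, and summing those bounds over $\nu$ double-counts intervals relative to the exact multiplicity $r^2$. You should replace your final step by this direct row-sum reorganization; the rest of your argument then goes through as written.
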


\begin{proof}
    From the expression of the functional $J$, we can compute its second partial derivatives:
    \begin{equation} \label{eq:J''}
        \partial^2_{\mu\nu} J (H) = \sum_{i=1}^{\min(\nu, \mu)} \sum_{j=\max(\nu, \mu)+1}^N L'' \bigg( \sum_{\ell=i}^{j-1} h_\ell \bigg).
    \end{equation}
    Now, if $\mu \ne \nu$, all the indices appearing in \eqref{eq:J''} verify $j -i \ge 2$, which yields
    \begin{equation}
        \sum_{\ell=i}^{j-1} h_\ell \geq (j-i) \widecheck{h} \geq 2 \widecheck{h} > h^\dag.
    \end{equation}
    Thus, every term in \eqref{eq:J''} is negative. Moreover, assuming for instance that $\mu > \nu$,
    \begin{equation}
        \partial^2_{\mu\nu} J (H) \geq \sum_{i=1}^{\nu} \sum_{j=\mu+1}^N L'' ((j-i) \widecheck{h})= \sum_{i=1}^{\nu} \sum_{\kappa=\mu+1-i}^{N-i} L'' (\kappa \widecheck{h})=\sum_{\kappa=\mu-\nu+1}^{N-1} \;\sum_{i=\max(1, \mu-\kappa+1)}^{\min(\nu, N - \kappa)} L'' (\kappa \widecheck{h})
    \end{equation}
    Therefore, using that $\min(\nu, N - \kappa) - \max(1, \mu-\kappa+1) \le \kappa + \nu - \mu$, we get
    \begin{equation}
            \partial^2_{\mu\nu} J (H) \geq \sum_{\kappa=\mu-\nu+1}^{N-1} (\kappa + \nu - \mu) L'' (\kappa \widecheck{h}) \geq \sum_{\kappa=\mu-\nu+1}^{N-1} \kappa L'' (\kappa \widecheck{h}).
    \end{equation}
Using the explicit expression for $L''$ in the above estimate leads to

    \begin{equation}
        \partial^2_{\mu\nu} J (H) \geq - \frac{\beta (\beta + 1) B}{\widecheck{h}^{\beta + 2}} \sum_{\kappa=\mu-\nu+1}^{N-1} \kappa^{-(\beta+1)} \geq - \frac{\beta (\beta + 1) B}{\widecheck{h}^{\beta + 2}} \sum_{\kappa=\mu-\nu+1}^{\infty} \kappa^{-(\beta+1)}.
    \end{equation} 
Moreover, by series-integral comparison, there holds
\begin{equation}
    \sum_{\kappa=\mu-\nu+1}^{\infty} \kappa^{-(\beta+1)} \leq \int_{\mu - \nu} t^{- (\beta + 1)} \diff t = \frac{(\mu - \nu)^{- \beta}}{\beta}.
\end{equation}
Therefore, we get
\begin{equation}
    \partial^2_{\mu\nu} J (H) \geq - \frac{(\beta + 1)B}{\widecheck{h}^{\beta + 2}} (\mu - \nu)^{- \beta}.
\end{equation}
This gives the conclusion to the second point of the lemma.

We now observe that, for some $\mu$ fixed,
    \begin{equation}\begin{split}
        \sum_{\nu \neq \mu} \partial^2_{\mu\nu} J (H) &= \sum_{\nu = 1}^{\mu - 1} \sum_{i=1}^{\nu} \sum_{j=\mu+1}^N L'' \bigg( \sum_{\ell=i}^{j-1} h_\ell  \bigg) + \sum_{\nu = \mu + 1}^{N-1} \sum_{i=1}^{\mu} \sum_{j=\nu+1}^N L'' \bigg( \sum_{\ell=i}^{j-1} h_\ell  \bigg) \\
            &= \sum_{i=1}^{\mu-1} \sum_{\nu = i}^{\mu - 1} \sum_{j=\mu+1}^N L'' \bigg( \sum_{\ell=i}^{j-1} h_\ell  \bigg) + \sum_{j=\mu+2}^N \sum_{\nu = \mu + 1}^{j-1} \sum_{i=1}^{\mu} L'' \bigg( \sum_{\ell=i}^{j-1} h_\ell  \bigg) \\
            &= \sum_{i=1}^{\mu-1} (\mu - i) \sum_{j=\mu+1}^N L'' \bigg( \sum_{\ell=i}^{j-1} h_\ell  \bigg) + \sum_{j=\mu+2}^N (j - \mu - 1) \sum_{i=1}^{\mu} L'' \bigg( \sum_{\ell=i}^{j-1} h_\ell  \bigg).
    \end{split}\end{equation}
  From this computation, since $\sum_{\ell=i}^{j-1} h_\ell  \geq (j-i) \widecheck{h} \geq 2 \widecheck{h}$ in every term of the two previous double sums, we are in the interval in which $L''$ is increasing and negative in view of \eqref{est:hhh}. Thus:
\begin{equation}
    \sum_{\nu \neq \mu} \partial^2_{\mu\nu} J (H) \geq \sum_{i=1}^{\mu-1} (\mu - i) \sum_{j=\mu+1}^N L'' \bigg( (j-i) \widecheck{h} \bigg) + \sum_{j=\mu+2}^N (j - \mu - 1) \sum_{i=1}^{\mu} L'' \bigg( (j-i) \widecheck{h} \bigg)
\end{equation}
We continue the estimate by relabeling the indices in the sums:
\begin{equation}\begin{split}
        \sum_{\nu \neq \mu} \partial^2_{\mu\nu} J (H)
            &\geq \sum_{\kappa=1}^{\mu-1} \sum_{k=1}^{N-\mu} \kappa L'' \bigg( (k+\kappa) \widecheck{h} \bigg) + \sum_{k=1}^{N-\mu-1} \sum_{\kappa=1}^{\mu} k L'' \bigg( (k+\kappa) \widecheck{h} \bigg) \\
            &\geq \sum_{\kappa=1}^{\mu-1} \sum_{r=1+\kappa}^{N-\mu+\kappa} \kappa L'' ( r \widecheck{h} ) + \sum_{k=1}^{N-\mu-1} \sum_{r=1+k}^{\mu+k} k L'' ( r \widecheck{h} )
    \end{split}
\end{equation}
We now swap the two sums and get
\begin{align}
        \sum_{\nu \neq \mu} \partial^2_{\mu\nu} J (H)
            &\geq \sum_{r=2}^{N-1} \sum_{\kappa=\max(1, r + \mu - N)}^{\min(\mu-1, r-1)} \kappa L'' ( r \widecheck{h} ) + \sum_{r=2}^{N-1} \sum_{k=\max(1, r - \mu)}^{\max(N-\mu-1, r-1)} k L'' ( r \widecheck{h} ) \nonumber \\
            &\geq \sum_{r=2}^{N-1} \frac{r (r-1)}{2} L'' ( r \widecheck{h} ) + \sum_{r=2}^{N-1} \frac{r (r-1)}{2} L'' ( r \widecheck{h} ) \geq \sum_{r=2}^{N-1} r^2 L'' ( r \widecheck{h} ) \nonumber \\
            &\geq - \frac{\beta(\beta + 1)B}{\widecheck{h}^{\beta + 2}} \big(\zeta(\beta) - 1\big). \label{et la bete}
\end{align}
In the last inequality, we used the explicit formula for $L''$ and the definition of the zeta function $\zeta$. The diagonal terms of the Hessian can be estimated in a similar fashion. To start with, we compute:
    \begin{equation}\begin{split}
        \partial^2_{\mu\mu} J (H) &= \sum_{i=1}^{\mu} \sum_{j=\mu+1}^N L'' \bigg( \sum_{\ell=i}^{j-1} h_\ell \bigg) = L'' (h_\mu) + \sum_{i=1}^{\mu} \sum_{j=\max(i+2, \mu+1)}^N L'' \bigg( \sum_{\ell=i}^{j-1} h_\ell \bigg).
    \end{split}\end{equation}
    Since $h_\mu \in [\widecheck{h}, \widehat{h}]$ and $h^\dag > \widehat{h}$ and $h^\ddag > \widehat{h}$, the first term is easy to estimate: \begin{equation}L'' (h_\mu) \geq L'' (\widehat{h}) = \frac{\beta B}{\widehat{h}^{\beta + 2}} (\alpha - \beta).\end{equation}
    As for the other terms, they are all negative and can be estimated following the same ideas leading to \eqref{et la bete}:
    \begin{align}
      \MoveEqLeft  \sum_{i=1}^{\mu} \sum_{j=\max(i+2, \mu+1)}^N L'' \bigg( \sum_{\ell=i}^{j-1} h_\ell \bigg) \geq \sum_{i=1}^{\mu} \sum_{j=\max(i+2, \mu+1)}^N L'' \bigg( (j-i) \widecheck{h} \bigg) \\
            &\geq \sum_{i=1}^{\mu} \sum_{\kappa=\max(2, \mu-i+1)}^{N-i} L'' ( \kappa \widecheck{h} ) \geq \sum_{\kappa=2}^{N-1} \sum_{i=\max(1, \mu - \kappa + 1)}^{\min(N-\kappa, \mu)} L'' ( \kappa \widecheck{h} ) \geq \sum_{\kappa=2}^{N-1} \kappa L'' ( \kappa \widecheck{h} ) \\
            &\geq - \frac{\beta (\beta + 1)B}{\widecheck{h}^{\beta + 2}} \big(\zeta (\beta + 1) - 1\big).
   \end{align}
    This gives us the first point of the lemma.\medskip

    The last point to prove is the diagonally dominant property \eqref{eq:Hess_diag_dominant}. It is obtained directly from the above estimate (for the diagonal term) combined with~\eqref{et la bete} (for the non-diagonal terms).
\end{proof}

\subsubsection{Uniqueness result}
Concerning the uniqueness of the minimizer for $J$, it can now be proved using the strong convexity property of the functional $J$ induced by Lemma~\ref{lem:Hess_prop} when $\alpha>\alpha_\dag(\beta)$.

\begin{lemma}\label{lem:uniqueness}
    If $\alpha>\alpha_\dag(\beta)$, the function $H\mapsto J (H)$ is strongly convex in $[\widecheck{h}, \widehat{h}]^{N-1}$. Furthermore, $J$ admits a unique critical point which is its global minimizer.
\end{lemma}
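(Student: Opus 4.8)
The plan is to read off strong convexity on the box $K := [\widecheck{h},\widehat{h}]^{N-1}$ directly from the Hessian estimates of Lemma~\ref{lem:Hess_prop} via Gershgorin's disk theorem, and then to promote this local strong convexity into a \emph{global} uniqueness statement by exploiting the fact that, by Lemmas~\ref{lem:distances upper bound} and~\ref{lem:distances lower bound}, every critical point of $J$ on $(\RR_+^\ast)^{N-1}$ is trapped inside the convex set $K$.

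\emph{Step 1: strong convexity on $K$.} Fix $\alpha>\alpha_\dag(\beta)$ and $H\in K$. By Lemma~\ref{lem:Hess_prop}, the symmetric matrix $\nabla^2 J(H)$ has diagonal entries $\geq \Lambda_d>0$ and is uniformly diagonally dominant with gap $\Lambda_1>0$, i.e.\ $|\partial^2_{\mu\mu}J(H)|-\sum_{\nu\neq\mu}|\partial^2_{\mu\nu}J(H)|\geq \Lambda_1$ for every $\mu$. Gershgorin's disk theorem then gives, for every eigenvalue $\lambda$ of $\nabla^2 J(H)$, an index $\mu$ with
\[
\lambda \;\geq\; \partial^2_{\mu\mu}J(H)-\sum_{\nu\neq\mu}|\partial^2_{\mu\nu}J(H)| \;\geq\; \Lambda_1 ,
\]
so that $\nabla^2 J(H)\succeq \Lambda_1 I_{N-1}$ for all $H\in K$. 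Since $K$ is convex, this is exactly $\Lambda_1$-strong convexity of $J$ on $K$: for $H_0,H_1\in K$ the map $g(t):=J\big((1-t)H_0+tH_1\big)$ satisfies $g''(t)\geq \Lambda_1\,|H_1-H_0|^2$ on $[0,1]$.

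\emph{Step 2: uniqueness.} Let $H_0,H_1\in(\RR_+^\ast)^{N-1}$ be two critical points of $J$. By Lemma~\ref{lem:distances upper bound}(1) and Lemma~\ref{lem:distances lower bound}(i) both lie in $K$, hence so does the whole segment $[H_0,H_1]$ by convexity. With $g$ as above, $g'(0)=\nabla J(H_0)\cdot(H_1-H_0)=0$ and $g'(1)=\nabla J(H_1)\cdot(H_1-H_0)=0$, while $g''\geq \Lambda_1|H_1-H_0|^2$ on $[0,1]$. If $H_0\neq H_1$ this makes $g'$ strictly increasing, contradicting $g'(0)=g'(1)$; hence $H_0=H_1$. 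Thus $J$ has \emph{at most} one critical point on $(\RR_+^\ast)^{N-1}$. Existence of one is already known: the minimizer $H^\ast$ produced by Corollary~\ref{lem:existence of minimizer of J} lies in the open set $(\RR_+^\ast)^{N-1}$ and satisfies $\nabla J(H^\ast)=0$ (this is~\eqref{eq:crit_point_1d}). Therefore $H^\ast$ is the unique critical point of $J$ and it is its global minimizer; transporting this back through $J_0$ and the parametrization $\mS_P$ yields the uniqueness asserted in Theorem~\ref{thrm:spear}-(iii).

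\emph{Main obstacle.} The analytic work is already packaged in Lemma~\ref{lem:Hess_prop}; what remains is the conceptual point that $J$ need not be convex on all of $(\RR_+^\ast)^{N-1}$, so one cannot directly invoke the textbook implication ``strongly convex $\Rightarrow$ unique critical point''. The resolution — and the only genuinely delicate ingredient — is the a priori confinement of all critical points to the convex box $K$, on which strong convexity does hold, so that the one-dimensional comparison along segments contained in $K$ goes through. One should also keep in mind that the standing hypothesis of Lemma~\ref{lem:Hess_prop} (namely $\alpha>\alpha_\dag(\beta)$, which by construction of $\alpha_\dag$ as the largest root of $F_\beta$ guarantees $\Lambda_1,\Lambda_d>0$, together with~\eqref{est:hhh}) is exactly what we need here, so no further lower bound on $\alpha$ is required at this stage.
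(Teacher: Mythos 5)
Your proof is correct and follows essentially the same route as the paper: Gershgorin's theorem applied to the Hessian bounds of Lemma~\ref{lem:Hess_prop} gives a uniform spectral lower bound $\Lambda_1>0$ on the box $[\widecheck{h},\widehat{h}]^{N-1}$, and the confinement of all critical points to that box (Lemmas~\ref{lem:distances upper bound} and~\ref{lem:distances lower bound}) then yields uniqueness. Your Step~2 merely spells out in detail the one-dimensional segment argument that the paper leaves implicit.
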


\begin{proof}
    From Lemma \ref{lem:Hess_prop} and the well-known Gershgorin circle theorem, it is direct to show that the lowest eigenvalue of $\nabla^2 J(H)$ is larger than $\Lambda_1$ for all $H\in[\widecheck{h}, \widehat{h}]^{N-1}$. We recall that for $\alpha>\alpha_\dag$ we have $\Lambda_1$ that is a positive constant. Thus $J$ is strongly convex on this subset. On the other hand, Lemmas \ref{lem:distances upper bound} and \ref{lem:distances lower bound} show that any critical point has to be in this subset, which proves the uniqueness of the critical point, and then also of the minimizer.
\end{proof}

\subsection{Asymptotic limit on the distances}


\subsubsection{A refined Gershgorin estimate}
Before we go further toward the proof of the convergence result claimed by Theorem~\ref{thrm:spear}-$(iii)$, we state here a refinement of the classical Gershgorin circle theorem: more precisely, we quantify how the (polynomial) decay of coefficients of a matrix away from the diagonal implies a similar decay for the coefficients of its inverse matrix.

\begin{proposition}[Quantitative Gershgorin estimate] \label{prop:matrix_inverse}
Let $\gamma>1$ and $\delta = 2(1+2^\gamma) \zeta(\gamma)$.  Let $c>0$ and $d>0$ be such that
\begin{equation}\label{def:r+}
    r_+:= \frac{c}{d} \frac{\delta + \sqrt{\delta^2 + 8 \zeta (2 \gamma)}}{2} < 1,
\end{equation}
where $\zeta$ is the standard zeta function.
Let $N \in \m N^*$ and  $A \in \mathcal M_N(\mathbb C)$ such that $A$ is strictly diagonally dominant: 
\begin{equation}
\forall\,i=1,\dots,N, \qquad \sum_{\substack{j=1\\j\neq i}}^N \big|A_{ij}|\,<\,\big|A_{ii}\big|.\end{equation}
and such that for all $i \ne j$,
\begin{equation} 
|A_{ij}| \le \frac{c}{|i-j|^\gamma} \quad \text{and} \quad  |A_{ii}| \ge d. \end{equation}
Then $A$ is invertible and there exists $\kappa = \kappa(\gamma, \frac{c}{d})$ (bounded as $\frac{c}{d} \to 0$) such that for all $i \ne j$,
\begin{equation} |(A^{-1})_{ij}| \le \kappa  \frac{c}{d^2 |i-j|^\gamma} \quad \text{and} \quad  |(A^{-1})_{ii}| \le \kappa d^{-1} + \kappa \frac{c}{d}. \end{equation}
\end{proposition}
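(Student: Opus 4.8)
The plan is to decompose $A = D(I - B)$ where $D = \operatorname{diag}(A_{11}, \dots, A_{NN})$ and $B$ has zero diagonal with $|B_{ij}| \le \frac{c}{d} |i-j|^{-\gamma}$ for $i \ne j$. Then $A^{-1} = (I - B)^{-1} D^{-1} = \bigl(\sum_{n \ge 0} B^n\bigr) D^{-1}$, provided the Neumann series converges, which we will get from a suitable operator-norm-type bound on $B$. The crux is therefore to control the entries of the powers $B^n$: I want to show that $B^n$ inherits the same polynomial off-diagonal decay as $B$, with a geometrically small prefactor, so that the series $\sum_n B^n$ can be summed entrywise.

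\medskip
\noindent\textbf{Step 1: a convolution-type inequality for decaying kernels.}
Define the kernel $\phi(k) := (1 + |k|)^{-\gamma}$ on $\ZZ$ (using $1 + |k|$ rather than $|k|$ avoids the singularity at $k = 0$ and dominates $|i-j|^{-\gamma}$ for $i\neq j$ up to the constant $2^\gamma$, since $|k| \ge 1 \Rightarrow 1 + |k| \le 2|k|$). The key estimate I would prove is the submultiplicativity-up-to-constant bound
\begin{equation}
    \sum_{m \in \ZZ} \phi(i - m)\,\phi(m - j) \;\le\; \delta_0\, \phi(i - j), \qquad \delta_0 := 2(1 + 2^\gamma)\zeta(\gamma),
\end{equation}
for all $i, j \in \ZZ$. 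The proof splits the sum according to whether $|i - m| \ge |i - j|/2$ or $|m - j| \ge |i - j|/2$ (at least one holds): in the first region $\phi(i-m) \le \phi\bigl(\lceil |i-j|/2\rceil - 1\bigr) \lesssim 2^\gamma \phi(i-j)$ and $\sum_m \phi(m-j) = 1 + 2\zeta(\gamma)$, and symmetrically in the second. Bookkeeping the constants gives exactly the $\delta$ of \eqref{def:r+} (with $\gamma$ there playing the role of my $\gamma$). I expect this combinatorial splitting to be the main obstacle — getting the constant to match $\delta = 2(1 + 2^\gamma)\zeta(\gamma)$ cleanly, and likewise tracking where $\zeta(2\gamma)$ enters, requires care.

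\medskip
\noindent\textbf{Step 2: iterate to bound $B^n$.}
Writing $\theta := \frac{c}{d}$, so $|B_{ij}| \le \theta\,\phi(i-j)$, an induction using Step 1 gives $|(B^n)_{ij}| \le \theta^n \delta_0^{\,n-1}\,\phi(i-j)$ for $n \ge 1$, and in particular $|(B^n)_{ii}| \le \theta^n \delta_0^{\,n-1}$. To close the Neumann series I need $\theta\,\delta_0 < 1$ — here is where the hypothesis \eqref{def:r+} is used: the quantity $r_+$ is (by a quadratic-equation computation) an upper bound for $\theta$ times a constant slightly larger than $\delta_0$, chosen precisely so that $r_+ < 1$ forces both $\theta \delta_0 < 1$ and a summable geometric control with the extra $\zeta(2\gamma)$ margin. (The refinement with $\zeta(2\gamma)$ is there to separately bound the diagonal contribution $\sum_{m \ne i}|B_{im}||B_{mi}| \le \theta^2 \sum_{k \ne 0}\phi(k)^2 = \theta^2 \cdot 2\zeta(2\gamma)$, which lets us improve the diagonal bound on $A^{-1}$ to the stated $\kappa d^{-1} + \kappa c/d$ form rather than merely $\kappa d^{-1}$.)

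\medskip
\noindent\textbf{Step 3: assemble.}
Summing, $\bigl|\bigl((I-B)^{-1}\bigr)_{ij}\bigr| = \bigl|\sum_{n\ge0}(B^n)_{ij}\bigr| \le [i=j] + \sum_{n \ge 1}\theta^n\delta_0^{n-1}\phi(i-j) = [i=j] + \frac{\theta}{1 - \theta\delta_0}\phi(i-j)$ for $i \ne j$, and for the diagonal $\le 1 + \frac{\theta}{1-\theta\delta_0}$, sharpened using the $\zeta(2\gamma)$ bound from Step 2 to $\le 1 + \frac{\theta^2 \cdot 2\zeta(2\gamma)}{1 - \theta\delta_0}$ plus lower-order terms, which is $O(\theta^2) = O((c/d)^2)$ beyond the leading $1$. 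Then $A^{-1} = (I-B)^{-1}D^{-1}$ multiplies each entry of column $j$ by $(A_{jj})^{-1}$, whose modulus is $\le d^{-1}$; converting $\phi(i-j) \le 2^\gamma |i-j|^{-\gamma}$ back and collecting constants yields
\begin{equation}
    |(A^{-1})_{ij}| \le \frac{2^\gamma \theta}{1 - \theta\delta_0}\cdot\frac{1}{d\,|i-j|^\gamma} = \kappa\,\frac{c}{d^2|i-j|^\gamma}, \qquad |(A^{-1})_{ii}| \le \frac{1}{d} + \kappa\,\frac{c}{d},
\end{equation}
with $\kappa = \kappa(\gamma, c/d)$ built from $\frac{2^\gamma}{1 - \theta\delta_0}$ and the Step-2 constants; since $\theta\delta_0 \to 0$ as $c/d \to 0$, this $\kappa$ stays bounded in that limit, as claimed. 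Invertibility of $A$ is automatic once the Neumann series converges (equivalently $\|B\|$ in the appropriate norm is $< 1$), and strict diagonal dominance guarantees $D$ is invertible.
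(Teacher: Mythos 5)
Your proposal follows essentially the same route as the paper's own proof in the appendix: factor $A=D(I_N-B)$ with $B$ of zero diagonal and off-diagonal decay $\frac{c}{d}|i-j|^{-\gamma}$, prove a Schur/convolution-type estimate for the polynomially decaying kernel with constant $\delta=2(1+2^\gamma)\zeta(\gamma)$ (this is exactly the paper's Lemma~\ref{lem:sum shifted inverse}), propagate it by induction to the entries of $B^n$, and sum the Neumann series $A^{-1}=\bigl(\sum_n B^n\bigr)D^{-1}$ using $r_+<1$. The only divergence is organizational: the paper keeps the diagonal and off-diagonal entries of $B^n$ in a coupled two-term recursion whose characteristic root is precisely $r_+$ (the positive root of $r^2=\frac{c\delta}{d}r+2(\frac{c}{d})^2\zeta(2\gamma)$), whereas you fold the diagonal into the kernel $\phi(k)=(1+|k|)^{-\gamma}$ and need only $\frac{c}{d}\delta_0<1$ — this works, but be aware that your tally $\sum_m\phi(m-j)=1+2\zeta(\gamma)$ should read $2\zeta(\gamma)-1$, and it is exactly the gain $\zeta(\gamma)-1$ in these tail sums that absorbs the two extra terms $m=i$ and $m=j$ and keeps your effective constant $\delta_0$ below $\delta$, so that $r_+<1$ still closes the argument.
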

The main interest of this proposition for us is to combine this with the convexity property established in Lemma~\ref{lem:Hess_prop}. 
This tool will be useful for our convergence result, Theorem~\ref{thrm:ring}-$(iii)$, as it will help us estimate the distance between the minimizer of the energy and its announced asymptotic behavior.

We note that a result closely related to Proposition~\ref{prop:matrix_inverse} exists and is used in a wavelet analysis context in~\cite{Jaffard_1990}. 
Nevertheless, it cannot be used directly without modifications in our work. 
For the sake of completeness, we give an independent and detailed proof of Proposition~\ref{prop:matrix_inverse} in appendix that is based on refinements around the Gershgorin circle theorem.

\subsubsection{No uniform convergence for the distances: proof of Proposition~\ref{prop:no convergence}}
The next objective of this article is to study the asymptotic of the spear structure in the limit where the number of nanoparticles reaches $+\infty$.
Near the center of the spear, we expect the distances between the particles to converge to some value $\overline{h}$. 
Such a value must then satisfy an asymptotic equation based on \eqref{eq:crit_point_1d}. More precisely, if we do a re-indexation of the sums in this equation, we get
\begin{equation}
\sum_{i=1-k}^0\sum_{j=1}^{N-k}L'\bigg(\sum_{\ell=i+k}^{j+k-1}h_{k + \ell}^\ast\bigg)\;=\;0, \qquad \text{for all } k = 1, \dots, N-1.
\end{equation}
Recall that $h_{\ell}^\ast$ has been defined in Corollary \ref{lem:existence of minimizer of J}.
If we now inject in the equality above the ansatz that all the distances are equal to the same distance $\overline{h}$ when $N\to+\infty$ in the center of the spear (meaning $k\simeq N/2$), we get formally the following asymptotic equation:
\begin{equation} \label{eq:h_c 1}
    \sum_{i=-\infty}^0 \sum_{j=1}^\infty L' \Big( (j-i)\overline{h} \Big) = 0.
\end{equation}
By direct computations (recall that $L' (x) = - \frac{\alpha A}{x^{\alpha+1}} + \frac{\beta B}{x^{\beta + 1}}$ for $x > 0$), there hold
\begin{align}
    \sum_{i=-\infty}^0 \sum_{j=1}^\infty L' \Big( (j-i) \overline{h} \Big) &= \sum_{j=1}^\infty \sum_{\kappa=j}^\infty L' \big( \kappa \overline{h} \big) = \sum_{\kappa=1}^\infty \sum_{j=1}^\kappa L' \big( \kappa \overline{h} \big)= \sum_{\kappa=1}^\infty \kappa L' \big( \kappa \overline{h} \big) \\
        &= - \frac{\alpha A}{\overline{h}^{\alpha+1}} \zeta (\alpha) + \frac{\beta B}{\overline{h}^{\beta + 1}} \zeta (\beta).
\end{align}
Therefore, \eqref{eq:h_c 1} is equivalent to
\begin{equation} \label{def:overline h}
    \overline{h}:= \bigg( \frac{\alpha A \zeta (\alpha)}{\beta B\zeta (\beta)} \bigg)^\frac{1}{\alpha-\beta}.
\end{equation}
These manipulations with this ansatz explain the emergence of this distance $\overline{h}$ as a candidate for the limit distances (at least in the center of the structure).

There is also another natural distance $\widetilde{h}$ obtained from the same considerations and ansatz but for the boundary of the spear (the very last or first distance). Indeed, if we go back to~\eqref{eq:crit_point_1d} with $k=1$ and let $N\to+\infty$ with the ansatz that all the distance converge with the same distance $\widetilde{h}$, we obtain the following equation:
\begin{equation} 
    \sum_{j=1}^\infty L' \big( j \widetilde{h} \big) = 0.
\end{equation}
Similar computations, using again the explicit expression of $L'$, show that the equation above admits a unique solution:
\begin{equation} \label{def:h_b}
    \widetilde{h}:= \bigg( \frac{\alpha A \zeta (\alpha + 1)}{\beta B \zeta (\beta + 1)} \bigg)^\frac{1}{\alpha-\beta}.
\end{equation}
According to Lemma~\ref{lem:widecheck} in Appendix, we have $\overline{h}<\widetilde{h}<\widehat{h}$. As a consequence we can prove that Proposition~\ref{prop:no convergence} follows these computations above and Theorem~\ref{thrm:spear}:\medskip

\begin{proof}[Proof of Proposition~\ref{prop:no convergence}.]
For this proof, we added the index $N$ to the vector $H^\ast=H^\ast_N=(h^\ast_{N,\ell})_{\ell=1,\dots,N}$ to precise that this vector is the minimizing configuration for the $N$ particles system.

By contradiction, we assume that all the distances converge as $N\to+\infty$ toward $\overline{h}$, meaning that for any fixed values of $N, M\in\NN^\ast$ we have:
\begin{equation}\label{def:epsilon N}
    \max_{k=1,\dots,M-1}\;\big|h^\ast_{N,k}-\overline{h}\big|\;\leq\;\varepsilon_{M,N},
\end{equation}
with $\varepsilon_{M,N}\to0$ as $N\to+\infty.$ For any fixed value of $N$ with $k=1$, Equation~\eqref{eq:crit_point_1d} writes
\begin{equation} \label{eq:yet another critical point relation} \sum_{j=1}^{+\infty}L'\bigg(\sum_{\ell=1}^jh^\ast_{N,\ell}\bigg)\mathbbm{1}_{j\leq N-1}\;=\;0.\end{equation}
Using now the mean value theorem with Lemmas \ref{lem:distances upper bound} and \ref{lem:distances lower bound}, for any $M < N$, we have
\begin{equation}\begin{split}
    \bigg|\sum_{j=1}^{+\infty}L'\bigg(\sum_{\ell=1}^jh^\ast_{N,\ell}\bigg)&\mathbbm{1}_{j\leq N-1}-\sum_{j=1}^{+\infty}L'\big(j\overline{h}\big)\bigg|\;\leq\;\sum_{j=1}^{N-1}\bigg|L'\bigg(\sum_{\ell=1}^jh^\ast_{N,\ell}\bigg)-L'\big(j\overline{h}\big)\bigg|+\sum_{j=N}^{+\infty}\Big|L'\big(j\overline{h}\big)\Big|\\
    &\leq\;\sum_{j=1}^{N-1} \bigg| \sum_{\ell=1}^j h^\ast_{N,\ell} - j \overline{h} \bigg| \sup_{\zeta \in[j \widecheck{h},j \widehat{h}]}\;\big|L''(\zeta)\big|+\sum_{j=N}^{+\infty}\Big|L'\big(j\overline{h}\big)\Big|\\
    &\leq M\varepsilon_{M,N}\sum_{j=1}^{M-1}\sup_{\zeta \in[j \widecheck{h},j \widehat{h}]}\;\big|L''(\zeta)\big| + \sum_{j=M}^{N-1} j (\hat{h} - \widecheck{h}) \sup_{\zeta \in[j \widecheck{h},j \widehat{h}]}\;\big|L''(\zeta)\big| +\sum_{j=N}^{+\infty}\Big|L'\big(j\overline{h}\big)\Big|,
\end{split}
\end{equation}
where $\varepsilon_{M,N}$ was defined at~\eqref{def:epsilon N}.
If $j_0$ is large enough so that $j_0 \widecheck{h} \geq h^\ddag \geq h^\dag$, then $L''$ is negative and increasing on $[j \widecheck{h},j \widehat{h}]$ for any $j \geq j_0$, so that
\begin{equation}
    \sup_{\zeta \in[j \widecheck{h},j \widehat{h}]}\;\big|L''(\zeta)\big| \leq - L''(j \widecheck{h}) \leq \frac{B \beta (\beta + 1)}{(j \widecheck{h})^{\beta + 2}}.
\end{equation}
Similarly, we get for any $j > j_0$ (up to take a greater $j_0$)
\begin{equation}
    \Big|L'\big(j\overline{h}\big)\Big| \leq \frac{B \beta}{(j \overline{h})^{\beta + 1}}.
\end{equation}
Thus, if $N > M > j_0$,
\begin{multline*}
\bigg|\sum_{j=1}^{+\infty}L'\bigg(\sum_{\ell=1}^jh^\ast_{N,\ell}\bigg)\mathbbm{1}_{j\leq N-1}-\sum_{j=1}^{+\infty}L'\big(j\overline{h}\big)\bigg| \\
    \begin{aligned}
        &\leq M\varepsilon_{M,N} \Bigl(\sum_{j=1}^{j_0}\sup_{\zeta \in[j \widecheck{h},j \widehat{h}]}\;\big|L''(\zeta)\big| + \sum_{j=j_0+1}^{M-1}\frac{B \beta (\beta + 1)}{(j \widecheck{h})^{\beta + 2}} \Bigr) + \sum_{j=M}^{N-1} j (\hat{h} - \widecheck{h}) \frac{B \beta (\beta + 1)}{(j \widecheck{h})^{\beta + 2}} +\sum_{j=N}^{+\infty}\frac{B \beta}{(j \overline{h})^{\beta + 1}} \\
        &\leq M\varepsilon_{M,N} \Bigl(C + \frac{B \beta (\beta + 1)}{\widecheck{h}^{\beta + 2}} \sum_{j=j_0+1}^{M-1} j^{-(\beta + 2)}\Bigr) + (\hat{h} - \widecheck{h}) \frac{B \beta (\beta + 1)}{\widecheck{h}^{\beta + 2}} \sum_{j=M}^{\infty} j^{-(\beta + 1)} + \frac{B \beta}{\overline{h}^{\beta + 1}} \sum_{j=N}^{+\infty} j^{- (\beta + 1)}.
    \end{aligned}
\end{multline*}

Let $\varepsilon > 0$, and fix $M > j_0$ such that
\begin{equation}
    (\widehat{h} - \widecheck{h}) \frac{B \beta (\beta + 1)}{\widecheck{h}^{\beta + 2}} \sum_{j=M}^{\infty} j^{-(\beta + 1)} < \varepsilon,
\end{equation}
which is possible since $\beta + 1 > 2$. Then, by taking the limit $N \to + \infty$, we get
\begin{equation}
    \limsup_{N \to + \infty} \bigg|\sum_{j=1}^{+\infty}L'\bigg(\sum_{\ell=1}^jh^\ast_{N,\ell}\bigg)\mathbbm{1}_{j\leq N-1}-\sum_{j=1}^{+\infty}L'\big(j\overline{h}\big)\bigg| \leq \varepsilon.
\end{equation}
This is true for all $\varepsilon > 0$, so that we get (thanks to \eqref{eq:yet another critical point relation})
\begin{equation} \sum_{j=1}^{+\infty}L'\big(j\overline{h}\big)=0. \end{equation}
Straightforward computations give that the only solution to the equation above is $\widetilde{h}$ defined by~\eqref{def:h_b}. 
Nevertheless, we prove in Lemma~\ref{lem:widecheck} in Appendix that $\overline{h}\neq\widetilde{h}$, which is a contradiction.
\end{proof}


\subsubsection{A first refined estimate on the distances} \label{sec:refined_dist}

From the ansatz presented just before, it is possible to do more precise computations on the minimizer and improve Lemmas \ref{lem:distances upper bound} and \ref{lem:distances lower bound}:

Let $\alpha_\ast(\beta)$ be defined as the largest zero of the function $G_{\beta}$ 
\begin{equation}\label{eq:def_alpha_ast}
G_{\beta}(\alpha):=\frac{2}{\delta+\sqrt{\delta^2+8\zeta(2\beta)}}\left(\frac{\beta(\alpha-\beta)}{(\beta+1)\zeta(\beta)^{\frac{\beta+2}{\alpha-\beta}}}-\beta(\zeta(\beta+1)-1)\right)-1,
\end{equation}
where $\delta = 2(1+2^\beta) \zeta(\beta)$ appears in Proposition \ref{prop:matrix_inverse}. Note that $\alpha_\ast(\beta)$ is well defined since $G_{\beta}(\alpha)\to+\infty$ as $\alpha\to+\infty$ and $G_\beta (\alpha) \to - \frac{2 \beta(\zeta(\beta+1)-1)}{\delta+\sqrt{\delta^2+8\zeta(2\beta)}} - 1 < 0$ as $\alpha \to \beta^+$. Also $\alpha_*(\beta) \ge \alpha_\dag(\beta)$.
\begin{lemma} \label{lem:better_up_low_bound}
    Let $H^* \in (\RR_+^\ast)^{N-1}$ be a minimizer for the function $J$ with $\alpha>\beta\geq3$ and $\alpha$ large enough (precised in the proof). Then there exist two constants $c,C > 0$ independent of $N$ such that for all $1 \leq k \leq N-1$,
    \begin{equation}
        \overline{h} + \frac{c}{N^{\beta - 1}} \leq h_k^\ast \leq \widetilde{h} + \frac{C}{N^\beta},
    \end{equation}
    where $\overline{h}$ (resp. $\widetilde{h}$) have been defined in \eqref{def:overline h} (resp. in \eqref{def:h_b}). 
\end{lemma}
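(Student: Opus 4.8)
\emph{Setup.} The plan is to prove the two estimates separately, in each case by evaluating the Euler--Lagrange relation \eqref{eq:crit_point_1d} at an \emph{extremal} link --- the index $k_0$ realizing $\min_k h_k^\ast$ for the lower bound, the one realizing $\max_k h_k^\ast$ for the upper bound --- and comparing the resulting sum with two auxiliary one--variable profiles,
\[
f(x):=\sum_{m\ge1}m\,L'(mx)=\frac{\beta B\zeta(\beta)}{x^{\beta+1}}-\frac{\alpha A\zeta(\alpha)}{x^{\alpha+1}},
\qquad
g(x):=\sum_{m\ge1}L'(mx)=\frac{\beta B\zeta(\beta+1)}{x^{\beta+1}}-\frac{\alpha A\zeta(\alpha+1)}{x^{\alpha+1}},
\]
whose unique positive zeros are, by the computations leading to \eqref{def:overline h} and \eqref{def:h_b}, exactly $\overline h$ and $\widetilde h$ (both series converge since $\beta>1$). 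I take $\alpha$ large enough that \eqref{est:hhh} holds, i.e. $\widehat h<h^\dag<2\widecheck h$; then every argument $\sum_{\ell=i}^{j}h_\ell^\ast$ spanning at least two consecutive gaps satisfies $\sum_{\ell=i}^{j}h_\ell^\ast\ge 2\widecheck h>h^\dag$, a regime in which $L'>0$ and $L'$ is strictly decreasing ($L''<0$). One checks $\alpha>\alpha_\ast(\beta)$ suffices for this and for the monotonicity of $f,g$ invoked below (it is where $\alpha_\ast$ of \eqref{eq:def_alpha_ast} enters, strengthening the earlier $\alpha>\alpha_\dag(\beta)$).

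\emph{Lower bound.} Let $k_0$ minimise $k\mapsto h_k^\ast$, so $h_\ell^\ast\ge h_{k_0}^\ast$ for all $\ell$. Writing \eqref{eq:crit_point_1d} at $k=k_0$, I isolate the diagonal term $L'(h_{k_0}^\ast)\le0$ (Lemma \ref{lem:distances upper bound}) and group the remaining terms by their span $m=j-i+1\ge2$. For such a pair $\sum_{\ell=i}^{j}h_\ell^\ast\ge m\,h_{k_0}^\ast$, hence $L'\!\big(\sum_{\ell=i}^{j}h_\ell^\ast\big)\le L'(m h_{k_0}^\ast)$, while the number of pairs $(i,j)$ with $1\le i\le k_0\le j\le N-1$ and $j-i+1=m$ is $\min(m,k_0,N-k_0,N-m)\le m$; since $L'(m h_{k_0}^\ast)>0$ for $m\ge2$ this gives
\[
0\;\le\;L'(h_{k_0}^\ast)+\sum_{m=2}^{N-1}m\,L'(m h_{k_0}^\ast)\;=\;f(h_{k_0}^\ast)-\sum_{m\ge N}m\,L'(m h_{k_0}^\ast).
\]
The tail is a sum of positive terms, and for $N$ large $L'(m h_{k_0}^\ast)\ge\tfrac12\beta B(m h_{k_0}^\ast)^{-(\beta+1)}\ge\tfrac12\beta B(m\widehat h)^{-(\beta+1)}$, so summing over $m\in[N,2N]$ and comparing with $\int t^{-\beta}$ bounds it below by $c_0N^{1-\beta}$ for an explicit $c_0>0$; uniformity in the position of $k_0$ is automatic, only $m\ge N$ entering. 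Hence $f(h_{k_0}^\ast)\ge c_0N^{1-\beta}>0=f(\overline h)$, so $h_{k_0}^\ast>\overline h$ (as $f<0$ on $(0,\overline h)$), and since $f$ is Lipschitz on the compact $[\widecheck h,\widehat h]\ni h_{k_0}^\ast$ (Lemmas \ref{lem:distances upper bound}--\ref{lem:distances lower bound}), the mean value theorem yields $h_{k_0}^\ast\ge\overline h+cN^{1-\beta}$; this passes to all $h_k^\ast$, $h_{k_0}^\ast$ being the smallest gap.

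\emph{Upper bound.} Now let $k_0$ maximise $k\mapsto h_k^\ast$, so $h_\ell^\ast\le h_{k_0}^\ast$ and $\sum_{\ell=i}^{j}h_\ell^\ast\le m h_{k_0}^\ast$, hence $L'\!\big(\sum_{\ell=i}^{j}h_\ell^\ast\big)\ge L'(m h_{k_0}^\ast)$ on off--diagonal terms. Here it suffices to use that for every $m\in\{1,\dots,N-1\}$ at least one admissible pair of span $m$ straddles $k_0$. Then \eqref{eq:crit_point_1d} at $k=k_0$ gives
\[
0\;\ge\;L'(h_{k_0}^\ast)+\sum_{m=2}^{N-1}L'(m h_{k_0}^\ast)\;=\;g(h_{k_0}^\ast)-\sum_{m\ge N}L'(m h_{k_0}^\ast),
\]
and $\sum_{m\ge N}L'(m h_{k_0}^\ast)\le\beta B\widecheck h^{-(\beta+1)}\sum_{m\ge N}m^{-(\beta+1)}\le C_0N^{-\beta}$, so $g(h_{k_0}^\ast)\le C_0N^{-\beta}$. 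For $\beta\ge3$ and $\alpha$ large, the explicit form of $g'$ together with monotonicity of $s\mapsto s\zeta(s)$ for $s\ge4$ shows $g$ is strictly increasing on $[\widecheck h,\widehat h]$ with $g(\widetilde h)=0$; combined with $h_{k_0}^\ast\le\widehat h$ this gives $h_{k_0}^\ast\le\widetilde h+CN^{-\beta}$, hence the same for all $k$. The finitely many small $N$ are absorbed into the constants, since the displays above already give $f(h_{k_0}^\ast)>0$ and $g(h_{k_0}^\ast)$ finite for every $N\ge2$.

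\emph{Main obstacle.} The only genuinely delicate point is the combinatorial count of ``how many particle--pair intervals of a prescribed span $m$ contain a given link'': it equals $\min(m,k_0,N-k_0,N-m)$, which is $\le m$ --- the weight $m$ being exactly what turns the $m\ge N$ tail of $f$ into the rate $N^{1-\beta}$ --- and is $\ge1$, which is all that is needed to turn the unweighted $m\ge N$ tail of $g$ into the faster rate $N^{-\beta}$. Everything else is bookkeeping to keep the arguments of $L'$ and $L''$ inside the monotonicity regime secured by \eqref{est:hhh} (this is what forces the restriction $\alpha>\alpha_\ast(\beta)$) plus two elementary series--integral comparisons; the hypothesis $\beta\ge3$ is used only to make $f$ and $g$ genuinely increasing up to $\widehat h$.
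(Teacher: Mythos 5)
Your proof is correct and follows essentially the same route as the paper's: it evaluates the criticality relation \eqref{eq:crit_point_1d} at the extremal gap, exploits the monotonicity of $L'$ beyond $2\widecheck{h}$ together with the count of straddling pairs of a given span, and compares with the one-variable profiles $f=L^\flat$ and $g=L^\sharp$ whose zeros are $\overline{h}$ and $\widetilde{h}$, inverting them on the compact $[\widecheck{h},\widehat{h}]$. The only cosmetic difference is in the lower bound, where you bound the pair count by $m$ for every span, whereas the paper uses the sharper bound $\min(\ell,N-\ell)$ and then splits the sum at $\lfloor N/2\rfloor$; both yield the same $N^{1-\beta}$ rate.
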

A particular consequence of this lemma is the first inequality of~\eqref{eq:asymptotics} in Theorem~\ref{thrm:spear}-$(iii)$.

\begin{proof}
    First, we point out that $L'$ is decreasing on $[2 \widecheck{h}, \infty)$ as soon as $\alpha$ is large enough. Indeed, 
    \begin{equation}L'' (x) = \frac{\alpha(\alpha+1)A}{x^{\alpha+2}} - \frac{\beta(\beta+1) B}{x^{\beta+2}}\end{equation} for $x > 0$, and thus $L'' (x) \le 0$ is equivalent to
    \begin{equation}
        x \geq \biggl( \frac{\alpha(\alpha+1)A}{\beta(\beta+1) B} \biggr)^\frac{1}{\alpha-\beta} = \biggl( \frac{(\alpha+1) \zeta (\beta)}{(\beta + 1)} \biggr)^\frac{1}{\alpha-\beta} \widecheck{h},
    \end{equation}
    and we point out that $\Bigl( \frac{(\alpha+1) \zeta (\beta)}{(\beta + 1)} \Bigr)^\frac{1}{\alpha-\beta} \widecheck{h} < 2 \widecheck{h}$ as soon as $\alpha$ is large enough.
    
    Let $i_\textnormal{max}$ such that $h_{i_\textnormal{max}}^* = \max_k h_k^* =: {h_\textnormal{max}}$. Then, from \eqref{eq:crit_point_1d} at $k = i_\textnormal{max}$, we get
    \begin{equation} \label{eq:crit_point_max}
        - L' ({h_\textnormal{max}}) = \sum_{i = 1}^{i_\textnormal{max}} \sum_{j =\max(i_\textnormal{max} + 1, i+2)}^N L' \bigg( \sum_{\ell=i}^{j-1} h_\ell^* \bigg).
    \end{equation}
    In the double sum in the right-hand side, we now have $j-i \geq 2$. Thus, there holds
    \begin{equation}
        (j-i) {h_\textnormal{max}} = \sum_{\ell=i}^{j-1} {h_\textnormal{max}} \geq \sum_{\ell=i}^{j-1} h_\ell^* \geq \sum_{\ell=i}^{j-1} \widecheck{h} = (j-i) \widecheck{h} \geq 2 \widecheck{h}.
    \end{equation}
    Using the fact that $L'$ is decreasing for such values, we get from \eqref{eq:crit_point_max}
    \begin{equation}
        - L' ({h_\textnormal{max}}) \geq \sum_{i = 1}^{i_\textnormal{max}} \sum_{j =\max(i_\textnormal{max} + 1, i+2)}^N  L' \bigg( (j-i) {h_\textnormal{max}} \bigg),
    \end{equation}
    which can be rewritten as
    \begin{equation} \label{eq:hmax_prop 1}
        \sum_{i = 1}^{i_\textnormal{max}}  \sum_{j =i_\textnormal{max} + 1}^N L' \bigg( (j-i) {h_\textnormal{max}} \bigg) \leq 0.
    \end{equation}
    On the other hand, we can compute
    \begin{equation}\begin{split}
        \sum_{i = 1}^{i_\textnormal{max}} \sum_{j = i_\textnormal{max} + 1}^N & L' \bigg( (j-i) {h_\textnormal{max}} \bigg) = \sum_{i = 1}^{i_\textnormal{max}} \sum_{\ell = i_\textnormal{max} + 1 - i}^{N-i} L' ( \ell {h_\textnormal{max}} ) =  \sum_{\ell = 1}^{N-1} \sum_{i = \max(1, i_\textnormal{max} + 1 - \ell)}^{\min(i_\textnormal{max}, N-\ell)} L' ( \ell {h_\textnormal{max}} ) \\
            &= \sum_{\ell = 1}^{N-1} \bigg(\min(i_\textnormal{max}, N-\ell) - \max(1, i_\textnormal{max} + 1 - \ell) + 1 \bigg) L' ( \ell {h_\textnormal{max}} ) \\
            &\geq L' ({h_\textnormal{max}}) + \sum_{\ell = 2}^{N-1} \bigg(\min(i_\textnormal{max}, N-\ell) + \min(0, \ell - i_\textnormal{max}) \bigg) L' ( \ell {h_\textnormal{max}} ).
    \end{split}\end{equation}
    We point out that, for all $N \geq 3$, $1 \leq i_\textnormal{max} \leq N-1$ and $2 \leq \ell \leq N - 1$, we have $\min(i_\textnormal{max}, N-\ell) + \min(0, \ell - i_\textnormal{max}) \geq 1$ and this lower bound is reached for $i_\textnormal{max} = 1$ (or $i_\textnormal{max} = N$). Since $L' ( \ell {h_\textnormal{max}} ) \geq 0$ for all $\ell \geq 2$, we thus get
    \begin{equation}
        \sum_{i = 1}^{i_\textnormal{max}} \sum_{j = i_\textnormal{max} + 1}^N L' \bigg( (j-i) {h_\textnormal{max}} \bigg) \geq L' ({h_\textnormal{max}}) + \sum_{\ell = 2}^{N-1} L' ( \ell {h_\textnormal{max}} ) = \sum_{\ell = 1}^{N-1} L' ( \ell {h_\textnormal{max}} ).
    \end{equation}
    Plugging this into \eqref{eq:hmax_prop 1} gives
    \begin{equation}
        \sum_{\ell = 1}^{N-1} L' ( \ell {h_\textnormal{max}} ) \leq 0,
    \end{equation}
    or also
    \begin{equation} \label{eq:fmax_hmax}
        {L^\sharp} ({h_\textnormal{max}}) \leq \sum_{\ell = N}^\infty L'(\ell {h_\textnormal{max}}),
    \end{equation}
    where
    \begin{equation}
        {L^\sharp} (x):= \sum_{\ell = 1}^\infty L'(\ell x) = \frac{\beta B}{x^{\beta + 1}} \zeta (\beta + 1) - \frac{\alpha A}{x^{\alpha+1}} \zeta (\alpha + 1).
    \end{equation}
    From direct computations, we know that ${L^\sharp}$ is increasing on $(0, {h^\sharp})$ where
    \begin{equation}
        {h^\sharp}:= \Bigg( \frac{\alpha(\alpha+1)A \, \zeta (\alpha + 1)}{\beta(\beta+1)B \, \zeta (\beta + 1)} \Bigg)^\frac{1}{\alpha - \beta} = \widehat{h} \, \Bigg( \frac{(\alpha + 1) \, \zeta (\alpha + 1)}{(\beta + 1) \, \zeta (\beta + 1)} \Bigg)^\frac{1}{\alpha - \beta} > \widehat{h},
    \end{equation}
    as soon as $\alpha \geq \beta \geq3$. The inequality above results from Corollary~\ref{lem:xzetaxInc} in Appendix. Moreover, by definition, we know that ${L^\sharp} (\widetilde{h}) = 0$.
    On the other hand, it holds
    \begin{equation} \label{eq:hmax reminder 1}
        \sum_{\ell = N}^\infty L'(\ell {h_\textnormal{max}}) \leq \sum_{\ell = N}^\infty \frac{\beta B}{(\ell {h_\textnormal{max}})^{\beta + 1}} \leq \frac{\beta B}{h_\textnormal{max}^{\beta + 1}} \sum_{\ell = N}^\infty \ell^{-(\beta + 1)} \leq \frac{B}{\widecheck{h}^{\beta + 1} (N-1)^\beta}.
    \end{equation}  
    We now denote by ${L^\sharp}^{-1}$ the inverse function of ${L^\sharp}$ on $(0, h^\sharp)$. For $N$ large enough so that 
    \begin{equation}\frac{B}{\widecheck{h}^{\beta + 1} (N-1)^\beta} \leq h^\sharp,\end{equation}
    we get from \eqref{eq:fmax_hmax} and \eqref{eq:hmax reminder 1}
    \begin{equation}
        {h_\textnormal{max}} \leq {L^\sharp}^{-1} \bigg( \frac{B}{\widecheck{h}^{\beta + 1} (N-1)^\beta} \bigg).
    \end{equation}
    Thus,
    \begin{equation}
        {h_\textnormal{max}} - \overline{h} \leq {L^\sharp}^{-1} \bigg( \frac{B}{\widecheck{h}^{\beta + 1} (N-1)^\beta} \bigg) - {L^\sharp}^{-1} (0)
            \leq \frac{1}{\min_{[\widecheck{h}, \widehat{h}]} {L^\sharp}'} \frac{B}{\widecheck{h}^{\beta + 1} (N-1)^\beta},
 \end{equation}
    where we used the fact that ${L^\sharp}' > 0$ on $[\widecheck{h}, \widehat{h}]\subseteq(0,h^\sharp]$.
    This leads to the expected upper bound. 
    
    As for the lower bound, by similar computations with $i_\textnormal{min}$ such that $h_{i_\textnormal{min}}^* = \min_k h_k^* =: h_\textnormal{min}$, we get
    \begin{equation} 
        L' (h_\textnormal{min}) + \sum_{\ell = 2}^{N-1} \bigg(\min(i_\textnormal{min}, N-\ell) + \min(0, \ell - i_\textnormal{min}) \bigg) L' ( \ell h_\textnormal{min} ) \geq 0.
    \end{equation}
    Moreover, $\min(i_\textnormal{min}, N-\ell) + \min(0, \ell - i_\textnormal{min}) \leq \min(\ell, N - \ell)$, and this bound is reached for $i_\textnormal{min} = \lfloor \frac{N}{2} \rfloor$. Thus, we get
    \begin{equation}
        \sum_{\ell = 1}^{\lfloor \frac{N}{2} \rfloor} \ell L' ( \ell h_\textnormal{min} ) + \sum_{\ell = \lfloor \frac{N}{2} \rfloor + 1}^{N} (N - \ell) L' ( \ell h_\textnormal{min} ) \geq 0,
    \end{equation}
    which can be rewritten as
    \begin{equation} \label{eq:fmin_hmin}
        {L^\flat} (h_\textnormal{min}) \geq \sum_{\ell = \lfloor \frac{N}{2} \rfloor + 1}^{\infty} \ell L' ( \ell h_\textnormal{min} ) - \sum_{\ell = \lfloor \frac{N}{2} \rfloor + 1}^{N} (N - \ell) L' ( \ell h_\textnormal{min} ), 
    \end{equation}
    where
    \begin{equation}
        {L^\flat} (x):= \sum_{\ell = 1}^{\infty} \ell L' ( \ell x ) = \frac{\beta B}{x^{\beta + 1}} \zeta (\beta) - \frac{\alpha A}{x^{\alpha+1}} \zeta (\alpha).
    \end{equation}
    Similarly as before, we have ${L^\flat}' > 0$ on $(0, {h^\flat})$ where
    \begin{equation}
        {h^\flat}:= \Bigg( \frac{\alpha(\alpha+1)A \, \zeta (\alpha)}{\beta(\beta+1)B \, \zeta (\beta)} \Bigg)^\frac{1}{\alpha - \beta} = \Bigg( \frac{(\alpha+1)\, \zeta (\alpha)}{(\beta+1)\, \zeta (\beta)} \Bigg)^\frac{1}{\alpha - \beta}\widehat{h}\;>\;\widehat{h},
    \end{equation}
    where for this last inequality we use Lemma~\ref{lem:(x+1)zetaxInc} in Appendix~\ref{append:zeta}, assuming again $\alpha>\beta\geq3$.
    Moreover, from the expression of $L'$, we get
    \begin{equation}
        \sum_{\ell = \lfloor \frac{N}{2} \rfloor + 1}^{\infty} \ell L' ( \ell h_\textnormal{min} ) \geq \frac{C \beta B}{N^{\beta - 1} h_\textnormal{min}^{\beta + 1}} \geq \frac{C \beta B}{N^{\beta - 1} \widehat{h}^{\beta + 1}},
    \end{equation}
    and similarly for the second sum in the right-hand side of \eqref{eq:fmin_hmin}. Similar computations as in the previous case lead to the expected lower bound.
\end{proof}

\subsubsection{Proof of Theorem~\ref{thrm:spear}-(iii) completed}
We now complete the proof of Theorem~\ref{thrm:spear}-(iii) by establishing the last inequality in~\eqref{eq:asymptotics}. 
This result is a direct consequence of the following lemma:

\begin{lemma} \label{lem:convergence_centre}
    Let $N \in \mathbb N$ and let $H^\ast\in\RR^{N-1}$ be the minimizer of the function $J$ introduced at~\eqref{def:J L d} with $\alpha>\beta\geq3$. Assume moreover that $\alpha >\alpha_\ast(\beta)$ (with $\alpha_\ast(\beta)$ being the same as in Lemma~\ref{lem:better_up_low_bound}). Then, there exists a constant $C$ dependent only on $\beta$ such that: 
    \begin{equation}\label{eq:holy graal}
       \forall\,k=1,\dots,N-1,\qquad | \overline{h}- h^*_k | \leq C \bigg( \frac{1}{k^{\beta -1}} + \frac{1}{(N-k)^{\beta -1}} + \frac{1}{N^{\beta -1}} \bigg)
    \end{equation}
    where $\overline{h}$ has been defined at~\eqref{def:the H gang}.
\end{lemma}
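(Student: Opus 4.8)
The plan is to linearize the critical‑point relation $\nabla J(H^\ast)=0$ about the constant vector $\overline H:=(\overline h,\dots,\overline h)\in\RR^{N-1}$. Since $\widecheck h<\overline h<\widehat h$ and $H^\ast\in[\widecheck h,\widehat h]^{N-1}$ by Lemmas~\ref{lem:distances upper bound} and~\ref{lem:distances lower bound}, the whole segment $[\overline H,H^\ast]$ stays in the box $[\widecheck h,\widehat h]^{N-1}$. Writing $E:=H^\ast-\overline H$ and applying the fundamental theorem of calculus,
\begin{equation}
-\nabla J(\overline H)=\nabla J(H^\ast)-\nabla J(\overline H)=M\,E,\qquad M:=\int_0^1\nabla^2J\big(\overline H+tE\big)\,\diff t .
\end{equation}

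The second step is to show that $M$ falls exactly within the scope of Proposition~\ref{prop:matrix_inverse}. Each $\nabla^2J(\overline H+tE)$ obeys the pointwise bounds of Lemma~\ref{lem:Hess_prop} (diagonal $\ge\Lambda_d$, off‑diagonal in $[-\Lambda_{nd}|i-j|^{-\beta},0]$, uniform diagonal dominance with constant $\Lambda_1$), and these bounds are stable under averaging because the diagonal entries stay positive and the off‑diagonal entries keep a fixed sign; hence $M$ satisfies the hypotheses of Proposition~\ref{prop:matrix_inverse} with $\gamma=\beta$, $c=\Lambda_{nd}$, $d=\Lambda_d$. Using $\widehat h/\widecheck h=\zeta(\beta)^{1/(\alpha-\beta)}$ one computes $\Lambda_d/\Lambda_{nd}=\tfrac{\beta(\alpha-\beta)}{(\beta+1)\zeta(\beta)^{(\beta+2)/(\alpha-\beta)}}-\beta(\zeta(\beta+1)-1)$, so the admissibility condition $r_+<1$ of that proposition is precisely $G_\beta(\alpha)>0$, which holds for $\alpha>\alpha_\ast(\beta)$. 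Proposition~\ref{prop:matrix_inverse} then yields $M$ invertible and a constant $C_1$ independent of $N$ with $|(M^{-1})_{kk}|\le C_1$ and $|(M^{-1})_{kj}|\le C_1|k-j|^{-\beta}$ for $k\ne j$.

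Next I estimate the residual $\partial_{h_k}J(\overline H)$. From~\eqref{eq:grad J}, $\partial_{h_k}J(\overline H)=\sum_{i=1}^{k}\sum_{j=k}^{N-1}L'\big((j-i+1)\overline h\big)=\sum_{m\ge1}c_{k,N}(m)\,L'(m\overline h)$, where $c_{k,N}(m)$ is the number of pairs $(i,j)$ with $j-i+1=m$; an elementary count gives $c_{k,N}(m)=m$ for $1\le m\le\min(k,N-k)$, $0\le c_{k,N}(m)\le m$ in general, and $m-c_{k,N}(m)\le\max(m-k,0)+\max(m-(N-k),0)$. Since $\overline h$ solves the asymptotic equation, i.e.\ $\sum_{m\ge1}m\,L'(m\overline h)=0$ by~\eqref{eq:h_c 1}, we get $\partial_{h_k}J(\overline H)=-\sum_{m\ge2}\big(m-c_{k,N}(m)\big)L'(m\overline h)$, and with $|L'(m\overline h)|\le C\,m^{-(\beta+1)}$ this leads to
\begin{equation}
\big|\partial_{h_k}J(\overline H)\big|\;\le\;C\sum_{m>k}(m-k)\,m^{-(\beta+1)}+C\sum_{m>N-k}\big(m-(N-k)\big)\,m^{-(\beta+1)}\;\le\;C\Big(\tfrac{1}{k^{\beta-1}}+\tfrac{1}{(N-k)^{\beta-1}}\Big).
\end{equation}

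Finally, combining $E=-M^{-1}\nabla J(\overline H)$ with the two estimates above gives
\begin{equation}
|h_k^\ast-\overline h|=|E_k|\le\sum_{j=1}^{N-1}|(M^{-1})_{kj}|\,\big|\partial_{h_j}J(\overline H)\big|\le C_1\big|\partial_{h_k}J(\overline H)\big|+C_1\sum_{j\ne k}\frac{|\partial_{h_j}J(\overline H)|}{|k-j|^{\beta}},
\end{equation}
and it remains to bound the discrete convolution $\sum_{j\ne k}|k-j|^{-\beta}\big(j^{1-\beta}+(N-j)^{1-\beta}\big)$. Splitting the sum over $j$ into $j\le k/2$, $k/2<j<2k$ and $j\ge2k$ (and symmetrically for the $(N-j)^{1-\beta}$ part) and using $\beta\ge3$ so that $\sum_j j^{1-\beta}<\infty$, each piece is $\lesssim k^{1-\beta}$ (resp.\ $(N-k)^{1-\beta}$), which produces~\eqref{eq:holy graal}. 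The main obstacle is the second step: verifying that the averaged Hessian $M$ lands inside the hypotheses of Proposition~\ref{prop:matrix_inverse} — this is the role of the threshold $\alpha_\ast(\beta)$, via the identity $G_\beta(\alpha)>0\iff r_+<1$ — and then extracting an $|k-j|^{-\beta}$ decay for the entries of $M^{-1}$ that is strong enough to survive convolution with the residual decay; this last point is exactly where $\beta\ge3$ (hence $\beta-1>2>1$) is needed.
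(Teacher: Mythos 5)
Your proof is correct and follows essentially the same route as the paper's: linearize $\nabla J$ at $\overline{H}$, invert the averaged Hessian via the quantitative Gershgorin estimate of Proposition~\ref{prop:matrix_inverse}, bound the residual $\partial_{h_k}J(\overline{H})$ using the defining equation \eqref{eq:h_c 1} of $\overline{h}$, and close with a discrete convolution estimate. The only (harmless) divergences are that you apply Proposition~\ref{prop:matrix_inverse} with $\gamma=\beta$ --- which is actually the choice consistent with the off-diagonal decay in Lemma~\ref{lem:Hess_prop} and with the definition \eqref{eq:def_alpha_ast} of $\alpha_\ast$, whereas the paper writes $\gamma=\beta+1$ --- and that you verify $r_+<1\iff G_\beta(\alpha)>0$ explicitly and handle the convolution by splitting rather than by Lemma~\ref{lem:sum shifted inverse}.
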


\begin{proof}
    Let $H^*\in\RR^{N-1}$ be the minimizer of $J$ and let $\overline{H}:= (\overline{h}, \dots, \overline{h})$. For $k\in\{1,\dots,N\}$, we compute:
    \begin{equation}
        \partial_{k}J(\overline{H})=\sum_{i=1}^k \sum_{j=k+1}^N L' ( (j-i) \;\!\overline{h}) = \sum_{i=1-k}^0 \sum_{j=1}^{N-k} L' ( (j-i) \;\!\overline{h}) 
    \end{equation}
    \textit{Step 1.} Using the fact that $\overline{h}$ satisfies \eqref{eq:h_c 1}, we can continue the computation as follows:
    \begin{equation}
        \begin{split}
            \partial_{k}J(\overline{H})&=0-\sum_{i=-\infty}^{-k} \sum_{j=1}^{\infty} L' ( (j-i) \;\!\overline{h}) - \sum_{i=1-k}^0 \sum_{j=N-k+1}^{\infty} L' ( (j-i)\;\!\overline{h})\\
            &= - \sum_{i=-\infty}^{-k} \sum_{\ell=1-i}^{\infty} L' ( \ell \;\!\overline{h}) - \sum_{i=1-k}^0 \sum_{\ell=N-k+1-i}^{\infty} L' ( \ell\;\!\overline{h})
        \end{split}
    \end{equation}
    We can now swap the two sums appearing in the double sums above and get
    \begin{equation}
        \begin{split}
            \partial_{k}J(\overline{H}) &= - \sum_{\ell=1+k}^{\infty} \sum_{i=1-\ell}^{-k} L' ( \ell \;\!\overline{h}) - \sum_{\ell=N-k+1}^{\infty} \sum_{i=\max(1-k, N-k+1-\ell)}^0 L' ( \ell \;\!\overline{h}) \\
            &= - \sum_{\ell=1+k}^{\infty} (\ell - k) L' ( \ell \;\!\overline{h} ) - \sum_{\ell=N-k+1}^{\infty} (k - \max(0, N-\ell)) L' ( \ell \;\!\overline{h} ) \\
            &= - \sum_{\ell=1+k}^{\infty} (\ell - k) L' ( \ell \;\!\overline{h}) - \sum_{\ell=N-k+1}^{N} (\ell + k - N) L' ( \ell \;\!\overline{h}) - \sum_{\ell=N+1}^{\infty} k L' ( \ell \;\!\overline{h}).
        \end{split}
    \end{equation}
    
    Since $\ell \geq 2$ in each of those terms, we know that all the $L' ( \ell \overline{h})$ are positive, which leads to $\partial_{h_k} J (\overline{H}) \leq 0$. On the other hand, we also know that $L' (s) \leq \frac{\beta B}{s^{\beta + 1}}$, which implies
    \begin{equation}
        |\partial_{h_k} J (\overline{H})| \leq \sum_{\ell=1+k}^{\infty} (\ell - k) \frac{\beta B}{(\ell \overline{h})^{\beta+1}} + \sum_{\ell=N-k+1}^{N} (\ell + k - N) \frac{\beta B}{(\ell \overline{h})^{\beta+1}} + \sum_{\ell=N+1}^{\infty} k \frac{\beta B}{(\ell \overline{h})^{\beta+1}}
    \end{equation}
    Thus,
    \begin{equation}\begin{split}
            |\partial_{h_k} J (\overline{H})|            &\leq \frac{\beta B}{\overline{h}^{\beta+1}} \biggl( \sum_{\ell=1+k}^{\infty} (\ell - k) \ell^{-(\beta + 1)} + \sum_{\ell=N-k+1}^{\infty} (\ell + k - N) \ell^{-(\beta + 1)} + \sum_{\ell=N+1}^{\infty} (\ell - N) \ell^{-(\beta + 1)} \biggr) \\
            &\leq \frac{\beta B}{\overline{h}^{\beta+1}} \Bigl( \xi (k) + \xi (N-k) + \xi (N) \Bigr),
            \end{split}
        \end{equation}
        where $\xi (n) = \sum_{\ell = n+1}^\infty (\ell - n) \ell^{-(\beta + 1)}$.     
    Moreover, since we have $(\ell-n)\leq\ell$, then $\xi (n) \leq \frac{C_\beta}{n^{\beta - 1}}$ for some $C_\beta$ depending only on $\beta > 1$. We are lead to:
    \begin{equation} \label{eq:est d_hk J}
        |\partial_{h_k} J (\overline{H})| \leq C_\beta \frac{\beta B}{\overline{h}^{\beta+1}} \Bigl( k^{1 - \beta} + (N-k)^{1 - \beta} + N^{1 - \beta} \Bigr).
    \end{equation}   
    
  \noindent  \textit{Step 2.} On the other hand, with $H^*$ a global minimum of $J$, we have
    \begin{equation}
        \nabla J (\overline{H}) = \nabla J (\overline{H}) - \nabla J (H^\star) = \int_0^1 \nabla^2 J \Big(\overline{H} + t \big(H^\star - \overline{H}\big)\Big) dt \, (\overline{H} - H^*),
    \end{equation}
    where $\nabla^2$ refers to the hessian matrix.    
   For all $t \in (0, 1)$, there holds $\overline{H} + t (H^\star - \overline{H}) \in [\widecheck{h}, \widehat{h}]^{N-1}$, therefore $\nabla^2 J (\overline{H} + t (H^* - \overline{H}))$ satisfies \eqref{eq:Hess_diag}, \eqref{eq:Hess_nodiag} and \eqref{eq:Hess_diag_dominant} in Lemma~\ref{lem:Hess_prop} provided that $\alpha$ is chosen large enough. Consequently, it satisfies the conclusions of Proposition \ref{prop:matrix_inverse} for $\alpha$ large enough, with $\gamma:= \beta + 1$, $d = \Lambda_d$ and $c = \Lambda_{nd}$. In particular, this hessian matrix is invertible and we write
    \begin{equation}
        \overline{H} - H^* = \biggl( \int_0^1 \nabla^2 J \Big(\overline{H} + t \big(H^* - \overline{H}\big)\Big) dt \biggr)^{-1} \nabla J (\overline{H}).
    \end{equation}
    In other words, looking coordinate by coordinate, 
    \begin{equation}
        \overline{h} - h^*_k = \sum_{\ell = 1}^{N-1} \biggl( \biggl( \int_0^1 \nabla^2 J \Big(\overline{H} + t \big(H^* - \overline{H}\big)\Big) dt \biggr)^{-1} \biggr)_{k \ell} \partial_{h_\ell} J (\overline{H}),
    \end{equation}
    which leads to
    \begin{equation}\begin{split}
        | \overline{h} - h^*_k | &\leq \sum_{\ell = 1}^{N-1} \Biggl| \biggl( \biggl( \int_0^1 \nabla^2 J \Big(\overline{H} + t \big(H^* - \overline{H}\big)\Big) dt \biggr)^{-1} \biggr)_{k \ell} \Biggr|\; \Bigl| \partial_{h_\ell} J (\overline{H}) \Bigr| \\
            &\leq \kappa \frac{1 + \Lambda_{nd}}{\Lambda_d} \Bigl| \partial_{h_k} J (\overline{H}) \Bigr| + \sum_{\ell \neq k} \kappa \frac{\Lambda_{nd}}{\Lambda_d^2 |\ell - k|^{\beta+1}} \Bigl| \partial_{h_\ell} J (\overline{H}) \Bigr|,
    \end{split}\end{equation}
    where for the last inequality we used the conclusions of Proposition~\ref{prop:matrix_inverse}.
    Thanks to \eqref{eq:est d_hk J}, this yields
    \begin{equation}\label{ouranos}\begin{split}
        | \overline{h} - h^*_k | \leq \kappa C_\beta \frac{1 + \Lambda_{nd}}{\Lambda_d} \frac{\beta B}{h_c^{\beta+1}} \Bigl( k^{1 - \beta} &+ (N-k)^{1 - \beta} + N^{1 - \beta} \Bigr) \\&+ \kappa C_\beta \frac{\Lambda_{nd}}{\Lambda_d^2} \sum_{\ell \neq k} \frac{1}{|\ell - k|^{\beta+1}} \Bigl( \ell^{1 - \beta} + (N-\ell)^{1 - \beta} + N^{1 - \beta} \Bigr)
    \end{split}
    \end{equation}
We now study the second term in the right-hand side above and prove that it is bounded (up to a multiplicative constant) by the first one. To start with, we observe that since $k$ and $\ell$ are integers we have
\begin{equation}
    \frac{1}{|\ell - k|^{\beta+1}} \leq \frac{1}{|\ell - k|^{\beta-1}}
\end{equation}
Thus,
\begin{equation}\label{gaia}
    \sum_{\ell \neq k} \frac{1}{|\ell - k|^{\beta+1}} \Bigl( \ell^{1 - \beta} + (N-\ell)^{1 - \beta} + N^{1 - \beta} \Bigr)\leq\sum_{\ell \neq k} \frac{\ell^{1 - \beta}}{|\ell - k|^{\beta-1}} + \sum_{\ell \neq k} \frac{(N-\ell)^{1 - \beta}}{|\ell - k|^{\beta-1}} + \sum_{\ell \neq k} \frac{N^{1 - \beta}}{|\ell - k|^{\beta-1}}.
\end{equation}
To estimate the first term, we use Lemma~\ref{lem:sum shifted inverse}, that is a technical lemma stated and proved in appendix. This gives
\begin{equation}
    \sum_{\ell \neq k} \frac{\ell^{1 - \beta}}{|\ell - k|^{\beta-1}}\;\leq\;C_\beta\frac{1}{k^{\beta-1}},
\end{equation}
where $C_\beta$ is a constant depending only on $\beta$. The second term is also estimated using Lemma~\ref{lem:sum shifted inverse}. We get
\begin{equation}
    \sum_{\ell \neq k} \frac{(N-\ell)^{1 - \beta}}{|\ell - k|^{\beta-1}}\;\leq\;C_\beta\frac{1}{(N-k)^{\beta-1}}
\end{equation}
The estimate of the last term is direct:
\begin{equation}
    \sum_{\ell \neq k} \frac{N^{1 - \beta}}{|\ell - k|^{\beta-1}}\;\leq\;C_\beta\frac{1}{N^{\beta-1}}.
\end{equation}
Plugging these three estimates back into~\eqref{gaia} and then in~\eqref{ouranos} eventually gives~\eqref{eq:holy graal}.
\end{proof}

\begin{remark} \label{rk:alpha*}
Throughout the proof of the theorem, several technical restrictions appear on the value of $\alpha$. For example, we require in Lemma~\ref{lem:Hess_prop} and Lemma~\ref{lem:uniqueness} to have $\alpha>\alpha_\dag$ which ensures uniqueness of the minimizer.
It is of natural concern for applications to have a numerical value of the constant $\alpha_\dag.$ Since it is defined implicitly as the largest zero of some function, see~\eqref{eq:def_alpha_dag}, we don't expect to obtain a useful formula.
For the main case of physical interest (which is $\beta=3$, corresponding to the magnetic scaling), we can compute an approximate value for $\alpha_\dag(3)$ by a numerical extraction of the roots of the function~\eqref{eq:def_alpha_dag}, using a computer. We obtain $\alpha_\dag(3) \approx 4.9$. It is also worth studying $\beta=6$ (since it corresponds to the Van der Waals scaling), and we get that $\alpha_\dag(6) \approx 6.3$.

Similarly, the asymptotic development \eqref{eq:asymptotics} is guaranteed by constraints \eqref{def:r+} which corresponds to $\alpha_\ast(3)\approx34$ for $\beta=3$ (magnetic scaling) and $\alpha_\ast(6)\approx161$ for $\beta=6$ (Van der Waals scaling). These computations for $\alpha_\ast$ are also performed using a root-finding algorithm from the definition \eqref{eq:def_alpha_ast}.
\end{remark}
\section{Study of the Ring structure }
We now study the ring structure defined at~\eqref{def:ring} and prove Theorem~\ref{thrm:ring}. 

\subsection{Existence and Uniqueness}

To prove the existence of a critical point that have a ring structure, we start by computing the gradient at this position $\mR_r\in\big(\RR^3\times\SS^2\big)^N$.

\begin{lemma}\label{lem:grad ring}
Let $r>0$. The gradient of the energy at the ring $\mR_r$ without the constraint $|m_i|=1$ is given by:
\begin{equation}\begin{split}
    &\nabla_{x_i} U(\mR_r)=\nabla_{x_i}U^d(\mR_r)+\nabla_{x_i}U^s(\mR_r) \\
    &\nabla_{m_i} U(\mR_r)= \nabla_{m_i}U^d(\mR_r)
    \end{split}
\end{equation}
where 
\begin{equation}\label{eq:ring energy gradient position}
    \nabla_{x_i} U^d(\mR_r)=\frac{\beta}{2^{\beta+1}\,r^{\beta+1}}\sum_{j=1}^{N-1}\frac{1}{\big|\sin\big(\frac{j\pi}{N}\big)\big|^\beta}
    \bigg[(B - B_0)\cos^2\left(\frac{j\pi}{N}\right)+B_0\bigg]\frac{x_i}{|x_i|},
\end{equation}
\begin{equation}
\nabla_{x_i}U^s(\mR_r)= -\frac{\alpha A}{2^{\alpha+1} r^{\alpha+1}}\sum_{j=1}^{N-1}\frac{1}{\Big|\sin\left(\frac{j\pi}{N}\right)\Big|^{\alpha}}\frac{x_i}{|x_i|}
\end{equation}
and 
\begin{equation}\label{eq:ring energy gradient spin}
    \nabla_{m_i}U^d(\mR_r)=\frac{1}{2^\beta r^\beta}\sum_{j=1}^{N-1}\Bigg(\frac{B-B_0}{\big|\sin\Big(\frac{j\pi}{N}\Big)\big|^{\beta-2}}-\frac{B}{\big|\sin\Big(\frac{j\pi}{N}\Big)\big|^\beta}\Bigg)\frac{m_i}{|m_i|}.
\end{equation}
\end{lemma}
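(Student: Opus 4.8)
The plan is to substitute the ring configuration \eqref{def:ring} into the \emph{unconstrained} gradient expressions already recorded for the spear — the general formula for $\nabla_{x_i}U^d$ appearing on the first line of \eqref{eq:nabla_xUd(S)} (before specialization), the formula $\nabla_{x_i}U^s = -\alpha A\sum_{j\ne i} r_{ij}/|r_{ij}|^{\alpha+2}$ from \eqref{eq:nabla_xUs(S)}, and the general formula for $\nabla_{m_i}U^d$ on the first line of \eqref{eq:nabla_mUd(S)} — all of which are the free Euclidean gradients at an arbitrary configuration, which is exactly what the statement asks for. Everything then reduces to a handful of trigonometric identities. Writing $\theta_k := 2\pi k/N$, $\psi_{ij} := \theta_i-\theta_j$ and $e_i := x_i/|x_i|$, a direct computation from \eqref{def:ring} gives $|r_{ij}| = 2r\,|\sin(\psi_{ij}/2)|$, $m_i\cdot r_{ij} = m_j\cdot r_{ij} = r\sin\psi_{ij}$, $m_i\cdot m_j = \cos\psi_{ij}$, together with the decompositions $r_{ij} = 2r\sin(\psi_{ij}/2)\big(\sin(\psi_{ij}/2)\,e_i + \cos(\psi_{ij}/2)\,m_i\big)$ and $m_j = \cos\psi_{ij}\,m_i + \sin\psi_{ij}\,e_i$. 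In particular every vector occurring lies in the plane spanned by $e_i$ and $m_i$, so the third coordinate of each gradient vanishes identically.

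The crucial observation is that, once these identities are plugged in, the bracketed vector in the $j$-th term of $\nabla_{x_i}U^d$ collapses to a scalar multiple of $r_{ij}$: using $m_i+m_j = \tfrac{\cos(\psi_{ij}/2)}{r\sin(\psi_{ij}/2)}\,r_{ij}$ and $\tfrac{\sin\psi_{ij}\cos(\psi_{ij}/2)}{\sin(\psi_{ij}/2)} = 2\cos^2(\psi_{ij}/2)$, the bracket equals $\big(\tfrac{\beta B_0}{B+B_0}\cos\psi_{ij} - \beta\cos^2(\psi_{ij}/2)\big)r_{ij}$. Since $\nabla_{x_i}U^s$ is already a sum of multiples of the $r_{ij}$, in both cases it remains only to decompose each $r_{ij}$ along $e_i$ and $m_i$ and to sum over $j\ne i$. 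Here one invokes the reflection symmetry $j\mapsto 2i-j \pmod{N}$, i.e.\ $\psi_{ij}\mapsto -\psi_{ij}$: the $e_i$-coefficients are even in $\psi_{ij}$ while the $m_i$-coefficients are odd (they carry a factor $\sin\psi_{ij}$), so the tangential contributions cancel in pairs and only the radial part remains — the self-paired index $j = i+N/2$ (when $N$ is even) contributing nothing since $\sin\psi_{ij}=0$ there. The computation of $\nabla_{m_i}U^d$ proceeds by the same parity bookkeeping, with the roles reversed: now the $e_i$-components are odd and cancel, the $m_i$-components are even and survive.

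Finally one tidies the surviving one-dimensional sums. Using $\cos\psi = 2\cos^2(\psi/2)-1 = 1-2\sin^2(\psi/2)$ one checks the algebraic identities $(B+B_0)\cos^2(\psi_{ij}/2) - B_0\cos\psi_{ij} = (B-B_0)\cos^2(\psi_{ij}/2) + B_0$ and $B_0\cos\psi_{ij} - (B+B_0)\cos^2(\psi_{ij}/2) = (B-B_0)\sin^2(\psi_{ij}/2) - B$, the latter producing, after division by $|\sin(\psi_{ij}/2)|^\beta$, the bracket in \eqref{eq:ring energy gradient spin}. Since $|\sin(\psi_{ij}/2)|$ and $\cos^2(\psi_{ij}/2)$ depend only on $(i-j) \bmod N$ and are invariant under $k\mapsto N-k$, the sum over $j\ne i$ reindexes as $\sum_{j=1}^{N-1}$ with argument $j\pi/N$; pulling out the constants $(2r)^{-(\beta+1)}$, $(2r)^{-(\alpha+1)}$ and $(2r)^{-\beta}$ then yields precisely \eqref{eq:ring energy gradient position}, the stated formula for $\nabla_{x_i}U^s(\mR_r)$, and \eqref{eq:ring energy gradient spin}. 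The only mildly delicate points are the ``bracket parallel to $r_{ij}$'' collapse and the careful tracking of parity under $\psi_{ij}\mapsto-\psi_{ij}$; beyond that the argument is routine computation, with no genuine obstacle.
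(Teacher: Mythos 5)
Your computation is correct and follows essentially the same route as the paper's: substitute the ring data into the unconstrained gradient formulas, compute the pairwise scalar products ($|r_{ij}|=2r|\sin(\psi_{ij}/2)|$, $m_i\cdot r_{ij}=m_j\cdot r_{ij}=r\sin\psi_{ij}$, $m_i\cdot m_j=\cos\psi_{ij}$), kill the tangential components by pairing $j$ with its reflection (the paper's $A_j+A_{N-j}$ symmetrization is exactly your parity argument under $\psi_{ij}\mapsto-\psi_{ij}$), and finish with the same half-angle identities. Your intermediate observation that the dipolar bracket collapses to the scalar multiple $\big(\tfrac{\beta B_0}{B+B_0}\cos\psi_{ij}-\beta\cos^2(\psi_{ij}/2)\big)r_{ij}$ is a slightly cleaner organization than the paper's coordinate-wise computation at $i=0$, but it is the same argument in substance, and all the identities you invoke check out.
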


Gradient computations are based on explicit derivation of the dipolar interaction potential~\eqref{eq:interaction potential} and manipulations involving the $N$-fold symmetry property. 
To ease the reading of the article, we postponed these computations to the appendix.
From this lemma, we are able to establish the existence of the ring;
\begin{lemma}\label{lem:radius ring}
    There exists a unique radius $r^\star$ such that the ring $\mR_{r^\star}$ is a critical point of the potential energy $U^d+U^s$ under the magnetization constraint $|m_i|=1$. Moreover, there holds 
    \begin{equation} \label{def:rN}
    r^\ast_N=(\widetilde{A}_N/\widetilde{B}_N)^{1/(\alpha-\beta)},
    \end{equation}
    where 
    \begin{equation} \label{eq:def_A_tilde}
            \widetilde{A}_N:=\frac{\alpha A}{2^{\alpha+1}}\sum_{j=1}^{N-1}\frac{1}{\Big|\sin\left(\frac{j\pi}{N}\right)\Big|^{\alpha}}\end{equation}
    and
   \begin{equation}  \label{eq:def_B_tilde}
    \widetilde{B}_N:=\frac{\beta}{2^{\beta+1}}\sum_{j=1}^{N-1}\frac{1}{\big|\sin\big(\frac{j\pi}{N}\big)\big|^\beta}
    \bigg[(B - B_0)\cos^2\left(\frac{j\pi}{N}\right)+B_0\bigg].
    \end{equation}
\end{lemma}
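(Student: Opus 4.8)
The plan is to deduce the lemma almost immediately from the gradient formulas of Lemma~\ref{lem:grad ring}. First I would recall the Lagrange-multiplier characterization of constrained critical points: a configuration $(X,M)$ with $|m_i|=1$ for all $i$ is a critical point of $U^d+U^s$ under the magnetization constraint if and only if $\nabla_{x_i}U(\mR_r)=0$ for every $i$ (the positions are unconstrained) and $\nabla_{m_i}U(\mR_r)$ is colinear with $m_i$ for every $i$ (the tangential part of the spin gradient on $\SS^2$ vanishes). Now Lemma~\ref{lem:grad ring} gives $\nabla_{m_i}U(\mR_r)=\nabla_{m_i}U^d(\mR_r)$, and the formula~\eqref{eq:ring energy gradient spin} shows this vector is already a scalar multiple of $m_i/|m_i|=m_i$, for \emph{every} $r>0$. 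Hence the spin equations are satisfied automatically and the whole problem reduces to the vanishing of the position gradient.

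Next, by~\eqref{eq:ring energy gradient position} and the formula for $\nabla_{x_i}U^s(\mR_r)$, both contributions to $\nabla_{x_i}U(\mR_r)$ are colinear with $x_i/|x_i|$. Using the abbreviations~\eqref{eq:def_A_tilde}--\eqref{eq:def_B_tilde} I would rewrite them as $\nabla_{x_i}U^d(\mR_r)=\widetilde{B}_N\,r^{-(\beta+1)}\,\frac{x_i}{|x_i|}$ and $\nabla_{x_i}U^s(\mR_r)=-\widetilde{A}_N\,r^{-(\alpha+1)}\,\frac{x_i}{|x_i|}$. Because all $N$ particles are equivalent under the rotational symmetry, the $N$ vector equations $\nabla_{x_i}U(\mR_r)=0$ collapse to the single scalar equation $\widetilde{B}_N\,r^{-(\beta+1)}-\widetilde{A}_N\,r^{-(\alpha+1)}=0$, that is $r^{\alpha-\beta}=\widetilde{A}_N/\widetilde{B}_N$.

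It then suffices to verify $\widetilde{A}_N>0$ and $\widetilde{B}_N>0$ so that this equation admits a unique positive root. Positivity of $\widetilde{A}_N$ is clear, being a positive constant times a sum of positive terms. For $\widetilde{B}_N$ the only point requiring care is the sign of the bracket $(B-B_0)\cos^2(j\pi/N)+B_0$: since $\cos^2(j\pi/N)\in[0,1]$, this quantity is bounded below by $\min(B_0,B)>0$ regardless of the sign of $B-B_0$, so every summand is strictly positive and $\widetilde{B}_N>0$. Since $\alpha>\beta$, the map $r\mapsto r^{\alpha-\beta}$ is an increasing bijection of $(0,\infty)$ onto itself, so $r^{\alpha-\beta}=\widetilde{A}_N/\widetilde{B}_N$ has the unique positive solution $r^\ast_N=(\widetilde{A}_N/\widetilde{B}_N)^{1/(\alpha-\beta)}$, which is exactly~\eqref{def:rN}; and $\mR_{r^\ast_N}$ is the asserted critical point.

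I do not expect any real obstacle here: once Lemma~\ref{lem:grad ring} is granted, the proof is an elementary algebraic manipulation. The only mild subtleties are the bookkeeping that the symmetry reduces the $N$ position equations to one scalar equation, the elementary lower bound ensuring $\widetilde{B}_N>0$, and invoking the constrained-critical-point condition on $(\SS^2)^N$ correctly; all of these are routine. (The genuine work of this section is hidden in Lemma~\ref{lem:grad ring}, whose derivation of~\eqref{eq:ring energy gradient position}--\eqref{eq:ring energy gradient spin} from~\eqref{eq:interaction potential} via the $N$-fold symmetry is deferred to the appendix.)
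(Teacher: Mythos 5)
Your proposal is correct and follows essentially the same route as the paper: reduce the spin equations to the observation that $\nabla_{m_i}U^d(\mR_r)$ is already collinear with $m_i$ (so only the Lagrange multiplier remains), then solve the single scalar radial equation $\widetilde{B}_N r^{-(\beta+1)}=\widetilde{A}_N r^{-(\alpha+1)}$ after checking $\widetilde{A}_N,\widetilde{B}_N>0$. The only cosmetic difference is your positivity argument for the bracket (lower bound $\min(B,B_0)$ versus the paper's rewriting as $B\cos^2+B_0\sin^2$), both of which are equally valid.
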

\begin{proof}
    To start with, we observe that in the absence of the constraint $|m_i|=1$ the gradient of $U^d+U^s$ with respect to $m_i$  is a vector collinear with $m_i$. Therefore, in the presence of this constraint, it has a zero contribution (recall that the gradient of $m_i\mapsto|m_i|$ is collinear to $m_i$).
    In reality, the coefficient appearing in front of $m_i/|m_i|$ in~\eqref{eq:ring energy gradient spin} is the Lagrange multiplier associated to the constraint $|m_i|=1$.

    We now focus on the gradient with respect to $x_i$. Given the formulas~\eqref{eq:ring energy gradient position}, we can reduce the analysis to the existence and uniqueness of a critical radius $r=r^\ast_N$, which is a zero point for the following function defined on $[0,+\infty)$:
    \begin{equation}\label{de Cadix}
        r\;\longmapsto\;\frac{\widetilde{A}_N}{r^{\alpha+1}}-\frac{\widetilde{B}_N}{r^{\beta+1}},
    \end{equation}
    where $\widetilde{A}_N$ and $\widetilde{B}_N$ are defined in \eqref{eq:def_A_tilde} and \eqref{eq:def_B_tilde}.
    It is direct to check that the function~\eqref{de Cadix} admits a unique zero provided that $\widetilde{A}_N$ and $\widetilde{B}_N$ are positive.
    It is clear that $\widetilde{A}_N>0$. Regarding $\widetilde{B}_N$, we can conclude with the following computation:
    \begin{equation}\begin{split}
        (B - B_0)\cos^2\left(\frac{j\pi}{N}\right)+B_0 &= (B - B_0)\cos^2\left(\frac{j\pi}{N}\right)+B_0 \left( \sin^2\left(\frac{j\pi}{N}\right) + \cos^2\left(\frac{j\pi}{N}\right) \right) \\
            &= B \cos^2\left(\frac{j\pi}{N}\right)+B_0\sin^2\left(\frac{j\pi}{N}\right) > 0.
    \end{split}\end{equation}

We note that the zero of the function~\eqref{de Cadix} is explicitly given by $r^\ast_N=(\widetilde{A}_N/\widetilde{B}_N)^{1/(\alpha-\beta)}$.
\end{proof}



\subsection{Asymptotic analysis}

Now, we turn our attention to the asymptotics of the quantities mentioned in Theorem \ref{thrm:ring}-$(ii)$. In view of the expression of $r^\ast_N$, we only have to get an expansion for $\widetilde{A}_N$ and $\widetilde{B}_N$. 

\begin{proof}[Proof of Theorem \ref{thrm:ring}-$(ii)$]

First, recall two elementary estimates: 
    there exists $C > 0$ such that, for all $x \in (0, \frac{\pi}{2}]$, there holds
    \begin{align}
        \biggl| \frac{1}{| \sin (x) |^\alpha} - \frac{1}{x^\alpha} \biggr| \leq C x^{2 - \alpha}, \label{est:sin1}\\
        \biggl| \frac{1}{\big|\sin(x)\big|^\beta}
    \bigg[(B - B_0)\cos^2(x)+B_0\bigg]  - \frac{B}{x^\beta} \biggr| \leq C x^{2 - \beta}. \label{est:sin2}
    \end{align}
They are obtained directly from standard function analysis and Taylor expansions.\medskip

\noindent \textit{Step 1}: Asymptotic of $\widetilde{A}_N$.
    
 We start by splitting the sum in $\widetilde{A}_N$ into two separate sums and utilizing the $\pi$-periodicity of $\sin^2$ to obtain:
    \begin{equation}
        \widetilde{A}_N = \frac{\alpha A}{2^{\alpha+1}} \Biggl( \sum_{j=1}^{\lfloor \frac{N}{2} \rfloor}\frac{1}{\Big|\sin\left(\frac{j\pi}{N}\right)\Big|^{\alpha}} + \sum_{j=1}^{N-\lfloor \frac{N}{2} \rfloor-1}\frac{1}{\Big|\sin\left(\frac{j\pi}{N}\right)\Big|^{\alpha}} \Biggr).
    \end{equation}
    Using \eqref{est:sin1}, we get for the first term:
    \begin{align} 
        \biggl| \sum_{j=1}^{\lfloor \frac{N}{2} \rfloor}\frac{1}{\Big|\sin\left(\frac{j\pi}{N}\right)\Big|^{\alpha}} - \sum_{j=1}^{\lfloor \frac{N}{2} \rfloor}\frac{N^\alpha}{(j\pi)^{\alpha}} \biggr| & \leq C \sum_{j=1}^{\lfloor \frac{N}{2} \rfloor}\frac{N^{\alpha-2}}{(j\pi)^{\alpha-2}} 
            \leq C \frac{N^{\alpha-2}}{\pi^{\alpha-2}} \sum_{j=1}^{\lfloor \frac{N}{2} \rfloor}\frac{1}{j^{\alpha-2}} \nonumber \\
            & \leq C N^{\alpha-2} \zeta (\alpha - 2), \label{eq:asympt_exp_A_N}
\end{align}
    as $\alpha - 2 > 1$. On the other hand, there also holds
    \begin{equation} \label{eq:zeta_like_exp}
        \sum_{j=1}^{\lfloor \frac{N}{2} \rfloor}\frac{N^\alpha}{(j\pi)^{\alpha}} = \frac{N^\alpha}{\pi^{\alpha}} \sum_{j=1}^{\lfloor \frac{N}{2} \rfloor}\frac{1}{j^{\alpha}} \sim \frac{N^\alpha}{\pi^{\alpha}} \zeta(\alpha), \quad \text{as } N \to \infty.
    \end{equation}
    Concerning the second term, we observe that it can be studied using a similar technique, which eventually leads to the same asymptotic quantity.
    This leads to
    \begin{equation}
        \widetilde{A}_N \sim 2 \frac{\alpha A}{2^{\alpha+1}} \frac{N^\alpha}{\pi^{\alpha}} = \frac{\alpha A N^\alpha}{(2\pi)^{\alpha}}\zeta(\alpha), \quad \text{as } N \to \infty.
    \end{equation}
    \medskip

 \noindent \textit{Step 2}: asymptotic of $\widetilde{B}_N$

   Similarly to the study of $\widetilde{A}_N$, we split into two parts the summation defining $\widetilde{B}_N$:
    \begin{equation}
        \widetilde{B}_N = \frac{\beta}{2^{\beta+1}} \Biggl( \sum_{j=1}^{\lfloor \frac{N}{2} \rfloor} + \sum_{j=1}^{N-\lfloor \frac{N}{2} \rfloor-1} \Biggr)\biggl[\frac{1}{\big|\sin\big(\frac{j\pi}{N}\big)\big|^\beta}
    \Big((B - B_0)\cos^2\left(\frac{j\pi}{N}\right)+B_0\Big)\biggr].
    \end{equation}
We first observe that
    \begin{equation}\begin{split}
        \biggl| \sum_{j=1}^{\lfloor \frac{N}{2} \rfloor} \frac{1}{\big|\sin\big(\frac{j\pi}{N}\big)\big|^\beta} \bigg[(B - B_0)\cos^2\left(\frac{j\pi}{N}\right)+B_0\bigg] - B \sum_{j=1}^{\lfloor \frac{N}{2} \rfloor} \frac{N^\beta}{(j \pi)^\beta} \biggr|
                &\leq C \sum_{j=1}^{\lfloor \frac{N}{2} \rfloor} \frac{N^{\beta - 2}}{(j \pi)^{\beta - 2}} \\
                &\leq C \frac{N^{\beta - 2}}{\pi^{\beta - 2}} \sum_{j=1}^{\lfloor \frac{N}{2} \rfloor} j^{2 - \beta}.
    \end{split}\end{equation}
    If $\beta > 3$, then $\sum_{j=1}^{\lfloor \frac{N}{2} \rfloor} j^{2 - \beta}$ is bounded and
    \begin{equation}\label{Porthos}
        \biggl| \sum_{j=1}^{\lfloor \frac{N}{2} \rfloor} \frac{1}{\big|\sin\big(\frac{j\pi}{N}\big)\big|^\beta} \bigg[(B - B_0)\cos^2\left(\frac{j\pi}{N}\right)+B_0\bigg] - B \sum_{j=1}^{\lfloor \frac{N}{2} \rfloor} \frac{N^\beta}{(j \pi)^\beta} \biggr| \leq C N^{\beta - 2}.
    \end{equation}
    If $\beta = 3$, then $\sum_{j=1}^{\lfloor \frac{N}{2} \rfloor} j^{2 - \beta} \leq C \ln{N}$, which yields
    \begin{equation}\label{Athos}
        \biggl| \sum_{j=1}^{\lfloor \frac{N}{2} \rfloor} \frac{1}{\big|\sin\big(\frac{j\pi}{N}\big)\big|^\beta} \bigg[(B - B_0)\cos^2\left(\frac{j\pi}{N}\right)+B_0\bigg] - B \sum_{j=1}^{\lfloor \frac{N}{2} \rfloor} \frac{N^\beta}{(j \pi)^\beta} \biggr| \leq C N \ln{N}.
    \end{equation}
    If $\beta < 3$, then $\sum_{j=1}^{\lfloor \frac{N}{2} \rfloor} j^{2 - \beta} \leq C N^{1 - \beta}$ and we get
    \begin{equation}\label{Aramis}
        \biggl| \sum_{j=1}^{\lfloor \frac{N}{2} \rfloor} \frac{1}{\big|\sin\big(\frac{j\pi}{N}\big)\big|^\beta} \bigg[(B - B_0)\cos^2\left(\frac{j\pi}{N}\right)+B_0\bigg] - B \sum_{j=1}^{\lfloor \frac{N}{2} \rfloor} \frac{N^\beta}{(j \pi)^\beta} \biggr| \leq C N.
    \end{equation}
    On the other hand, \eqref{eq:zeta_like_exp} with $\beta$ instead of $\alpha$ also holds. Since the second term can be estimated with the same technique, we get as $N \to \infty$,
    \begin{equation}
        \widetilde{B}_N \sim 2 \frac{\beta B}{2^{\beta+1}} \frac{N^\beta}{\pi^{\beta}} \zeta(\beta) = \frac{\beta B N^\beta}{(2 \pi)^{\beta}} \zeta(\beta).
    \end{equation}
    \medskip

  \noindent  \textit{Step 3}: The previous asymptotics on $\widetilde{A}_N$ and on $\widetilde{B}_N$ give the announced asymptotic 
    \begin{equation}\label{D Artagnan}
        r_n^\ast\,\sim\,\frac{N}{2\pi}\overline{h}
    \end{equation}
    directly from the formula $r_N^\ast = (\widetilde{A}_N/\widetilde{B}_N)^{1/(\alpha-\beta)}$. Concerning the distance between two nearest particles, it is given (for instance) by the distance between $x_0$ and $x_1$:
    \begin{align}
        | x_1 - x_0 | &= r^\ast_N \Biggl| \begin{pmatrix}
                \cos\left(\frac{2\pi}{N}\right) - 1\\
                \sin\left(\frac{2\pi}{N}\right)\\
                0
            \end{pmatrix} \Biggr|= r^\ast_N \sqrt{\Bigl(\cos\left(\frac{2\pi}{N}\right) - 1\Bigr)^2 + \sin^2\left(\frac{2\pi}{N}\right)}\quad\sim r^\ast_N \frac{2\pi}{N},
    \end{align}
    in the asymptotics $N \to \infty$. Thus, using~\eqref{D Artagnan}, we get the convergence of the distances between two neighboring nanoparticle towards $\overline{h}$.
    Concerning the speeds of convergence towards $\overline{h}$, as stated by~\eqref{Milady de Winter}, they are directly obtained from~\eqref{Porthos},~\eqref{Athos} and~\eqref{Aramis} respectively (together with \eqref{eq:asympt_exp_A_N} and straight-forward elementary asymptotic expansions).
\end{proof}

\subsection*{Acknowledgements}

The author acknowledges partial grant support by the ANR project MOSICOF ANR-21-CE40-0004. R. Côte has benefited from support provided by the University of Strasbourg Institute for Advanced Study (USIAS) for a Fellowship within the French national program ``Investment for the future'' (IdEx-Unistra).
\appendix
\section{Appendix}

\subsection{Quantitative Gershgorin Circles Theorem}


We start with two technical lemmas.

\begin{lemma}\label{lem:sum shifted inverse}
Let $i \ne j$ and let $\gamma>1$. Then for $\delta := 2(1+2^\gamma) \zeta(\gamma)$ there holds
\begin{equation} \sum_{k \in \llbracket 1, n \rrbracket \setminus \{ i,j \}} \frac{1}{|i-k|^\gamma} \frac{1}{|k-j|^\gamma} 
\le \frac{\delta}{|i-j|^\gamma}. \end{equation}
\end{lemma}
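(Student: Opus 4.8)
Without loss of generality assume $i<j$ and set $d:=j-i\ge 1$; the claimed bound reads $\sum_{k}\frac{1}{|i-k|^\gamma|k-j|^\gamma}\le \delta/d^\gamma$. The plan is to split the summation index $k\in\llbracket 1,n\rrbracket\setminus\{i,j\}$ into the three natural regions $k<i$, $i<k<j$ and $k>j$, and to estimate each contribution separately by a constant multiple of $d^{-\gamma}$, the constants adding up to exactly $\delta$.

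For the left region $k<i$ one has $|k-j|=j-k\ge j-i=d$, hence $\frac{1}{|k-j|^\gamma}\le d^{-\gamma}$, and summing the remaining factor $|i-k|^{-\gamma}$ over $k\le i-1$ gives at most $\sum_{\ell\ge1}\ell^{-\gamma}=\zeta(\gamma)$; so this region contributes at most $\zeta(\gamma)/d^\gamma$. The right region $k>j$ is handled symmetrically, now using $|k-i|\ge d$, and contributes again at most $\zeta(\gamma)/d^\gamma$.

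The middle region $i<k<j$ is the only part requiring a genuine (though elementary) idea, and it is empty when $d=1$. Writing $a:=k-i\in\{1,\dots,d-1\}$ one has $|k-j|=d-a$, so the contribution equals $\sum_{a=1}^{d-1}\frac{1}{a^\gamma(d-a)^\gamma}$. I would split this sum at the midpoint: if $a\le d/2$ then $d-a\ge d/2$, whence $\frac{1}{(d-a)^\gamma}\le 2^\gamma/d^\gamma$ and $\sum_{a\le d/2}\frac{1}{a^\gamma(d-a)^\gamma}\le \frac{2^\gamma}{d^\gamma}\sum_{a\ge1}a^{-\gamma}=2^\gamma\zeta(\gamma)/d^\gamma$; by the symmetry $a\leftrightarrow d-a$ the complementary range $a\ge d/2$ is bounded in the same way, so the middle region contributes at most $2^{\gamma+1}\zeta(\gamma)/d^\gamma$.

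Adding the three bounds yields $\frac{(2+2^{\gamma+1})\zeta(\gamma)}{d^\gamma}=\frac{2(1+2^\gamma)\zeta(\gamma)}{d^\gamma}=\frac{\delta}{|i-j|^\gamma}$, which is the claim. No step presents a real obstacle: the whole argument is a dyadic-type splitting, and the only point needing mild care is the midpoint split in the middle region (the potential midpoint term $a=d/2$ when $d$ is even is not lost, since it is covered by both halves).
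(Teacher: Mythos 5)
Your proof is correct and follows essentially the same route as the paper: the same three-region split ($k<i$, $i<k<j$, $k>j$), the same $\zeta(\gamma)/|i-j|^\gamma$ bound on each outer region, and the same midpoint splitting of the middle sum yielding $2^{\gamma+1}\zeta(\gamma)/|i-j|^\gamma$. The constants assemble to $\delta$ exactly as in the paper's argument, so there is nothing to add.
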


\begin{proof}
Without restriction of generality, it suffices to consider the case $i < j$. First, we have
\begin{equation}\begin{split}
\sum_{k=1}^{i-1}  \frac{1}{(i-k)^\gamma} \frac{1}{(j-k)^\gamma}  = \sum_{h=1}^{i-1} \frac{1}{h^\gamma}  \frac{1}{(j-i+h)^\gamma}  \le \frac{1}{(j-i)^\gamma} \sum_{h=1}^{+\infty} \frac{1}{h^\gamma} \le \frac{\zeta(\gamma)}{(j-i)^\gamma}.
\end{split}\end{equation}
Similarly,
\begin{equation}\begin{split}
\sum_{k=j+1}^{N}  \frac{1}{(k-i)^\gamma} \frac{1}{(k-j)^\gamma}  = \sum_{h=1}^{n-j} \frac{1}{(j-i+h)^\gamma}  \frac{1}{h^\gamma} \le \frac{1}{(j-i)^\gamma} \sum_{h=1}^{+\infty} \frac{1}{h^\gamma} \le \frac{\zeta(\gamma)}{(j-i)^\gamma}.
\end{split}\end{equation}
Secondly, splitting the middle sum around $(j-i)/2$, there hold
\begin{equation}\begin{split}
\sum_{k=i+1}^{j-1} \frac{1}{(k-i)^\gamma} \frac{1}{(j-k)^\gamma} & \le 2 \sum_{k=i+1}^{ \lceil \frac{j+i}{2} \rceil}  \frac{1}{(k-i)^\gamma} \frac{1}{(j-k)^\gamma} \\
& \le 2 \sum_{k=i+1}^{ \lceil \frac{j+i}{2} \rceil}  \frac{1}{(k-i)^\gamma} \frac{2^\gamma}{(j-i)^\gamma}  \le \frac{2^{\gamma+1} \zeta(\gamma)}{(j-i)^\gamma}
\end{split}\end{equation}
Summing up the three bounds yields the result
\end{proof}

Let us recall the constant $r^+$ defined in \eqref{def:r+}
\[ r^+ = \frac{\delta + \sqrt{\delta^2 + 8 \zeta(2\beta)}}{2}. \]

\begin{lemma}\label{lem:the previous lemma} With the notations of Proposition~\ref{prop:matrix_inverse}, define from $A$ two matrices $D$ and $B$ as follows:
\begin{equation}
D:= \mathrm{diag}(A_{ii})_{i=1}^N,\qquad\text{and}\qquad B:= I_N - D^{-1}A.\end{equation}
Then for all $k \in \m N$:
\begin{equation}
\forall i, j = 1, \dots, N, \qquad (B^k)_{ii} \le 2 \zeta (2\gamma) \bigg(\frac{c}{d}\bigg)^2 r_+^{k-2},\qquad\text{and}\qquad(B^k)_{ij} \le \frac{c\, r_+^{k-1}}{d |i-j|^\gamma},\quad\text{if}\quad i \ne j.\end{equation}
\end{lemma}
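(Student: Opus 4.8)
The plan is to establish both bounds simultaneously by induction on $k\ge 1$. First I would record the two elementary estimates on the entries of $B$ that follow from its definition: since $D=\mathrm{diag}(A_{ii})_i$, we have $B_{ii}=0$ for every $i$, and for $i\ne j$ one has $B_{ij}=-A_{ij}/A_{ii}$, whence $|B_{ij}|\le \tfrac{c}{d\,|i-j|^\gamma}$ thanks to $|A_{ij}|\le c|i-j|^{-\gamma}$ and $|A_{ii}|\ge d$. The base case $k=1$ is then immediate: $(B^1)_{ii}=0$ is below the (positive) claimed value, and $|(B^1)_{ij}|=|B_{ij}|\le \tfrac{c}{d|i-j|^\gamma}=\tfrac{c\,r_+^{0}}{d|i-j|^\gamma}$.

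For the inductive step, assuming both bounds at rank $k\ge 1$, I would write $B^{k+1}=B^kB$, so that $(B^{k+1})_{ij}=\sum_{m=1}^N (B^k)_{im}B_{mj}$. For the diagonal term the index $m=i$ disappears because $B_{ii}=0$; applying the inductive off-diagonal bound to $(B^k)_{im}$ (for $m\ne i$), the bound $|B_{mi}|\le \tfrac{c}{d|i-m|^\gamma}$, and the elementary estimate $\sum_{m\ne i}|i-m|^{-2\gamma}\le 2\zeta(2\gamma)$ (each value of $|i-m|$ being attained at most twice), gives
\[ |(B^{k+1})_{ii}|\le \frac{c^2 r_+^{k-1}}{d^2}\sum_{m\ne i}\frac{1}{|i-m|^{2\gamma}}\le 2\zeta(2\gamma)\Bigl(\frac{c}{d}\Bigr)^2 r_+^{k-1}, \]
which is exactly the claimed bound at rank $k+1$. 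For an off-diagonal entry $(B^{k+1})_{ij}$, I would split the sum over $m$ into $m=i$, $m=j$, and $m\notin\{i,j\}$: the term $m=j$ vanishes since $B_{jj}=0$; the term $m=i$ equals $(B^k)_{ii}B_{ij}$, bounded via the inductive diagonal estimate by $2\zeta(2\gamma)(c/d)^2 r_+^{k-2}\cdot\tfrac{c}{d|i-j|^\gamma}$; and the remaining part is handled by the inductive off-diagonal bound together with Lemma~\ref{lem:sum shifted inverse}:
\[ \Bigl|\sum_{m\notin\{i,j\}}(B^k)_{im}B_{mj}\Bigr|\le \frac{c^2 r_+^{k-1}}{d^2}\sum_{m\notin\{i,j\}}\frac{1}{|i-m|^\gamma|m-j|^\gamma}\le \frac{c^2 r_+^{k-1}}{d^2}\,\frac{\delta}{|i-j|^\gamma}. \]
Collecting the three contributions yields $|(B^{k+1})_{ij}|\le \tfrac{c}{d|i-j|^\gamma}\bigl(2\zeta(2\gamma)(c/d)^2 r_+^{k-2}+\delta\tfrac{c}{d}r_+^{k-1}\bigr)$, so it only remains to check that the bracket is at most $r_+^{k}$.

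The crux of the argument is precisely this last check: after dividing by $r_+^{k-2}>0$ it amounts to $r_+^2-\delta\tfrac{c}{d}r_+-2\zeta(2\gamma)(c/d)^2\ge 0$, and by the very definition \eqref{def:r+} of $r_+$ this quantity is zero, since $r_+$ is the positive root of $x^2-\delta\tfrac{c}{d}x-2\zeta(2\gamma)(c/d)^2=0$. Thus the recursion closes \emph{with equality}, which is exactly why the exponents $r_+^{k-2}$ (diagonal) and $r_+^{k-1}$ (off-diagonal) appear in the statement. I expect the only subtle point to be this bookkeeping of the powers of $r_+$ and the observation that the quadratic defining $r_+$ makes the induction self-sustaining; the rest — the entrywise control of $B$, the bound $\sum_{m\ne i}|i-m|^{-2\gamma}\le 2\zeta(2\gamma)$, and Lemma~\ref{lem:sum shifted inverse} — is routine.
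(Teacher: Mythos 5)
Your proof is correct and follows essentially the same route as the paper's: induction on $k$ using $B_{ii}=0$, the entrywise bound $|B_{ij}|\le c/(d|i-j|^\gamma)$, Lemma~\ref{lem:sum shifted inverse} for the off-diagonal convolution sum, the bound $\sum_{m\ne i}|i-m|^{-2\gamma}\le 2\zeta(2\gamma)$ for the diagonal, and the quadratic identity $r_+^2=\tfrac{c\delta}{d}r_++2\zeta(2\gamma)(c/d)^2$ to close the recursion. The only (cosmetic) difference is that you carry the closed-form bounds through the induction directly, whereas the paper first derives the two-term recursion for auxiliary constants $c_k,c_k'$ and then solves it using the same quadratic.
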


\begin{proof}
Notice first that $B_{ii}=0$ and
\begin{equation} |B_{ij}| \le  \frac{c}{d |i-j|^\gamma} \quad \text{for } i \ne j. \end{equation}
We show by induction on $k \ge 1$ that
\begin{equation} |(B^k)_{ij}| \le \frac{c_k}{|i-j|^\gamma} \quad \text{for } i \ne j, \end{equation}
and
\begin{equation}
    |(B^k)_{ii}| \leq c_k',
\end{equation}
where $c_k$ is defined by $c_1 = \frac{c}{d}$ and $c_{k+1} =  \frac{c\delta}{d} c_k + \frac{c}{d} c_k'$ and $c_k'$ is defined by $c_1' = 0$ and $c_{k+1}' = 2 \frac{c}{d} \zeta (2 \gamma) c_k$.
For $k=1$, the result is clear.
Assume it holds at rank $k \ge 1$. Then using $B_{ii} =0$, we get
\begin{equation}
    (B^{k+1})_{ij} = (B \cdot B^{k})_{ij} = \sum_{\ell =1}^n B_{i \ell} (B^k)_{\ell j} = \sum_{\ell \neq i, j} B_{i \ell} (B^k)_{\ell j} + B_{i j} (B^k)_{j j}.
\end{equation}
If $i=j$, then the last term vanishes and we get
\begin{equation}
    |(B^{k+1})_{ii}| \leq \sum_{\ell \ne i}  \frac{c}{d |i-\ell|^\gamma} \frac{c_k}{|\ell-i|^\gamma} \leq c_{k+1}'.
\end{equation}
If $i\neq j$, the previous lemma yields
\begin{equation}\begin{split}
|(B^{k+1})_{ij}| & \le \sum_{\ell \ne i,j}  \frac{c}{d |i-\ell|^\gamma} \frac{c_k}{|\ell-j|^\gamma} + \frac{c}{d |i-j|^\gamma} \, c_k' \\
& \le \left( \frac{c\delta}{d} c_k +  \frac{c}{d} c_k' \right)  \frac{1}{|i-j|^\gamma} = \frac{c_{k+1}}{|i-j|^\gamma}.
\end{split}\end{equation}
This completes the induction.
Now, we get 
\begin{equation}
c_{k+1} =  \frac{c\delta}{d} c_k + 2 \bigg(\frac{c}{d}\bigg)^2 \, \zeta (2 \gamma) \, c_{k-1},
\end{equation}
with the convention $c_0 = 0$. Since we also have $c_1 = \frac{c}{d}$ and with the fact that $r_+ > 0$ is actually defined so that it satisfies 
\begin{equation}
r_+^2 = \frac{c\delta}{d} r_+ + 2 \bigg( \frac{c}{d} \bigg)^2 \zeta (2 \gamma),
\end{equation}
we can then prove by induction that
\begin{equation}
    c_k \leq \frac{c}{d} r_+^{k-1},
\end{equation}
which completes the proof.
\end{proof}


\begin{proof}[Proof of Proposition~\ref{prop:matrix_inverse}]
Let $D = \text{diag}(a_{ii})$ so that $D^{-1} =\text{diag}(a_{ii}^{-1})$ has coefficients smaller that $d^{-1}$ in module and denote $B = I_N - D^{-1}A$ so that $A = D(I_N -B)$. Notice that $B_{ij} = - \frac{A_{ij}}{A_{ii}}$ for $i\ne j$ and $B_{ii} =0$.

We use the $\| \cdot \|_\infty$ norm on $\m C^n$ and denote $\| \cdot \|_\infty$ as well the operator norm on $\q M_n(\m C)$: recall that 
\begin{equation} \| M \|_\infty = \max_i \sum_{j=1}^n |M_{ij}|. \end{equation}
In particular, from the assumption \eqref{def:r+} on $r_+$,
\begin{equation} b: = \| B \|_\infty =\max_i \frac{1}{|A_{ii}|} \sum_{j \ne i} |A_{ij}| \leq 2 \frac{c}{d} \zeta (\gamma) < 1. \end{equation}
Hence we can write
\begin{equation} A^{-1} = (I_N - B)^{-1} D^{-1} = \left( \, \sum_{k=0}^{+\infty} B^k \right) D^{-1}. \end{equation}
From Lemma~\ref{lem:the previous lemma} and the assumption \eqref{def:r+} on $r_+$, 
 we infer that for $i \ne j$,
\begin{equation}
\Bigg| \bigg(  \sum_{k=0}^{+\infty} B^k \bigg)_{ij} \Bigg| \le \frac{c}{d |i-j|^\gamma} \sum_{k=0}^{+\infty} r_+^k \le \frac{c}{d |i-j|^\gamma} \frac{1}{1-r_+},
\end{equation}
and so
\begin{equation} \bigg| (A^{-1})_{ij} \bigg| \le \frac{c}{d^2 |i-j|^\gamma} \frac{1}{1-r_+}.
\end{equation}
On the other hand, if $i=j$,
\begin{equation}
\Bigg| \biggl( \sum_{k=0}^{+\infty} B^k \biggr)_{\! ii} \Bigg| = \Bigg| 1 + \biggl( \sum_{k=2}^{+\infty} B^k \biggr)_{\! ii} \Bigg| \le 1 + \zeta (2\gamma) \biggl(\frac{c}{d}\biggr)^{\! 2} \sum_{k=0}^{+\infty} r_+^k \le 1 + \zeta (2\gamma) \frac{1}{1-r_+} \biggl(\frac{c}{d}\biggr)^{\! 2},
\end{equation}
and therefore
\begin{equation}
    \bigg| (A^{-1})_{ii} \bigg| \leq d^{-1} + \zeta (2\gamma) \frac{1}{1-r_+} \frac{c^2}{d^3}.  
\end{equation}
\end{proof}

\subsection{Technical Results for the proof of Proposition~\ref{prop:no convergence}}\label{append:zeta}

\begin{lemma}\label{lem:widecheck}
    Let $\alpha>\beta>1$, let $\widehat{h}$ and $\overline{h}$ be defined by~\eqref{def:the H gang} and $\widetilde{h}$ by~\eqref{def:h_b}. We have $\overline{h}<\widetilde{h}<\widehat{h}$.
\end{lemma}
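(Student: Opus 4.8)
Since $\alpha-\beta>0$, the map $t\mapsto t^{1/(\alpha-\beta)}$ is strictly increasing on $(0,\infty)$, and the three constants can be written as
\begin{equation}
\overline{h}=\Bigl(\tfrac{\alpha A}{\beta B}\cdot\tfrac{\zeta(\alpha)}{\zeta(\beta)}\Bigr)^{\!\frac{1}{\alpha-\beta}},\qquad
\widetilde{h}=\Bigl(\tfrac{\alpha A}{\beta B}\cdot\tfrac{\zeta(\alpha+1)}{\zeta(\beta+1)}\Bigr)^{\!\frac{1}{\alpha-\beta}},\qquad
\widehat{h}=\Bigl(\tfrac{\alpha A}{\beta B}\cdot 1\Bigr)^{\!\frac{1}{\alpha-\beta}}.
\end{equation}
Hence the claimed chain $\overline{h}<\widetilde{h}<\widehat{h}$ is equivalent to the two numerical inequalities
\begin{equation}
\frac{\zeta(\alpha)}{\zeta(\beta)}\;<\;\frac{\zeta(\alpha+1)}{\zeta(\beta+1)}\;<\;1,
\end{equation}
and the whole proof reduces to establishing these.

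The right-hand inequality is immediate: $\zeta$ is strictly decreasing on $(1,\infty)$ and $\alpha+1>\beta+1>1$, so $\zeta(\alpha+1)<\zeta(\beta+1)$, which gives $\widetilde{h}<\widehat{h}$. For the left-hand inequality I would clear denominators and prove the equivalent statement $\zeta(\alpha+1)\zeta(\beta)>\zeta(\alpha)\zeta(\beta+1)$. Expanding both products as absolutely convergent double Dirichlet series over the pairs $(m,n)$ with $m,n\ge 1$ (legitimate since $\alpha>\beta>1$), one gets
\begin{equation}
D:=\zeta(\alpha+1)\zeta(\beta)-\zeta(\alpha)\zeta(\beta+1)=\sum_{m,n\ge 1}m^{-\alpha}n^{-\beta}\bigl(m^{-1}-n^{-1}\bigr).
\end{equation}
Averaging this expression with the one obtained by exchanging the roles of $m$ and $n$ yields the symmetric form
\begin{equation}
2D=\sum_{m,n\ge 1}(mn)^{-\beta}\bigl(m^{-(\alpha-\beta)}-n^{-(\alpha-\beta)}\bigr)\bigl(m^{-1}-n^{-1}\bigr).
\end{equation}
Since $x\mapsto x^{-s}$ is strictly decreasing on $(0,\infty)$ for every $s>0$, the two factors $m^{-(\alpha-\beta)}-n^{-(\alpha-\beta)}$ and $m^{-1}-n^{-1}$ always carry the same sign (both nonnegative when $m\le n$, both nonpositive when $m\ge n$, and both zero when $m=n$). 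Every summand is therefore $\ge 0$, and strictly positive as soon as $m\ne n$ (e.g. the term $m=1,n=2$), so $D>0$. This gives $\overline{h}<\widetilde{h}$ and completes the proof.

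\textbf{Main obstacle.} There is no serious difficulty: the only point requiring a little care is the symmetrization/rearrangement of the double series and the sign bookkeeping of the two decreasing-function differences; absolute convergence (from $\alpha>\beta>1$) makes the manipulation rigorous. If one prefers to avoid this Chebyshev-type argument, an alternative for the left inequality is to note that $\zeta(s)=\sum_{n\ge1}e^{-s\log n}$ is a locally uniformly convergent sum of log-affine, hence log-convex, functions, so $\log\zeta$ is convex on $(1,\infty)$; then $s\mapsto\log\zeta(s+1)-\log\zeta(s)=\int_s^{s+1}(\log\zeta)'$ has derivative $(\log\zeta)'(s+1)-(\log\zeta)'(s)>0$, so $s\mapsto\zeta(s+1)/\zeta(s)$ is strictly increasing, and evaluating at $\alpha>\beta$ gives $\zeta(\alpha+1)/\zeta(\alpha)>\zeta(\beta+1)/\zeta(\beta)$, which rearranges to the same inequality.
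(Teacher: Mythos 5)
Your proof is correct, but it takes a genuinely different route from the paper. The paper establishes the key inequality $\zeta(\alpha)/\zeta(\beta)<\zeta(\alpha+1)/\zeta(\beta+1)$ by invoking the Von Mangoldt formula $\zeta'(x)=\zeta(x)\Phi(x)$ with $\Phi(x)=-\sum_n \Lambda(n)n^{-x}$, checking that $\Phi'>0$ so that $\zeta'/\zeta$ is increasing (equivalently, $\log\zeta$ is convex), deducing that $x\mapsto\zeta(x+\lambda)/\zeta(x)$ is increasing for every $\lambda>0$, and obtaining the bound $\widetilde h<\widehat h$ from the limit $\zeta(s)\to1$ as $s\to+\infty$. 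You instead prove the same inequality by expanding $\zeta(\alpha+1)\zeta(\beta)-\zeta(\alpha)\zeta(\beta+1)$ as a double Dirichlet series and symmetrizing it into a Chebyshev-type sum
\begin{equation}
2D=\sum_{m,n\ge 1}(mn)^{-\beta}\bigl(m^{-(\alpha-\beta)}-n^{-(\alpha-\beta)}\bigr)\bigl(m^{-1}-n^{-1}\bigr),
\end{equation}
whose terms are nonnegative because the two differences always share the same sign, with strict positivity at $(m,n)=(1,2)$; and you get $\widetilde h<\widehat h$ simply from the monotonicity of $\zeta$. Your argument is more elementary and self-contained (no number-theoretic input, no differentiation of Dirichlet series), and it isolates the positivity mechanism very cleanly; the paper's argument is slightly more general in that it yields monotonicity of $\zeta(x+\lambda)/\zeta(x)$ for all shifts $\lambda$ at once, which is the structural fact both proofs ultimately rest on (log-convexity of $\zeta$, which you also identify in your closing remark). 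One small caveat on that alternative remark only: to get the strict inequality $(\log\zeta)'(s+1)-(\log\zeta)'(s)>0$ you need strict log-convexity, which does hold here because the summands $n^{-s}$ are pairwise non-proportional, but this deserves a word; your main Chebyshev argument needs no such care.
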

\begin{proof}
    To start with, we recall the famous Von Mangoldt formula for the derivative of the zeta function:
    \begin{equation}\label{Von Mangoldt}
        \zeta'(x)\;=\;\zeta(x)\Phi(x),\qquad\text{with}\qquad\Phi(x):=-\sum_{n=1}^{\infty}\frac{\Lambda(n)}{n^x},
    \end{equation}
    where $\Lambda$ designates the so-called Von Mangoldt function:
    \begin{equation}
        \Lambda(n) = \begin{cases} \ln p & \text{\quad if }n=p^k \text{ for some prime number } p \text{ and some integer } k \ge 1, \\ 0 & \text{\quad otherwise.} \end{cases}
    \end{equation}
  A direct computation yields
  \begin{equation}
        \Phi^{(i)}(x)\;=\;(-1)^{i+1}\sum_{n=1}^{\infty}\frac{\Lambda(n)}{n^x}\big(\log(n)\big)^i.
    \end{equation}
    We note in particular that we have $\Phi'(x)>0$ when $x\in (1,+\infty).$
    Therefore,
    \begin{equation}
        \frac{d}{dx}\frac{\zeta'(x)}{\zeta(x)}\;>\;0.
    \end{equation}
    We integrate this inequality and we deduce that for any $\lambda>0$:
    \begin{equation}
        \frac{\zeta'(x+\lambda)}{\zeta(x+\lambda)}>\frac{\zeta'(x)}{\zeta(x)}.
    \end{equation}
Thus,    
    \begin{equation}
        \frac{d}{dx}\log\bigg(\frac{\zeta(x+\lambda)}{\zeta(x)}\bigg)\;=\;\frac{\zeta'(x+\lambda)}{\zeta(x+\lambda)}-\frac{\zeta'(x)}{\zeta(x)}\;>\;0.
    \end{equation}
    From this monotony property we infer, since $\alpha>\beta$,
    \begin{equation}\frac{\zeta(\alpha)}{\zeta(\beta)}\;<\;\frac{\zeta(\alpha+1)}{\zeta(\beta+1)}. \end{equation}
    We conclude that
    \begin{equation}\overline{h}=\bigg(\frac{\alpha A \zeta(\alpha)}{\beta B\zeta(\beta)}\bigg)^\frac{1}{\alpha-\beta}\;<\;\bigg(\frac{\alpha A \zeta(\alpha+1)}{\beta B\zeta(\beta+1)}\bigg)^\frac{1}{\alpha-\beta}\;=\;\widetilde{h}.\end{equation}
    With the same arguments, since $\zeta(s)\to1$ as $s\to+\infty$, we also have
    \begin{equation}
        \widetilde{h}\;<\;\bigg(\frac{\alpha A}{\beta B}\bigg)^\frac{1}{\alpha-\beta}\;=\;\widehat{h}.  
    \end{equation}
\end{proof}

\begin{lemma} \label{lem:(x+1)zetaxInc}
The function $x\in[3,+\infty)\mapsto (x+1)\,\zeta(x)$ is increasing.
\end{lemma}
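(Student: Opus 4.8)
The plan is to differentiate and reduce everything to one elementary inequality. Writing $f(x) := (x+1)\,\zeta(x)$, one has $f'(x) = \zeta(x) + (x+1)\,\zeta'(x)$ on $(1,+\infty)$. Since $\zeta(x) = \sum_{n \ge 1} n^{-x} \ge 1$ and $\zeta'(x) = -\sum_{n \ge 2}(\ln n)\,n^{-x} < 0$ (the $n=1$ term drops out), it suffices to prove that $(x+1)\sum_{n \ge 2}(\ln n)\,n^{-x} < 1$ for all $x \ge 3$.

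To separate the variable $x$ from the summation, I would use that $n^{-x} \le 2^{\,3-x}\,n^{-3}$ for $n \ge 2$ and $x \ge 3$, so that $(x+1)\sum_{n\ge2}(\ln n)\,n^{-x} \le (x+1)\,2^{\,3-x}\,\bigl(-\zeta'(3)\bigr)$. The prefactor $x \mapsto (x+1)\,2^{\,3-x}$ is decreasing on $[3,+\infty)$ — its derivative equals $2^{\,3-x}\bigl(1-(x+1)\ln 2\bigr)$ and $(x+1)\ln 2 \ge 4\ln 2 > 1$ there — hence it is bounded above by its value $4$ at $x=3$. Everything therefore reduces to the numerical bound $-\zeta'(3) < \tfrac14$.

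To obtain this, I would note that $t \mapsto t^{-3}\ln t$ is decreasing on $[2,+\infty)$ (its derivative is $t^{-4}(1-3\ln t) < 0$ there), so a sum--integral comparison gives $-\zeta'(3) = \sum_{n \ge 2}(\ln n)\,n^{-3} \le \tfrac{\ln 2}{8} + \int_2^{+\infty} t^{-3}\ln t\,dt = \tfrac{\ln 2}{4} + \tfrac1{16} < \tfrac14$. Combining the three estimates yields $(x+1)\sum_{n\ge2}(\ln n)\,n^{-x} \le 4\,\bigl(-\zeta'(3)\bigr) < 1$, hence $f' > 0$ on $[3,+\infty)$ and $f$ is (strictly) increasing there.

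The only slightly delicate point is this last numerical inequality: the margin is modest ($-\zeta'(3)\approx 0.198$, while the integral bound yields $\approx 0.236$, both safely below $0.25$), so a cruder control of the tail would not close the argument. If one wishes to avoid explicit numerics, an alternative is to use the identity $\zeta'(x)/\zeta(x) = -\sum_{n}\Lambda(n)\,n^{-x}$ already invoked in Lemma~\ref{lem:widecheck}, since then $-\zeta'(3)/\zeta(3) = \sum_{p}\tfrac{\ln p}{p^{3}-1}$ is manifestly small and the same reduction applies.
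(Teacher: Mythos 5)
Your proof is correct and follows essentially the same route as the paper's: both reduce the claim to showing that $(x+1)$ times a $-\zeta'$-type sum stays below $1$, both use the monotonicity of $t\mapsto t^{-x}\ln t$ on $[2,+\infty)$ for a sum--integral comparison, and both land on the same numerical margin $\ln 2+\frac{1}{4}\approx 0.94<1$ anchored at $x=3$. The only differences are cosmetic: you bypass the Von Mangoldt formula by using $\zeta(x)\ge 1$ directly, and you reduce to the single point $x=3$ via $n^{-x}\le 2^{3-x}n^{-3}$ rather than computing the integral for general $x$ and checking that the resulting expression times $(x+1)$ is decreasing.
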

\begin{proof}
    We compute, using the Von Mangoldt formula~\eqref{Von Mangoldt}:
    \begin{equation}
        \frac{d}{dx}\Big((x+1)\zeta(x)\Big)\;=\;\zeta(x)+(x+1)\zeta'(x)\;=\;\zeta(x)\big(1+(x+1)\Phi(x)\big).
    \end{equation}
    It is straightforward to check that for $x\geq3$ we have $t\mapsto t^{-x}\ln(t)$  that is a decreasing function for $t\geq 2$. Therefore, using a sum-integral comparison,
    \begin{equation}\label{besoin de cafe}
        \big|\Phi(x)\big|\;\leq\;\frac{\ln 2}{2^x}+\int_2^{+\infty}\frac{\ln t}{t^x}\,dt\;=\;\frac{\ln 2}{2^x}+\frac{(x-1)\ln 2+1}{2^{x-1}(x-1)^2}.
    \end{equation}
    Now, we observe that the function $x\;\longmapsto\;\frac{x+1}{2^x}$ is decreasing on $[1,+\infty)$. On the other hand,
    \begin{equation}
        (x+1)\frac{(x-1)\ln 2+1}{2^{x-1}(x-1)^2}\;=\;\frac{1}{2^{x-1}}\left(\ln 2+\frac{2\ln 2+1}{x-1}+\frac{2}{(x-1)^2}\right), \quad \forall x>1.
    \end{equation}
    As a product of positive decreasing function, the previous expression is decreasing on $(1, \infty)$.
    Therefore, the function
    \begin{equation}
        x\;\longmapsto\;(x+1)\left(\frac{\ln 2}{2^x}+\frac{(x-1)\ln 2+1}{2^{x-1}(x-1)^2}\right),
    \end{equation}
    is decreasing on $(1,+\infty)$. This function is approximately equal to $0.94<1$ when $x=3$.
    Finally, plugging this property back into~\eqref{besoin de cafe} gives
    \begin{equation}
        \forall\,x\geq3,\qquad\big|(x+1)\Phi(x)\big|<1.
    \end{equation}
    Thus, we conclude:
    \begin{equation}
        \forall\,x\geq3,\qquad  \frac{d}{dx}\Big((x+1)\zeta(x)\Big)>0.
    \end{equation}
\end{proof}
We infer as a direct consequence of Lemma~\ref{lem:(x+1)zetaxInc} the following result:
\begin{corollary} \label{lem:xzetaxInc}The function $x\in[3,+\infty)\mapsto x\,\zeta(x)$ is increasing.
\end{corollary}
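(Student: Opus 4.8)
The plan is to read off the monotonicity of $x\mapsto x\,\zeta(x)$ from Lemma~\ref{lem:(x+1)zetaxInc} by a simple factorisation. I would write
\begin{equation}
x\,\zeta(x) \;=\; \frac{x}{x+1}\,\cdot\,(x+1)\,\zeta(x)
\end{equation}
on $[3,+\infty)$, and observe that the first factor $x\mapsto \frac{x}{x+1}=1-\frac{1}{x+1}$ is positive and strictly increasing, while the second factor $x\mapsto (x+1)\,\zeta(x)$ is positive (since $\zeta(x)>0$ for $x>1$) and increasing by Lemma~\ref{lem:(x+1)zetaxInc}. A product of two positive increasing functions is increasing, which gives the corollary.

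Alternatively, I could argue at the level of derivatives: from $x\,\zeta(x)=(x+1)\,\zeta(x)-\zeta(x)$ one differentiates to get
\begin{equation}
\frac{d}{dx}\big(x\,\zeta(x)\big) \;=\; \frac{d}{dx}\big((x+1)\,\zeta(x)\big) \;-\; \zeta'(x).
\end{equation}
Here the first term on the right is positive on $[3,+\infty)$ by Lemma~\ref{lem:(x+1)zetaxInc}, and $-\zeta'(x)>0$ since $\zeta$ is strictly decreasing on $(1,+\infty)$ — which is already recorded through the Von Mangoldt formula~\eqref{Von Mangoldt}, where $\Phi(x)<0$ so that $\zeta'(x)=\zeta(x)\Phi(x)<0$. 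Hence the derivative is positive and the conclusion follows.

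I do not expect any genuine obstacle here: the statement is an immediate corollary of the lemma, and the only auxiliary facts it relies on — positivity of $\zeta$ and of $x/(x+1)$, or the negativity of $\zeta'$ on $(1,+\infty)$ — are elementary and already available in the preceding material. The slightly more delicate point, the monotonicity of $(x+1)\,\zeta(x)$ itself, has been dealt with in Lemma~\ref{lem:(x+1)zetaxInc}, so nothing substantial remains to be done.
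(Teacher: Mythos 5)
Your proposal is correct, and your second argument is essentially the paper's own proof: the paper writes $x\,\zeta(x)=(x+1)\zeta(x)+\left(-\zeta(x)\right)$ and observes that this is a sum of two increasing functions, using Lemma~\ref{lem:(x+1)zetaxInc} for the first summand and the negativity of $\zeta'$ on $(1,+\infty)$ (via the Von Mangoldt formula) for the second. Your first, multiplicative decomposition $x\,\zeta(x)=\frac{x}{x+1}\cdot(x+1)\zeta(x)$ is a harmless variant of the same idea.
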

Indeed, we observe that 
    \begin{equation}
        x\,\zeta(x)\;=\;(x+1)\zeta(x)+\left(-\zeta(x)\right),
    \end{equation}
which is a sum of two increasing functions.
\subsection{Computation of the gradient for the ring}
In this appendix, we give the explicit computations that give Lemma~\ref{lem:grad ring}.
\begin{proof} 
    \textit{Step 1.}
To start with, we compute here the gradient with respect to $x_i$ of the dipolar interaction energy $U^d$.
Invoking the $N$-fold symmetry of the Ring configuration $\mR_r$~\eqref{def:ring}, it is enough to compute $\nabla_{x_0} U^d(\mR_r)$.
In the specific case of the ring, the manipulated quantities write:
\begin{equation}\label{Paris}\begin{split}
&m_j=\begin{pmatrix}
    -\sin\left(\frac{2j\pi}{N}\right)\\
    \cos\left(\frac{2j\pi}{N}\right)\\
    0
\end{pmatrix}\\
    &r_{0j}=x_0-x_j=r\begin{pmatrix}
    1-\cos\left(\frac{2j\pi}{N}\right)\\
    -\sin\left(\frac{2j\pi}{N}\right)\\
    0
\end{pmatrix}=r\begin{pmatrix}
    2\sin^2\left(\frac{j\pi}{N}\right)\\
    -\sin\left(\frac{2j\pi}{N}\right)\\
    0
\end{pmatrix},\\
    &|r_{0j}|=r\sqrt{2-2\cos\Big(\frac{2j\pi}{N}\Big)\,}=2r\Big|\sin\Big(\frac{j\pi}{N}\Big)\Big|,\\
    &m_0\cdot r_{0j}=-r\sin\Big(\frac{2j\pi}{N}\Big),\\
   & m_j\cdot r_{0j}= -r\sin\Big(\frac{2j\pi}{N}\Big)
    ,\\
    &m_0\cdot m_j=\cos\Big(\frac{2j\pi}{N}\Big).
\end{split}
\end{equation}
The gradient of the potential dipolar interaction energy~\eqref{def:U d s} with respect to $x_0$ writes: 
\begin{multline}\label{Vers_ailles}
\nabla_{x_0}U^d=-\sum_{j=1}^{N-1}\frac{B+B_0}{|r_{0j}|^{\beta+2}}\bigg[(m_0\cdot r_{0j}) m_j+(m_j\cdot r_{0j})m_0\\
+\frac{\beta B_0}{B+B_0}(m_0\cdot m_j)r_{0j}-(\beta+2)\frac{(m_0\cdot r_{0j})(m_j\cdot r_{0j})}{|r_{0j}|^2}r_{0j}\bigg].
\end{multline}
If we now inject the quantities calculated at~\eqref{Paris} in~\eqref{Vers_ailles}, we obtain the following expression for the gradient of $U^d$ at the ring $\mR_r$:

\begin{equation}\label{Versailles}\begin{split}
\nabla_{x_0}&U^d(\mR_r)=\frac{B+B_0}{2^{\beta+2}\,r^{\beta+1}}\sum_{j=1}^{N-1}\frac{1}{\big|\sin\big(\frac{j\pi}{N}\big)\big|^{\beta+2}}\Bigg[\sin\Big(\frac{2j\pi}{N}\Big)\begin{pmatrix}
    -\sin\left(\frac{2j\pi}{N}\right)\\
    \cos\left(\frac{2j\pi}{N}\right)\\
    0
\end{pmatrix}+\sin\Big(\frac{2j\pi}{N}\Big)\begin{pmatrix}
    0\\
    1\\
    0
\end{pmatrix}\\&-\frac{\beta B_0}{B+B_0}\cos\Big(\frac{2j\pi}{N}\Big)\begin{pmatrix}
    1-\cos\left(\frac{2j\pi}{N}\right)\\
    -\sin\left(\frac{2j\pi}{N}\right)\\
    0
\end{pmatrix}+(\beta+2)\frac{\sin^2\left(\frac{2j\pi}{N}\right)}{2-2\cos\left(\frac{2j\pi}{N}\right)}\begin{pmatrix}
    1-\cos\left(\frac{2j\pi}{N}\right)\\
    -\sin\left(\frac{2j\pi}{N}\right)\\
    0
\end{pmatrix}\Bigg].
    \end{split}
\end{equation}
We now denote by $A_j$ the term in the summation in~\eqref{Versailles} and we gather the terms $A_j$ and $A_{N-j}$ to exploit trigonometric properties. We observe that:
\begin{equation}
\begin{split}
    &\qquad A_j+A_{N-j}\\&=\frac{1}{\big|\sin\big(\frac{j\pi}{N}\big)\big|^{\beta+2}}
    \Bigg[\sin\Big(\frac{2j\pi}{N}\Big)\begin{pmatrix}
    -\sin\left(\frac{2j\pi}{N}\right)\\
    \cos \left(\frac{2j\pi}{N}\right)\\
    0
\end{pmatrix}+\sin\Big(\frac{2(N-j)\pi}{N}\Big)\begin{pmatrix}
    -\sin\left(\frac{2(N-j)\pi}{N}\right)\\
    \cos \left(\frac{2(N-j)\pi}{N}\right)\\
    0
\end{pmatrix}
\\&\qquad+\sin\Big(\frac{2j\pi}{N}\Big)\begin{pmatrix}
    0\\
    1\\
    0
\end{pmatrix}+\sin\Big(\frac{2(N-j)\pi}{N}\Big)\begin{pmatrix}
    0\\
    1\\
    0
\end{pmatrix} -\frac{\beta B_0}{B+B_0}\cos\Big(\frac{2j\pi}{N}\Big)\begin{pmatrix}
    1-\cos\left(\frac{2j\pi}{N}\right)\\
    -\sin\left(\frac{2j\pi}{N}\right)\\
    0
\end{pmatrix} \\
& \qquad -\frac{\beta B_0}{B+B_0}\cos\Big(\frac{2(N-j)\pi}{N}\Big)\begin{pmatrix}
    1-\cos\left(\frac{2(N-j)\pi}{N}\right)\\
    -\sin\left(\frac{2(N-j)\pi}{N}\right)\\
    0
\end{pmatrix}
+(\beta+2)\frac{\sin^2\left(\frac{2j\pi}{N}\right)}{2-2\cos\left(\frac{2j\pi}{N}\right)}\begin{pmatrix}
    1-\cos\left(\frac{2j\pi}{N}\right)\\
    -\sin\left(\frac{2j\pi}{N}\right)\\
    0
\end{pmatrix} \\
& \qquad +(\beta+2)\frac{\sin^2\left(\frac{2(N-j)\pi}{N}\right)}{2-2\cos\left(\frac{2(N-j)\pi}{N}\right)}\begin{pmatrix}
    1-\cos\left(\frac{2(N-j)\pi}{N}\right)\\
    -\sin\left(\frac{2(N-j)\pi}{N}\right)\\
    0
\end{pmatrix}\Bigg]
\end{split}
\end{equation}
One can check in the equation above that we gathered on each line the pairs of term in $A_j$ and $A_{N-j}$ that ``match together'' in the sense that we have cancellations and simplifications using $\cos(2(N-j)\pi/N)=\cos(2j\pi/N)$ and $\sin(2(N-j)\pi/N)=-\sin(2j\pi/N)$. This leads to: 
\begin{equation}
\begin{split}
A_j+A_{N-j}
   & =\frac{2}{\big|\sin\big(\frac{j\pi}{N}\big)\big|^{\beta+2}}
    \Bigg[-\sin^2\Big(\frac{2j\pi}{N}\Big) -\frac{\beta B_0}{B+B_0}\cos\Big(\frac{2j\pi}{N}\Big)\bigg(1-\cos\left(\frac{2j\pi}{N}\right)\bigg) \\
    & \qquad \qquad \qquad \qquad \qquad \qquad \qquad \qquad  \qquad \qquad \qquad \qquad
+\frac{\beta+2}{2}\sin^2\left(\frac{2j\pi}{N}\right)\Bigg]
\begin{pmatrix}
    1\\
    0\\
    0
\end{pmatrix} \\ 
& =\frac{2}{\big|\sin\big(\frac{j\pi}{N}\big)\big|^{\beta+2}}
    \bigg[\frac{\beta}{2}\sin^2\Big(\frac{2j\pi}{N}\Big)+\frac{\beta B_0}{B+B_0}\cos^2\Big(\frac{2j\pi}{N}\Big)-\frac{\beta B_0}{B+B_0}\cos\Big(\frac{2j\pi}{N}\Big)\bigg]\begin{pmatrix}
    1\\
    0\\
    0
\end{pmatrix}
\end{split}
\end{equation}
With $\cos^2(a)=1-\sin^2(a)$ we get:
\begin{equation}\begin{split}
A_j+A_{N-j}&=\frac{2}{\big|\sin\big(\frac{j\pi}{N}\big)\big|^{\beta+2}}
    \bigg[\bigg(\frac{\beta}{2}-\frac{\beta B_0}{B+B_0}\bigg)\sin^2\Big(\frac{2j\pi}{N}\Big)+\frac{\beta B_0}{B+B_0}\bigg(1-\cos\Big(\frac{2j\pi}{N}\Big)\bigg)\bigg]\begin{pmatrix}
    1\\
    0\\
    0
\end{pmatrix}
\end{split}
\end{equation}
We now use the formulas $1-\cos(2a)=2\,\sin^2(a)$ and $\sin(2a)=2\,\sin(a)\,\cos(a)$, and obtain
\begin{equation}\label{Bobigny} 
\begin{split}
&\qquad A_j+A_{N-j} =\frac{2}{\big|\sin\big(\frac{j\pi}{N}\big)\big|^{\beta+2}}
    \bigg[4\bigg(\frac{\beta}{2}-\frac{\beta B_0}{B+B_0}\bigg)\sin^2\left(\frac{j\pi}{N}\right)\cos^2\left(\frac{j\pi}{N}\right) \\
    & \qquad \qquad \qquad +2\frac{\beta B_0}{B+B_0}\sin^2\left(\frac{j\pi}{N}\right)\bigg]\begin{pmatrix}
    1\\
    0\\
    0
\end{pmatrix}\\
&=\frac{4\beta}{\big|\sin\big(\frac{j\pi}{N}\big)\big|^{\beta}}
    \bigg[\bigg(1-\frac{2B_0}{B+B_0}\bigg)\cos^2\left(\frac{j\pi}{N}\right)+\frac{ B_0}{B+B_0}\bigg]\begin{pmatrix}
    1\\
    0\\
    0
\end{pmatrix}
\end{split}
\end{equation} Plugging~\eqref{Bobigny} back into~\eqref{Versailles} eventually gives:
\begin{equation}
    \nabla_{x_0}U^d=\frac{\beta}{2^{\beta+1}\,r^{\beta+1}}\sum_{j=1}^{N-1}\frac{1}{\big|\sin\big(\frac{j\pi}{N}\big)\big|^\beta}
    \bigg[(B-B_0)\cos^2\left(\frac{j\pi}{N}\right)+B_0\bigg]\frac{x_0}{|x_0|}
\end{equation}

\textit{Step 2.} 
One now focuses on the computation of $\nabla_{x_i} U^s$. Similarly, it is enough to compute for $i=0$, due to the $n$-fold symmetry:
\begin{equation}
    \nabla_{x_0} U^s=-\sum_{j=1}^{n-1}\frac{\alpha A}{r^{\alpha+1}\Big|2\sin\left(\frac{j\pi}{N}\right)\Big|^{\alpha+2}}\begin{pmatrix}
    1-\cos\left(\frac{2j\pi}{N}\right)\\
    -\sin\left(\frac{2j\pi}{N}\right)\\
    0
\end{pmatrix}
\end{equation}
Using a symmetrization similar as before and $1-\cos(2a)=2\,\sin^2(a)$, one obtains
\begin{equation}\begin{split}
    \nabla_{x_0} U^s&=-\frac{1}{2}\sum_{j=1}^{N-1}\frac{\alpha A}{r^{\alpha+1}\Big|2\sin\left(\frac{j\pi}{N}\right)\Big|^{\alpha+2}}\Bigg(\begin{pmatrix}
    1-\cos\left(\frac{2j\pi}{N}\right)\\
    -\sin\left(\frac{2j\pi}{N}\right)\\
    0
\end{pmatrix}+\begin{pmatrix}
    1-\cos\left(\frac{2(n-j)\pi}{N}\right)\\
    -\sin\left(\frac{2(n-j)\pi}{N}\right)\\
    0
\end{pmatrix}\Bigg)\\
&=-\sum_{j=1}^{N-1}\frac{\alpha A}{r^{\alpha+1}\Big|2\sin\left(\frac{j\pi}{N}\right)\Big|^{\alpha+2}}\begin{pmatrix}
    1-\cos\left(\frac{2j\pi}{N}\right)\\
    0\\
    0
\end{pmatrix}=\frac{\alpha A}{2^{\alpha + 1} r^{\alpha+1}}\sum_{j=1}^{N-1}\frac{1}{\Big|\sin\left(\frac{j\pi}{N}\right)\Big|^{\alpha}}\begin{pmatrix}
    -1\\
    0\\
    0
\end{pmatrix}
\end{split}
\end{equation}

\textit{Step 3.} We now compute the gradient of $U^d$ with respect to the magnetization $m_i$. We perform the calculations without the constraint $|m_i|=1$.
As before, it is enough to study the gradient with respect to $m_0$ and deduce the other gradients by symmetry.
We write using~\eqref{eq:interaction potential}:
\begin{equation}\begin{split}
    &\qquad\nabla_{m_0}U^d=\sum_{j=1}^{N-1}\bigg(B_0\frac{m_j}{|r_{0j}|^\beta}-(B+B_0)\frac{(m_j\cdot r_{0j})r_{0j}}{|r_{0j}|^{\beta+2}}\bigg)\\
    &=\sum_{j=1}^{N-1}\Bigg(\frac{B_0}{2^\beta r^\beta\big|\sin\Big(\frac{j\pi}{N}\Big)\big|^\beta}\begin{pmatrix}
    -\sin\left(\frac{2j\pi}{N}\right)\\
    \cos\left(\frac{2j\pi}{N}\right)\\
    0
\end{pmatrix}+\frac{(B+B_0)\sin\Big(\frac{2j\pi}{N}\Big)}{2^{\beta+2}r^\beta\big|\sin\Big(\frac{j\pi}{N}\Big)\big|^{\beta+2}}\begin{pmatrix}
    1-\cos\left(\frac{2j\pi}{N}\right)\\
    -\sin\left(\frac{2j\pi}{N}\right)\\
    0
\end{pmatrix}\Bigg)\end{split}
\end{equation}
Similarly as before, we symmetrize the sum to gather the terms with index $j$ and $N-j$:
\begin{equation}\begin{split}
    \nabla_{m_0}U^d
    &=\frac{1}{2}\sum_{j=1}^{N-1}\Bigg(\frac{B_0}{2^{\beta}r^\beta\big|\sin\Big(\frac{j\pi}{N}\Big)\big|^\beta}\begin{pmatrix}
   0\\
    2\cos\left(\frac{2j\pi}{N}\right)\\
    0
\end{pmatrix}+\frac{(B+B_0)\sin\Big(\frac{2j\pi}{N}\Big)}{2^{\beta+2}r^\beta\big|\sin\Big(\frac{j\pi}{N}\Big)\big|^{\beta+2}}\begin{pmatrix}
    0\\
    -2\sin\left(\frac{2j\pi}{N}\right)\\
    0
\end{pmatrix}\Bigg)\\
&=\sum_{j=1}^{N-1}\Bigg(\frac{B_0\cos\left(\frac{2j\pi}{N}\right)}{2^\beta r^\beta\big|\sin\Big(\frac{j\pi}{N}\Big)\big|^\beta}-\frac{(B+B_0)\sin^2\Big(\frac{2j\pi}{N}\Big)}{2^{\beta+2}r^\beta\big|\sin\Big(\frac{j\pi}{N}\Big)\big|^{\beta+2}}\Bigg)\begin{pmatrix}
    0\\
    1\\
    0
\end{pmatrix}
\end{split}
\end{equation}
Using again the identities $1-\cos(2a)=2\,\sin^2(a)$ and $\sin(2a)=2\,\sin(a)\,\cos(a)$ gives:
\begin{equation}\begin{split}
\nabla_{m_0}U^d
    &=\sum_{j=1}^{N-1}\Bigg(\frac{B_0\Big(1-2\sin^2\Big(\frac{j\pi}{N}\Big)\Big)}{2^\beta r^\beta\big|\sin\Big(\frac{j\pi}{N}\Big)\big|^\beta}-\frac{4(B+B_0)\sin^2\Big(\frac{j\pi}{N}\Big)\cos^2\Big(\frac{j\pi}{N}\Big)}{2^{\beta+2}r^\beta\big|\sin\Big(\frac{j\pi}{N}\Big)\big|^{\beta+2}}\Bigg)\begin{pmatrix}
    0\\
    1\\
    0
\end{pmatrix}\\
&=\sum_{j=1}^{N-1}\Bigg(\frac{B_0-(B+B_0)\cos^2\Big(\frac{j\pi}{N}\Big)}{2^\beta r^\beta\big|\sin\Big(\frac{j\pi}{N}\Big)\big|^\beta}-\frac{2B_0}{2^{\beta} r^\beta\big|\sin\Big(\frac{j\pi}{N}\Big)\big|^{\beta-2}}\Bigg)\begin{pmatrix}
    0\\
    1\\
    0
\end{pmatrix}
\end{split}
\end{equation}
Using now $\cos^2(a)=1-\sin^2(a)$:
\begin{equation}
\nabla_{m_0}U^d
    =\frac{1}{2^\beta r^\beta}\sum_{j=1}^{N-1}\Bigg(\frac{B-B_0}{\big|\sin\Big(\frac{j\pi}{N}\Big)\big|^{\beta-2}}-\frac{B}{\big|\sin\Big(\frac{j\pi}{N}\Big)\big|^\beta}\Bigg)\frac{m_i}{|m_i|}.  
\end{equation}
\end{proof}

\bibliographystyle{plain}
\bibliography{bibliography}

\end{document}